\newcommand{\bbC}{\mathbb{C}}
\newcommand{\bbF}{\mathbb{F}}
\newcommand{\bbH}{\mathbb{H}}
\newcommand{\bbR}{\mathbb{R}}
\newcommand{\bfA}{\boldsymbol{A}}
\newcommand{\bfB}{\boldsymbol{B}}
\newcommand{\bfC}{\boldsymbol{C}}
\newcommand{\bfD}{\boldsymbol{D}}
\newcommand{\bfE}{\boldsymbol{E}}
\newcommand{\bfG}{\boldsymbol{G}}
\newcommand{\bfP}{\boldsymbol{P}}
\newcommand{\bfU}{\boldsymbol{U}}
\newcommand{\bfV}{\boldsymbol{V}}
\newcommand{\bfx}{\boldsymbol{x}}
\newcommand{\bfy}{\boldsymbol{y}}
\newcommand{\bfZ}{\boldsymbol{Z}}
\newcommand{\bfone}{\boldsymbol{1}}
\newcommand{\bfzero}{\boldsymbol{0}}
\newcommand{\bfdelta}{\boldsymbol{\delta}}
\newcommand{\bfPi}{\boldsymbol{\varPi}}
\newcommand{\bfupsilon}{\boldsymbol{\upsilon}}
\newcommand{\bfUpsilon}{\boldsymbol{\varUpsilon}}
\newcommand{\bfphi}{\boldsymbol{\varphi}}
\newcommand{\bfPhi}{\boldsymbol{\varPhi}}
\newcommand{\bfPsi}{\boldsymbol{\varPsi}}
\newcommand{\bfOmega}{\boldsymbol{\varOmega}}
\newcommand{\brI}{\mathbf{I}}
\newcommand{\brJ}{\mathbf{J}}
\newcommand{\brM}{\mathbf{M}}
\newcommand{\brR}{\mathbf{R}}
\newcommand{\brT}{\mathbf{T}}
\newcommand{\calE}{\mathcal{E}}
\newcommand{\calM}{\mathcal{M}}
\newcommand{\calN}{\mathcal{N}}
\newcommand{\calS}{\mathcal{S}}
\newcommand{\calU}{\mathcal{U}}
\newcommand{\rmA}{\mathrm{A}}
\newcommand{\rmi}{\mathrm{i}}
\newcommand{\rmj}{\mathrm{j}}
\newcommand{\rmk}{\mathrm{k}}
\newcommand{\rms}{\mathrm{s}}
\newcommand{\rmS}{\mathrm{S}}
\newcommand{\im}{\operatorname{im}}
\newcommand{\op}{{\operatorname{op}}}
\newcommand{\SA}{\operatorname{SA}}
\newcommand{\Tr}{\operatorname{Tr}}
\newcommand{\Fro}{{\operatorname{Fro}}}
\newcommand{\dist}{\operatorname{dist}}
\newcommand{\rank}{\operatorname{rank}}
\newcommand{\real}{\operatorname{Re}}
\newcommand{\Span}{\operatorname{span}}
\newcommand{\EITFF}{\operatorname{EITFF}}
\newcommand{\bigparen}[1]{\bigl({#1}\bigr)}
\newcommand{\bigbracket}[1]{\bigl[{#1}\bigr]}
\newcommand{\set}[1]{\{{#1}\}}
\newcommand{\norm}[1]{\|{#1}\|}
\newcommand{\ip}[2]{\langle{#1},{#2}\rangle}
\newtheorem{theorem}{Theorem}[section]
\newtheorem{lemma}[theorem]{Lemma}
\theoremstyle{definition}
\newtheorem{remark}[theorem]{Remark}
\newtheorem{example}[theorem]{Example}
\begin{document}
\title{Radon--Hurwitz Grassmannian codes}

\author{Matthew~Fickus,~\IEEEmembership{Senior Member,~IEEE}, Enrique Gomez-Leos, Joseph W.\ Iverson%
\thanks{M.~Fickus is with the Department of Mathematics and Statistics, Air Force Institute of Technology, Wright-Patterson Air Force Base, OH 45433, USA, e-mail: Matthew.Fickus@afit.edu, Matthew.Fickus@us.af.mil.}%
\thanks{E.~Gomez-Leos and J.~W.~Iverson are with the Department of Mathematics, Iowa State University, Ames, IA 50011, USA.}}

\maketitle


\begin{abstract}
Every equi-isoclinic tight fusion frame (EITFF) is a type of optimal code in a Grassmannian,
consisting of subspaces of a finite-dimensional Hilbert space for which the smallest principal angle between any pair of them is as large as possible.
EITFFs yield dictionaries with minimal block coherence and so are ideal for certain types of compressed sensing.
By refining classical work of Lemmens and Seidel based on Radon--Hurwitz theory,
we fully characterize EITFFs in the special case where the dimension of the subspaces is exactly one-half of that of the ambient space.
We moreover show that each such ``Radon--Hurwitz EITFF" is highly symmetric, where every even permutation is an automorphism.
\end{abstract}

\begin{IEEEkeywords}
equi-isoclinic, tight, fusion frame
\end{IEEEkeywords}

\section{Introduction}
\renewcommand{\thesection}{\arabic{section}}

Let $\bbF$ be either $\bbR$ or $\bbC$,
and let $d\geq r\geq 1$ and $n\geq 2$ be integers.
For each index $i\in[n]:=\set{1,2,\dotsc,n}$,
let $\calU_i$ be a subspace of $\bbF^d$ of dimension $r$,
and let $\bfPhi_i$ be a corresponding \textit{isometry}, that is,
a $d\times r$ matrix whose columns form an orthonormal basis for $\calU_i$.
The \textit{block coherence} of $(\bfPhi_i)_{i=1}^n$ is
\begin{equation}
\label{eq.block coherence}
\smash{\mu
:=\max_{i\neq j}\norm{\bfPhi_{i}^*\bfPhi_{j}^{}}_{\op}.}
\end{equation}
Isometries $(\bfPhi_i)_{i=1}^n$ with small block coherence are useful for certain types of compressed sensing.
For example,
a sequence $(\bfx_i)_{i=1}^n$ of $n$ vectors in $\bbF^r$ can be uniquely and efficiently recovered from \smash{$\bfy=\sum_{i=1}^n\bfPhi_i\bfx_i\in\bbF^d$} using \textit{block orthogonal matching pursuit}~\cite{Tropp04,EldarKB10,CalderbankTX15}, provided
\begin{equation*}
\smash{\#\set{i\in[n]: \bfx_i\neq\bfzero}<\tfrac12(\tfrac1{\mu}+1).}
\end{equation*}
That said, it is well known~\cite{LemmensS73b,Welch74,ConwayHS96,StrohmerH03,DhillonHST08,CalderbankTX15} that the block coherence of any such sequence of isometries is bounded below by the \textit{(spectral) Welch bound}:
\begin{equation}
\label{eq.Welch bound}
\mu^2\geq\tfrac{nr-d}{d(n-1)}.
\end{equation}
Moreover,
as detailed in the next section,
equality holds in~\eqref{eq.Welch bound} if and only if $(\calU_i)_{i=1}^{n}$ is an \textit{equi-isoclinic tight fusion frame} for $\bbF^d$,
denoted here as an $\EITFF_\bbF(d,r,n)$.
In particular, every such EITFF is an $n$-element optimal code in the Grassmannian (manifold) that consists of all $r$-dimensional subspaces of $\bbF^d$,
with respect to the \textit{spectral distance} of~\cite{DhillonHST08}:
achieving equality in~\eqref{eq.Welch bound} guarantees that the smallest \textit{principal angle} between any pair of its subspaces is as large as possible.

These facts naturally lead to an existence problem:
for what such $\bbF$, $d$, $r$ and $n$ does an $\EITFF_\bbF(d,r,n)$ exist?
This problem remains mostly open despite decades of incremental progress since the seminal work of Lemmens and Seidel~\cite{LemmensS73b}.
In particular, EITFFs with $r=1$ equate to \textit{equiangular tight frames} (ETFs),
which remain poorly understood, particularly in the complex setting;
see~\cite{FickusM16} for a survey of them.
Famously, (a weak version of) \textit{Zauner's conjecture} asserts that an $\EITFF_\bbC(d,1,d^2)$ exists for all positive integers $d$~\cite{Zauner99},
and it remains unresolved~\cite{FuchsHS17}.
With the exception of those that arise from the quaternionic ETFs of~\cite{CohnKM16} via Hoggar's method~\cite{Hoggar77},
all known EITFFs have been obtained via explicit construction.

In~\cite{LemmensS73b}, Lemmens and Seidel extended work by Wong~\cite{Wong61} and Wolf~\cite{Wolf63} to fully characterize the existence of $\EITFF_\bbF(d,r,n)$ with $d=2r$ and $\bbF=\bbR$.
However, their classical proof techniques are nonideal from the perspective of modern compressed sensing applications as they do not immediately yield explicit isometries $(\bfPhi_i)_{i=1}^{n}$ for such $\EITFF$s.
In this paper,
we remedy this issue,
and then explore some of the additional benefits of doing so.
Specifically, in Theorem~\ref{thm.RH EITFF isometries},
we provide a new explicit characterization of the isometries of every $\EITFF_\bbF(d,r,n)$ with $d=2r$,
up to a standard notion of equivalence.
In Theorem~\ref{thm.RH EITFF existence},
we then exploit this new characterization to both recover the aforementioned result from~\cite{LemmensS73b} and moreover generalize it to the complex setting,
showing that an $\EITFF_\bbF(2r,r,n)$ exists if and only if $n\leq\rho_\bbF(r)+2$ where $\rho_\bbF(r)$ is the corresponding classical \textit{Radon--Hurwitz number}~\cite{Radon22,Hurwitz23,Adams62,AdamsLP65},
namely
\begin{equation}
\label{eq.RH number}
\rho_\bbF(r)=\left\{\begin{array}{ll}
8b+2^c, &\ \bbF=\bbR,\\
8b+2c+2,&\ \bbF=\bbC,
\end{array}\right.
\end{equation}
where $r=(2a+1)2^{4b+c}$ for some nonnegative integers $a,b,c$ with $c\leq3$.
Results of this type help gauge the potential benefits to sensing efficiency afforded by permitting complex measurements,
as is commonly done in certain applications of compressed sensing such as radar~\cite{HermanS09}.
Here in particular, we see that doing so offers a marginal benefit when $c\neq 3$,
permitting complex such EITFFs with either one or two more subspaces than their real cousins when either $c=0$ or $c\in\set{1,2}$, respectively.
See Section~III for details.

In Section~IV,
we further exploit Theorem~\ref{thm.RH EITFF isometries} to show that any $\EITFF_\bbF(d,r,n)$ with $d=2r$ has alternating or total symmetry (Theorem~\ref{thm.total symmetry suf}) and, in fact, that the latter occurs for many such EITFFs (Theorems~\ref{thm.total symmetry complex} and~\ref{thm.total symmetry real}).
This yields the second-ever construction  of an infinite family of nontrivial EITFFs with total symmetry,
following that of~\cite{FickusIJM22,FickusIJM24}.
Results such as these in which ``optimality implies symmetry" serve as partial converses to some ``symmetry implies optimality" results from the recent literature~\cite{IversonM22,IversonM24};
see~\cite{CoxKMP20} for some other instances of the former.
This work is in part motivated by a ``holy grail" of compressed sensing,
namely the open problem of finding deterministic constructions of (block) sensing matrices which certifiably satisfy the \textit{(block) restricted isometry property}~\cite{EldarM09} to a degree that rivals or exceeds those whose existence is guaranteed by random matrix theory~\cite{BandeiraFMW13}.
Essentially, the goal here is to design $(\calU_i)_{i=1}^{n}$ so that any $k$ of these subspaces are ``almost orthogonal," where the integer $k\geq 2$ is as large as reasonably possible.
Since there are $\binom{n}{k}$ such subcollections,
it is hard to certify that a general $(\calU_i)_{i=1}^{n}$ has this property~\cite{BandeiraDMS13}.
Moreover, though any such $(\calU_i)_{i=1}^{n}$ necessarily has small block coherence, the latter alone is not sufficient~\cite{BandeiraFMW13,FickusJKM18}.
That said, when $(\calU_i)_{i=1}^{n}$ is an EITFF with alternating or total symmetry,
its block coherence is minimal, and for $k \leq n-2$ any two $k$-member subcollections of it are unitarily equivalent, and thus also ``almost orthogonal" to an equal degree.

To be clear, the practical implications of these ideas are limited when applied to the $\EITFF_\bbF(d,r,n)$ that we consider here since they have $d=2r$,
implying that any three of their subspaces are linearly dependent.
Nevertheless,
they inform future discussion of EITFFs in general in regards to the form of their isometries,
and whether they might be real or have total symmetry.
Our formal treatment begins in the next section,
where we discuss some relevant background material.

\renewcommand{\thesection}{\Roman{section}}
\section{Preliminaries}
\renewcommand{\thesection}{\arabic{section}}

In this paper, a \textit{space} is a finite-dimensional Hilbert space over $\bbF$ whose inner product $\ip{\cdot}{\cdot}$ is conjugate-linear in its first argument and is linear in its second argument.
For example,
let $\calN$ be a finite nonempty set,
and let $\bbF^\calN$ be the standard vector space of functions from $\calN$ into $\bbF$ where $\ip{\bfx}{\bfy}:=\sum_{i\in\calN}\overline{\bfx(i)}\bfy(i)$.
Let $(\bfdelta_i)_{i\in\calN}$ be the standard basis in $\bbF^\calN$,
so that $\bfdelta_i(j)$ is $1$ if $i=j$ and $0$ otherwise.

We denote the image (range) and adjoint of a linear map $\bfA$ between spaces by $\im(\bfA)$ and $\bfA^*$, respectively.
For any finite nonempty sets $\calM$ and $\calN$,
we abuse notation and equate a linear map $\bfA:\bbF^\calN\rightarrow\bbF^\calM$ with a matrix $\bfA\in\bbF^{\calM\times\calN}$ as usual,
having $\bfA(i,j)=\ip{\bfdelta_i}{\bfA\bfdelta_j}$ for any $i\in\calM$, $j\in\calN$.
The adjoint of $\bfA\in\bbF^{\calM\times\calN}$ thus equates to its conjugate transpose $\bfA^*\in\bbF^{\calN\times\calM}$,
$\bfA^*(j,i):=\overline{\bfA(i,j)}$.
For integers $m,n\geq 1$, we abbreviate $\bbF^{[n]}$ and $\bbF^{[m]\times[n]}$ as $\bbF^n$ and $\bbF^{m\times n}$, respectively.

The Frobenius inner product of two linear maps from one given space to another is $\ip{\bfA}{\bfB}_\Fro:=\Tr(\bfA^*\bfB^{})$.
We will employ both the associated norm $\norm{\bfA}_{\Fro}:=[\Tr(\bfA^*\bfA)]^{\frac12}$
and the induced operator norm
$\norm{\bfA}_{\op}:=\sup\set{\norm{\bfA\bfx}: \norm{\bfx}\leq 1}$.

The \textit{synthesis map} of a finite sequence $(\bfphi_i)_{i\in\calN}$ of vectors in a space $\calE$ is $\bfPhi:\bbF^\calN\rightarrow\calE$, $\bfPhi\bfx:=\sum_{i\in\calN}\bfx(i)\bfphi_i$.
It is linear with $\im(\bfPhi)=\Span(\bfphi_i)_{i\in\calN}$.
Its adjoint $\bfPhi^*:\calE\rightarrow\bbF^\calN$ is the associated \textit{analysis map}, and is given by $(\bfPhi^*\bfy)(i)=\ip{\bfphi_i}{\bfy}$.
Composing $\bfPhi$ and $\bfPhi^*$ gives the associated \textit{frame operator} $\bfPhi\bfPhi^*:\calE\rightarrow\calE$ given by $\bfPhi\bfPhi^*\bfy=\sum_{i\in\calN}\ip{\bfphi_i}{\bfy}\bfphi_i$
and \textit{Gram matrix} $\bfPhi^*\bfPhi\in\bbF^{\calN\times\calN}$ given by
$(\bfPhi^*\bfPhi)(i,j)=\ip{\bfphi_i}{\bfphi_j}$.
Both are positive semidefinite of rank $\dim(\Span(\bfphi_i)_{i=1}^{n})$.
When $\calE=\bbF^m$ and $\calN=[n]$ for some integers $m,n\geq1$,
$\bfPhi$ is the $m\times n$ matrix whose $i$th column is $\bfphi_i$.
For general $\calE$, we can identify each $\bfphi_i$ with the map $x\mapsto x\bfphi_i$ from $\bbF$ into $\calE$ whose adjoint $\bfphi_i^*$ is the linear functional $\bfy\mapsto\ip{\bfphi_i}{\bfy}$,
implying $\bfPhi\bfPhi^*=\sum_{i\in\calN}\bfphi_i^{}\bfphi_i^*$.

Any linear map $\bfPhi:\bbF^\calN\rightarrow\calE$ is a synthesis map,
namely that of $(\bfphi_i)_{i\in\calN}=(\bfPhi\bfdelta_i)_{i\in\calN}$.
By spectral decomposition, any positive semidefinite $\bfG\in\bbF^{\calN\times\calN}$ is the Gram matrix $\bfPhi^*\bfPhi$ of a sequence $(\bfphi_i)_{i\in\calN}$ in a space of dimension $\rank(\bfG)$ that is unique up to unitaries.
This sequence can be chosen to lie in $\bbF^{\rank(\bfG)}$, if so desired.

\subsection{Equi-isoclinic subspaces}

Let $(\calU_i)_{i=1}^{n}$ be a sequence of $r$-dimensional subspaces of a space $\calE$ of dimension $d$.
For each $i\in[n]$,
let $\bfPhi_i:\bbF^r\mapsto\calE$ be an \textit{isometry} onto $\calU_i$,
that is, be the synthesis map of an orthonormal basis $(\bfphi_{i,k})_{k=1}^{r}$ for $\calU_i$, or equivalently, satisfy $\bfPhi_i^*\bfPhi_i^{}=\brI$ and $\im(\bfPhi_i)=\calU_i$.
For any  $i\in[n]$, the (orthogonal) projection (operator) onto $\calU_i$ is $\bfPi_i:=\bfPhi_i^{}\bfPhi_i^*$.
Meanwhile, for any $i,j \in [n]$, the \textit{cross-Gram matrix} $\bfPhi_i^*\bfPhi_j^{}\in\bbF^{r\times r}$ satisfies
\begin{equation*}
(\bfPhi_i^*\bfPhi_j^{})(k,l)
=\ip{\bfphi_{i,k}}{\bfphi_{j,l}}
\end{equation*}
for any $k,l\in[r]$.
Such matrices are involved in one of the many ways to understand equivalence of subspace sequences, summarized below.

\begin{lemma}[Folklore]
\label{lem.equivalence}
Suppose $(\calU_i)_{i=1}^{n}$ and $(\widehat{\calU}_i)_{i=1}^{n}$ are sequences of \mbox{$r$-dimensional} subspaces of some $d$-dimensional spaces $\calE$ and $\widehat{\calE}$, respectively.
Let $(\bfPi_i)_{i=1}^{n}$ and $(\widehat{\bfPi}_i)_{i=1}^{n}$ be the associated projections, and $(\bfPhi_i)_{i=1}^{n}$ and $(\widehat{\bfPhi}_i)_{i=1}^{n}$ be associated isometries with domain $\bbF^r$.
Then the following are equivalent:
\begin{enumerate}
\renewcommand{\labelenumi}{(\roman{enumi})}
\item
There is a unitary $\bfUpsilon:\calE\rightarrow\widehat{\calE}$ such that
$\widehat{\calU}_i=\bfUpsilon(\calU_i)$ for all $i\in[n]$.

\item
There is a unitary $\bfUpsilon:\calE\rightarrow\widehat{\calE}$ such that
$\widehat{\bfPi}_i=\bfUpsilon\bfPi_i\bfUpsilon^*$ for all $i\in[n]$.
\item
There exist unitaries $(\bfZ_i)_{i=1}^{n}$ in $\bbF^{r\times r}$ and  $\bfUpsilon:\calE\rightarrow\widehat{\calE}$ such that
$\widehat{\bfPhi}_i=\bfUpsilon\bfPhi_i\bfZ_i$ for all $i\in[n]$.

\item
There exist unitaries $(\bfZ_i)_{i=1}^{n}$ in $\bbF^{r\times r}$ such that
\mbox{$\widehat{\bfPhi}_i^*\widehat{\bfPhi}_j^{}=\bfZ_i^*\bfPhi_i^*\bfPhi_j^{}\bfZ_j^{}$}
for all $i,j\in[n]$.
\end{enumerate}
\end{lemma}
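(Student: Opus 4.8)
The plan is to prove the cycle of implications $(\mathrm{i})\Rightarrow(\mathrm{ii})\Rightarrow(\mathrm{iii})\Rightarrow(\mathrm{iv})\Rightarrow(\mathrm{i})$; only the first and last steps carry any real content, the middle two being direct computations. For $(\mathrm{i})\Leftrightarrow(\mathrm{ii})$ (which I would record as an honest ``iff''), the key point is the standard dictionary between $r$-dimensional subspaces and rank-$r$ orthogonal projections: for any unitary $\bfUpsilon:\calE\rightarrow\widehat\calE$, the operator $\bfUpsilon\bfPi_i\bfUpsilon^*$ is self-adjoint and idempotent with image $\bfUpsilon(\calU_i)$, hence equals the orthogonal projection onto $\bfUpsilon(\calU_i)$; since $\widehat\bfPi_i$ is by definition the projection onto $\widehat\calU_i$, the identity $\widehat\bfPi_i=\bfUpsilon\bfPi_i\bfUpsilon^*$ holds for a given $i$ exactly when $\widehat\calU_i=\bfUpsilon(\calU_i)$.

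For $(\mathrm{ii})\Rightarrow(\mathrm{iii})$ I would observe that, under the hypothesis, $\bfUpsilon\bfPhi_i$ and $\widehat\bfPhi_i$ are both isometries with image $\widehat\calU_i$, and that any two isometries onto the same subspace differ by a unitary on the right. Explicitly, setting $\bfZ_i:=\bfPhi_i^*\bfUpsilon^*\widehat\bfPhi_i\in\bbF^{r\times r}$ and using the relations $\bfPhi_i^{}\bfPhi_i^*=\bfPi_i$, $\widehat\bfPhi_i^{}\widehat\bfPhi_i^*=\widehat\bfPi_i$, $\bfPi_i\bfPhi_i^{}=\bfPhi_i^{}$, $\widehat\bfPi_i\widehat\bfPhi_i^{}=\widehat\bfPhi_i^{}$, together with $(\mathrm{ii})$, one checks that $\bfZ_i^*\bfZ_i^{}=\bfZ_i^{}\bfZ_i^*=\brI$ and $\bfUpsilon\bfPhi_i^{}\bfZ_i^{}=\bfUpsilon\bfPi_i\bfUpsilon^*\widehat\bfPhi_i^{}=\widehat\bfPi_i\widehat\bfPhi_i^{}=\widehat\bfPhi_i^{}$. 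The implication $(\mathrm{iii})\Rightarrow(\mathrm{iv})$ is then immediate: substituting $\widehat\bfPhi_i=\bfUpsilon\bfPhi_i\bfZ_i$ into $\widehat\bfPhi_i^*\widehat\bfPhi_j^{}$ and cancelling $\bfUpsilon^*\bfUpsilon=\brI$ gives $\bfZ_i^*\bfPhi_i^*\bfPhi_j^{}\bfZ_j^{}$.

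The substantive step is $(\mathrm{iv})\Rightarrow(\mathrm{i})$, which I would deduce from the Gram-matrix uniqueness recalled earlier in this section. Put $\bfPsi_i:=\bfPhi_i\bfZ_i$; as $\bfZ_i$ is unitary, $\bfPsi_i$ is again an isometry onto $\calU_i$, and $(\mathrm{iv})$ says precisely that $\bfPsi_i^*\bfPsi_j^{}=\widehat\bfPhi_i^*\widehat\bfPhi_j^{}$ for all $i,j\in[n]$. Now let $\bfPsi$ and $\widehat\bfPhi$ denote the synthesis maps of the $rn$-term sequences obtained by concatenating the columns of $\bfPsi_1,\dotsc,\bfPsi_n$ and of $\widehat\bfPhi_1,\dotsc,\widehat\bfPhi_n$, respectively; their Gram matrices are the $n\times n$ block matrices whose $(i,j)$ blocks are $\bfPsi_i^*\bfPsi_j^{}$ and $\widehat\bfPhi_i^*\widehat\bfPhi_j^{}$, so $\bfPsi^*\bfPsi=\widehat\bfPhi^*\widehat\bfPhi=:\bfG$. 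By the stated uniqueness there is a unitary from $\im(\bfPsi)=\sum_{i=1}^n\calU_i$ onto $\im(\widehat\bfPhi)=\sum_{i=1}^n\widehat\calU_i$ carrying each column of $\bfPsi_i$ to the corresponding column of $\widehat\bfPhi_i$, i.e.\ taking $\bfPsi_i$ to $\widehat\bfPhi_i$. Since these two spans have the same dimension $\rank(\bfG)$ and sit inside the spaces $\calE$ and $\widehat\calE$ of the same dimension $d$, their orthogonal complements have equal dimension; extending the span-to-span map by any unitary between those complements produces a unitary $\bfUpsilon:\calE\rightarrow\widehat\calE$ with $\bfUpsilon\bfPsi_i=\widehat\bfPhi_i$, hence $\bfUpsilon(\calU_i)=\im(\bfUpsilon\bfPsi_i)=\im(\widehat\bfPhi_i)=\widehat\calU_i$ for every $i$, which is $(\mathrm{i})$. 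This argument in fact delivers $(\mathrm{iii})$ as well, via $\widehat\bfPhi_i=\bfUpsilon\bfPhi_i\bfZ_i$.

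The only place I expect to need genuine care is this last step: correctly identifying the blocks of the concatenated Gram matrix with the cross-Gram matrices, and then justifying the extension of the span-to-span unitary to all of $\calE$, a dimension count that relies on $\dim\calE=\dim\widehat\calE=d$. The remaining verifications are routine manipulations with $\bfPhi_i^*\bfPhi_i^{}=\brI$, $\bfPhi_i^{}\bfPhi_i^*=\bfPi_i$, and the unitarity of the $\bfZ_i$ and $\bfUpsilon$.
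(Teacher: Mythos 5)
Your proposal is correct and follows essentially the same route as the paper's appendix proof: the (ii)$\Rightarrow$(iii) step via the fact that two isometries with a common image differ by a right unitary (the paper even records your explicit $\bfZ_i=\bfPhi_i^*\bfUpsilon^*\widehat{\bfPhi}_i$), and the (iv) step via equality of the concatenated (fusion) Gram matrices, a span-to-span unitary, and extension using $\dim\calE=\dim\widehat{\calE}=d$. The only difference is bookkeeping—you close a cycle through (i) while the paper routes (ii) and (iv) back to (iii)—and, as you note, your argument yields (iii) as well, so the two proofs coincide in substance.
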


For a proof, see the appendix.
We say that two sequences of subspaces $(\calU_i)_{i=1}^{n}$ and $(\widehat{\calU}_i)_{i=1}^{n}$ are \textit{equivalent} if they satisfy (i)--(iv) of Lemma~\ref{lem.equivalence}.
Though such unitaries $\bfUpsilon$ and $(\bfZ_i)_{i=1}^{n}$ are not unique in general,
if (iii) holds for a particular $\bfUpsilon$ then each corresponding $\bfZ_i$ is necessarily $\bfZ_i=\bfPhi_i^*\bfUpsilon^*\widehat{\bfPhi}_i^{}$.
Here, for any sequence $(\calU_i)_{i=1}^{n}$ of subspaces of $\calE$,
taking $\bfUpsilon$ to be the analysis map of an orthonormal basis for $\calE$ yields an equivalent sequence $(\widehat{\calU}_i)_{i=1}^{n}$ of subspaces of $\bbF^d$.

Angles between subspaces can be understood as follows.
For any distinct $i,j\in[n]$,
the singular values of $\bfPhi_i^*\bfPhi_j$ are bounded above by $\norm{\bfPhi_i^*\bfPhi_j}_{\op}
\leq\norm{\bfPhi_i^*}_{\op}\norm{\bfPhi_j^{}}_{\op}=1$,
and so are of the form $(\cos(\theta_{i,j;k}))_{k=1}^{r}$ for some unique nondecreasing sequence of \textit{principal angles} $(\theta_{i,j;k})_{k=1}^{r}$ in $[0,\frac\pi 2]$.
By Lemma~\ref{lem.equivalence},
these angles are preserved by equivalence,
and in particular are invariant with respect to one's choice of isometries for $(\calU_i)_{i=1}^{n}$.
The \textit{spectral distance}~\cite{DhillonHST08} between two of these subspaces is
\begin{equation}
\label{eq.spectral distance}
\dist_\rms(\calU_i,\calU_j)
:=(1-\norm{\bfPhi_i^*\bfPhi_j^{}}_{\op}^2)^{\frac12}
=\min_{k\in[r]}\sin(\theta_{i,j;k}).
\end{equation}
With respect to it, an optimal code in the corresponding Grassmannian equates to $n$ subspaces of $\calE$, each of dimension $r$, for which the smallest principal angle between any pair of them is as large as possible.
In light of~\eqref{eq.spectral distance},
this equates to minimizing block coherence~\eqref{eq.block coherence}.

Later on, we use the following straightforward bounds:
\begin{align}
\nonumber
\tfrac1{n(n-1)}\sum_{i=1}^{n}\sum_{j\neq i}\tfrac1r\norm{\bfPhi_i^*\bfPhi_j}_\Fro^2
&\leq\tfrac1{n(n-1)}\sum_{i=1}^{n}\sum_{j\neq i}\norm{\bfPhi_i^*\bfPhi_j}_\op^2\\
\label{eq.equi-isoclinic bound}
&\leq\max_{i\neq j}\norm{\bfPhi_i^*\bfPhi_j}_\op^2.
\end{align}
Clearly, equality holds throughout~\eqref{eq.equi-isoclinic bound} if and only if $\theta_{i,j;k}$ is constant over all distinct $i,j\in[n]$ and $k\in[r]$.
When this occurs, $(\calU_i)_{i=1}^n$ is said to be \textit{equi-isoclinic}~\cite{LemmensS73b}.
This equates to the existence of some $\sigma\geq0$ such that
\begin{equation}
\label{eq.equi-isoclinic}
\bfPhi_i^*\bfPhi_j^{}\bfPhi_j^{*}\bfPhi_i^{}
=\sigma^2\brI,
\ \forall\, i,j\in[n],\, i\neq j.
\end{equation}
For any $\sigma\geq0$,
\eqref{eq.equi-isoclinic} equates,
via the appropriate multiplication by $\bfPhi_i^{}$ and $\bfPhi_i^*$, to having
\begin{equation}
\label{eq.equi-isoclinic projections}
\bfPi_i\bfPi_j\bfPi_i
=\sigma^2\bfPi_i,
\ \forall\, i,j\in[n],\, i\neq j.
\end{equation}
The real instance of the following necessary condition on the existence of equi-isoclinic subspaces was given in~\cite{LemmensS73b}.
In the appendix, we verify that its proof directly generalizes to the complex setting.
Results of this type are called \textit{Gerzon bounds}.

\begin{lemma}[\cite{LemmensS73b}]
\label{lem.Gerzon}
If a space of dimension $d$ contains $n$ nonidentical equi-isoclinic subspaces of dimension $r$ then
\begin{equation}
\label{eq.Gerzon bound}
n\leq\left\{\begin{array}{ll}
\tfrac12 d(d+1)-\tfrac12 r(r+1)+1,&\ \bbF=\bbR,\smallskip\\
d^2-r^2+1,&\ \bbF=\bbC.
\end{array}\right.
\end{equation}
\end{lemma}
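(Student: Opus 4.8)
The plan is to pass from the subspaces $(\calU_i)_{i=1}^n$ to their projections $(\bfPi_i)_{i=1}^n$, viewed as vectors in the real inner-product space of self-adjoint operators on $\calE$ under the Frobenius inner product, and to show that the ``off-center'' parts of these projections are linearly independent, forcing $n-1$ to be at most the dimension of a suitable subspace of that operator space. Concretely, I would work in the real vector space $\SA(\calE)$ of self-adjoint $\bbF$-linear operators on $\calE$, which has real dimension $\tfrac12 d(d+1)$ when $\bbF=\bbR$ and $d^2$ when $\bbF=\bbC$. Each projection $\bfPi_i$ is an element of $\SA(\calE)$ with $\Tr(\bfPi_i)=r$ and, because the subspaces are equi-isoclinic with parameter $\sigma$, \eqref{eq.equi-isoclinic projections} gives $\Tr(\bfPi_i\bfPi_j)=\Tr(\sigma^2\bfPi_i)=\sigma^2 r$ for $i\neq j$, while $\Tr(\bfPi_i^2)=\Tr(\bfPi_i)=r$. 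So the Gram matrix of $(\bfPi_i)_{i=1}^n$ in the Frobenius inner product is $(r-\sigma^2 r)\brI + \sigma^2 r\,\brJ$ where $\brJ$ is all-ones. Since the subspaces are nonidentical, $\sigma<1$, so this Gram matrix is positive definite and $(\bfPi_i)_{i=1}^n$ is linearly independent.

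Next I would subtract off a common correction to sharpen $n$ to $n\le\dim(\text{something smaller})+1$. The standard move is to consider $\bfPi_i-\tfrac{r}{d}\brI$, which is trace-zero, hence lies in the codimension-one subspace $\SA_0(\calE)$ of trace-zero self-adjoint operators, of real dimension $\tfrac12 d(d+1)-1$ (real case) or $d^2-1$ (complex case). These $n$ shifted operators are still linearly independent (the common shift preserves linear independence of an affinely independent set, which follows from positive-definiteness of the Gram matrix above), so $n\le\tfrac12 d(d+1)-1+1$ or $n\le d^2-1+1$ — but that is not yet the claimed bound; the $r(r+1)$ and $r^2$ savings still need to be extracted. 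To get those, I would instead (or additionally) restrict attention to where these operators live relative to a single fixed $\bfPi_1$: using \eqref{eq.equi-isoclinic projections} with $i=1$, each $\bfPi_j$ is constrained on the range of $\bfPi_1$, so the operators $\bfPi_j-\tfrac{r}{d}\brI$ for $j\ge 2$, together with $\bfPi_1-\tfrac rd\brI$, avoid a subspace of self-adjoint operators supported on $\calU_1$ of dimension $\tfrac12 r(r+1)$ (real) or $r^2$ (complex). Counting dimensions then yields $n-1\le \tfrac12 d(d+1) - \tfrac12 r(r+1)$ in the real case and $n-1\le d^2-r^2$ in the complex case, which is exactly \eqref{eq.Gerzon bound}.

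The only genuine subtlety — and the step I'd expect to be the main obstacle to write cleanly — is making the ``avoid a subspace of dimension $\tfrac12 r(r+1)$ (resp.\ $r^2$)'' argument rigorous: one must identify a linear functional, or a family of them, that kills all the $\bfPi_j-\tfrac rd\brI$ for $j\ge 2$ yet is nontrivial on the self-adjoint operators living on $\calU_1$, and check the count of independent such functionals is exactly $\dim\SA(\calU_1)$. Equivalently, decompose $\SA(\calE)$ via the orthogonal decomposition $\calE=\calU_1\oplus\calU_1^\perp$ into blocks $\SA(\calU_1)\oplus(\text{off-diagonal})\oplus\SA(\calU_1^\perp)$, write $\bfPi_j = \bfPi_1\bfPi_j\bfPi_1 + (\text{cross terms}) + \bfPi_1^\perp\bfPi_j\bfPi_1^\perp$, use $\bfPi_1\bfPi_j\bfPi_1=\sigma^2\bfPi_1$ to see that the $\SA(\calU_1)$-block of each $\bfPi_j$ ($j\ge2$) is the \emph{same} operator $\sigma^2\bfPi_1$, hence after subtracting $\bfPi_1$ appropriately these lie in a common coset that is independent of $j$; this collapses $n-1$ of the unknowns into a space of codimension $\dim\SA(\calU_1)$ inside $\SA_0(\calE)$. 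I would organize the final count as: the $n$ operators $\bfPi_i$ are linearly independent in $\SA(\calE)$, they all satisfy $\Tr(\bfPi_i)=r$ (one affine constraint), and $\bfPi_2,\dots,\bfPi_n$ all share the same $\SA(\calU_1)$-block (which accounts for $\dim\SA(\calU_1)-1$ further independent constraints beyond the trace one, after the shift), giving $n \le \dim\SA(\calE) - \dim\SA(\calU_1) + 1$, i.e.\ \eqref{eq.Gerzon bound}. For the complex case the identical bookkeeping goes through with $\dim_\bbR\SA(\calE)=d^2$ and $\dim_\bbR\SA(\calU_1)=r^2$, which is precisely the claim that the proof ``directly generalizes.''
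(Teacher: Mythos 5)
Your proposal is correct and follows essentially the same route as the paper: linear independence of the projections from the positive-definite Frobenius Gram matrix $r[(1-\sigma^2)\brI+\sigma^2\brJ]$, plus the observation that $\bfPi_1\bfPi_j\bfPi_1=\sigma^2\bfPi_1$ confines every $\bfPi_i$ to the set of self-adjoint operators whose $\calU_1$-block is a real multiple of the identity, whose real dimension is exactly the right-hand side of \eqref{eq.Gerzon bound}. The only tidying needed is to drop the trace-zero shift and the affine-constraint bookkeeping: all $n$ projections already lie in the linear subspace $\set{\bfA\in\SA(\calE):\bfPi_1\bfA\bfPi_1\in\bbR\,\bfPi_1}$ (the paper's $\calU$), so linear independence gives the bound directly.
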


In~\cite{LemmensS73b}, the authors state that they ``have reasons to believe that this bound is too large" in general.
These reasons presumably include Theorem~5.1 of~\cite{LemmensS73b},
which shows that this upper bound is particularly poor when $d=2r$ and $\bbF=\bbR$.
For context, when $r=1$ and $\bbF=\bbC$, it is famously conjectured~\cite{Zauner99,FuchsHS17} that equality is achieved in~\eqref{eq.Gerzon bound} for every $d\geq2$.
When $r=1$ and $\bbF=\bbR$, equality is achieved in~\eqref{eq.Gerzon bound} when $d\in\set{2,3,7,23}$,
and it is conjectured~\cite{GodsilR01,Gillespie18} that these are the only $d$ for which this occurs.
Our Theorem~\ref{thm.RH EITFF existence} contributes to this discussion,
showing that~\eqref{eq.Gerzon bound} is a weak bound on the existence of EITFFs with $d=2r$ and $\bbF=\bbC$.

\subsection{Equi-isoclinic tight fusion frames}

For $(\calU_i)_{i=1}^{n}$ and $(\bfPhi_i)_{i=1}^{n}$ defined as above, let $\bfPhi$ be the synthesis map of the concatenation $(\bfphi_{i,k})_{(i,k)\in[n]\times[r]}$
of the chosen orthonormal bases.
The associated \textit{fusion frame operator} is
\begin{equation}
\label{eq.fusion frame operator}
\bfPhi\bfPhi^*
=\sum_{i=1}^n\sum_{k=1}^{r}\bfphi_{i,k}^{}\bfphi_{i,k}^*
=\sum_{i=1}^{n}\bfPhi_i^{}\bfPhi_{i}^{*}
=\sum_{i=1}^{n}\bfPi_i.
\end{equation}
Meanwhile, the associated \textit{fusion Gram matrix} $\bfPhi^*\bfPhi$ satisfies
\begin{equation*}
(\bfPhi^*\bfPhi)\bigl((i,k),(j,l)\bigr)
=\ip{\bfphi_{i,k}}{\bfphi_{j,l}}
=(\bfPhi_i^*\bfPhi_j^{})(k,l),
\end{equation*}
for any $i,j\in[n]$, $k,l\in[r]$,
and so is a block matrix with the cross-Gram matrix $\bfPhi_i^*\bfPhi_j^{}\in\bbF^{r\times r}$ as its $(i,j)$th submatrix.
In particular, each of its diagonal blocks is $\brI$.

We call $(\calU_i)_{i=1}^{n}$ a \textit{tight fusion frame} for $\calE$ if its fusion frame operator~\eqref{eq.fusion frame operator} is a positive multiple of the identity on $\calE$,
namely if $\bfPhi\bfPhi^*=\sum_{i=1}^{n}\bfPi_i=a\brI$ for some $a>0$.
By Lemma~\ref{lem.equivalence}, this property is preserved under equivalence.
Taking the trace and rank of this equation reveals that this can only occur if  $a=\tfrac{nr}{d}\geq 1$, where $d$ is the dimension of $\calE$.

If $(\calU_i)_{i=1}^{n}$ is a tight fusion frame for $\calE$,
squaring and taking the trace of $\frac{d}{nr}\bfPhi^*\bfPhi$ reveals that it is a rank-$d$ projection,
and so $\brI-\frac{d}{nr}\bfPhi^*\bfPhi$ is a projection of rank $nr-d$.
Provided $nr>d$,
there thus exists a sequence $(\widetilde{\bfphi}_{i,k})_{(i,k)\in[n]\times[r]}$ of vectors in a space of dimension $nr-d$, unique up to unitaries, whose synthesis map \smash{$\widetilde{\bfPhi}$} satisfies $\widetilde{\bfPhi}^*\widetilde{\bfPhi}=\tfrac{nr}{nr-d}(\brI-\frac{d}{nr}\bfPhi^*\bfPhi)$.
Here, since $\bfPhi^*\bfPhi$ is a block matrix, each of whose diagonal blocks is $\brI\in\bbF^{r\times r}$,
the same is true for \smash{$\widetilde{\bfPhi}^*\widetilde{\bfPhi}$}.
As such, each synthesis map $\widetilde{\bfPhi}_i$ of $(\widetilde{\bfphi}_{i,k})_{k=1}^r$ is an isometry,
and together they satisfy
\begin{equation}
\label{eq.Naimark}
\widetilde{\bfPhi}_i^*\widetilde{\bfPhi}_j^{}=-\tfrac{d}{nr-d}\bfPhi_i^*\bfPhi_j^{},
\ \forall\, i,j\in[n],\, i\neq j.
\end{equation}
Moreover, since \smash{$\widetilde{\bfPhi}^*\widetilde{\bfPhi}$} has eigenvalue $\tfrac{nr}{nr-d}$ with multiplicity $nr-d$,
taking the singular value decomposition of \smash{$\widetilde{\bfPhi}$} reveals that \smash{$\widetilde{\bfPhi}\widetilde{\bfPhi}^*=\tfrac{nr}{nr-d}\brI$}.
As such, the $r$-dimensional images of $(\widetilde{\bfPhi}_i)_{i=1}^{n}$ form a tight fusion frame for a space of dimension $nr-d$.
It is called a \textit{Naimark complement} of $(\calU_i)_{i=1}^{n}$.
In light of~\eqref{eq.Naimark} and Lemma~\ref{lem.equivalence},
it is unique up to equivalence.

Regardless of whether $(\calU_i)_{i=1}^{n}$ is a tight fusion frame for $\calE$,
a direct calculation gives
\begin{align}
\label{eq.frame potential bound derivation}
0&\leq\norm{\tfrac{nr}{d}\brI-\bfPhi\bfPhi^*}_{\Fro}^2\\
\nonumber&\textstyle=\Tr\bigl[(\tfrac{nr}{d}\brI-\sum_{i=1}^{n}\bfPhi_i^{}\bfPhi_i^*)^2\bigr]\\
&\nonumber=-\tfrac{nr(nr-d)}{d}
+\sum_{i=1}^{n}\sum_{j\neq i}\norm{\bfPhi_i^*\bfPhi_j}_\Fro^2.
\end{align}
Rearranging this gives the following equivalent inequality:
\begin{equation}
\label{eq.frame potential bound}
\tfrac{nr-d}{d(n-1)}
\leq\tfrac1{n(n-1)}
\sum_{i=1}^{n}\sum_{j\neq i}\tfrac1r\norm{\bfPhi_i^*\bfPhi_j}_\Fro^2.
\end{equation}
Notably, equality holds in~\eqref{eq.frame potential bound} if and only if it holds in~\eqref{eq.frame potential bound derivation} namely if and only if $(\calU_i)_{i=1}^{n}$ is a tight fusion frame for $\calE$.

Combining~\eqref{eq.frame potential bound} and~\eqref{eq.equi-isoclinic bound} gives the Welch bound~\eqref{eq.Welch bound}.
Moreover, this argument implies that the following are equivalent:
(i) $(\bfPhi_i)_{i=1}^{n}$ achieves equality in~\eqref{eq.Welch bound};
(ii) $(\bfPhi_i)_{i=1}^{n}$ achieves equality in both~\eqref{eq.frame potential bound} and~\eqref{eq.equi-isoclinic bound};
(iii) $(\calU_i)_{i=1}^{n}$ is a tight fusion frame for $\calE$ that happens to be equi-isoclinic, that is, is an EITFF for $\calE$;
(iv) $(\bfPhi_i)_{i=1}^{n}$ satisfies~\eqref{eq.equi-isoclinic} with
\smash{$\sigma^2=\frac{nr-d}{d(n-1)}$}.

By Lemma~\ref{lem.equivalence}, tightness and equi-isoclinism are preserved by equivalence.
As such, the existence of an $\EITFF_\bbF(d,r,n)$,
namely an EITFF for a space over $\bbF$ of dimension $d$ that consists of $n$ of its subspaces of dimension $r$, depends only on $\bbF$, $d\geq r\geq1$ and $n\geq \frac dr$.
Moreover,
if so desired,
we can let $\calE=\bbF^d$ and let each $\bfPhi_i$ be a $d\times r$ matrix with orthonormal columns.
By~\eqref{eq.Welch bound},
each such $\EITFF_\bbF(d,r,n)$ is an $n$-element optimal code in the Grassmannian of all subspaces of $\bbF^d$ of dimension $r$, with respect to the spectral distance~\eqref{eq.spectral distance}.

By~\eqref{eq.Naimark},
a Naimark complement of any $\EITFF_\bbF(d,r,n)$ with $nr>d$ is an $\EITFF_\bbF(nr-d,r,n)$.
Any $\EITFF_\bbF(d,r,n)$ of one of the following three types is said to be \textit{trivial}:
its subspaces are mutually orthogonal, i.e., $d=nr$;
it consists of multiple copies of the entire space, i.e., $d=r$;
Naimark complements of the latter, i.e., $d=(n-1)r$.
Every EITFF for $\calE$ is also an \textit{equichordal} tight fusion frame,
and so is also an optimal Grassmannian code with respect to Conway, Hardin and Sloane's \textit{chordal distance}~\cite{ConwayHS96}.

\subsection{Radon--Hurwitz theory}

Our main results connect $\EITFF_\bbF(d,r,n)$ having $d=2r$ with solutions of the classical Radon--Hurwitz equations, which we now review.
We call a matrix $\bfA\in\bbF^{r\times r}$ a \textit{scaled unitary} if it is a scalar multiple of a unitary.
We claim that this occurs if and only if $\bfA^*\bfA=z\brI$ for some $z\in\bbF$.
Clearly the former implies the latter.
Conversely, taking traces gives $z=\tfrac1r\Tr(\bfA^*\bfA)\geq0$.
If $z>0$ then $z^{-\frac12}\bfA$ is unitary.
When instead $z=0$, $\bfA$ is necessarily $\bfzero$ since $\ker(\bfA^*\bfA)=\ker(\bfA)$ in general.

Regardless of whether $\bbF$ is $\bbR$ or $\bbC$,
we here regard $\bbF^{r\times r}$ as a vector space over $\bbR$.
The \textit{Radon--Hurwitz number} $\rho_\bbF(r)$ is the maximum dimension of a subspace of $\bbF^{r\times r}$ whose every member is a scaled unitary.
For example, letting
\begin{equation}
\label{eq.def of T,M,R}
\brI:=\left[\begin{smallmatrix}1&0\\0&1\end{smallmatrix}\right],\
\brM:=\left[\begin{smallmatrix}1&\phantom{-}0\\0&-1\end{smallmatrix}\right],\
\brT:=\left[\begin{smallmatrix}0&1\\1&0\end{smallmatrix}\right],\
\brR:=\left[\begin{smallmatrix}0&-1\\1&\phantom{-}0\end{smallmatrix}\right],
\end{equation}
note that $a\brI+b\brR=\left[\begin{smallmatrix}a&-b\\b&\hfill a\end{smallmatrix}\right]$ is a real scaled unitary for any $a,b\in\bbR$,
and so $\rho_\bbR(2)\geq 2$.
Similarly, $\rho_\bbC(2)\geq 4$ since
\begin{equation*}
a\brI+b\rmi\brM+c\brT+d\rmi\brR
=\left[\begin{smallmatrix}
a+\rmi b&-c+\rmi d\\
c+\rmi d&\phantom{-}a-\rmi b
\end{smallmatrix}\right]
\end{equation*}
is a complex scaled unitary for any $a,b,c,d\in\bbR$ where $\brI$, $\rmi\brM$, $\brT$ and $\rmi\brR$ are linearly independent.
A series of classical results due to Radon~\cite{Radon22}, Hurwitz~\cite{Hurwitz23} and others~\cite{Adams62,AdamsLP65} yield the nonobvious yet simply stated formula for $\rho_\bbF(r)$ given in~\eqref{eq.RH number};
see also~\cite{AdamsLP66,Au-Yeung71,Au-YeungC06} for related work.
In this subsection,
we review ideas from this literature that pertains to our work here.

For brevity's sake, we often abbreviate ``Radon--Hurwitz" as ``$\rho$" when using it as a prefix.
For instance, we call a subspace $\calS$ of $\bbF^{r\times r}$ a \textit{$\rho$-space} if every $\bfA\in\calS$ is a scaled unitary.
Thus, $\rho_\bbF(r)$ is the maximum dimension of a $\rho$-space in $\bbF^{r\times r}$.
If a $\rho$-space contains $\bfA$ and $\bfB$ then $\bfA+\bfB$ is a scaled unitary and $(\bfA+\bfB)^*(\bfA+\bfB)=\bfA^*\bfA+\bfA^*\bfB+\bfB^*\bfA+\bfB^*\bfB$, implying there exists $z\in\bbF$ such that
\begin{equation}
\label{eq.RH property}
\bfA^*\bfB^{}+\bfB^*\bfA^{}
=2z\brI.
\end{equation}
Taking the trace of~\eqref{eq.RH property} gives that this $z$ is necessarily
\begin{equation}
\label{eq.RH inner product}
\ip{\bfA}{\bfB}_\rho:=\tfrac1r\real(\Tr(\bfA^*\bfB))\in\bbR.
\end{equation}
Clearly, \eqref{eq.RH inner product} defines an inner product on a $\rho$-space in $\bbF^{r\times r}$.
More generally, we call a subset $\calS$ of $\bbF^{r\times r}$ a \textit{$\rho$-set} if for any $\bfA,\bfB\in\calS$ there exists $z\in\bbF$ such that~\eqref{eq.RH property} holds.
For any $\rho$-set $\calS$,
taking $\bfA=\bfB$ in~\eqref{eq.RH property} gives that any $\bfA\in\calS$ is a scaled unitary,
whereupon~\eqref{eq.RH property} in general gives that $\bfA+\bfB$ is a scaled unitary for any $\bfA,\bfB\in\calS$.
It follows that the real span of any $\rho$-set is a $\rho$-space.
For any $\bfA$ and $\bfB\neq\bfzero$ in a $\rho$-set,
multiplying~\eqref{eq.RH property} on the left and right by $\bfB$ and $\bfB^*$, respectively, gives $\bfA^{}\bfB^{*}+\bfB^{}\bfA^{*}=2\ip{\bfA}{\bfB}_\rho\brI$;
clearly the latter also holds when $\bfB=\bfzero$.
As such, if $\calS$ is a $\rho$-set then so is $\calS^*:=\set{\bfA^*: \bfA\in\calS}$ and moreover,
for any $\bfA,\bfB\in\calS$,
\begin{equation}
\label{eq.RH set relations}
\bfA^*\bfB^{}+\bfB^*\bfA^{}
=2\ip{\bfA}{\bfB}_\rho\,\brI
=\bfA^{}\bfB^{*}+\bfB^{}\bfA^{*}.
\end{equation}

Any $\rho$-space is thus a finite-dimensional real Hilbert space under~\eqref{eq.RH inner product} whose unit norm elements are unitaries.
We say that a finite sequence $(\bfC_i)_{i=1}^{m}$ in $\bbF^{r\times r}$ is \textit{$\rho$-orthonormal} if it is orthonormal with respect to~\eqref{eq.RH inner product}, that is, if each $\bfC_i$ is unitary and they satisfy the classical \textit{Radon--Hurwitz equations}
\begin{equation}
\label{eq.RH equations}
\bfC_i^*\bfC_j^{}+\bfC_j^*\bfC_i^{}
=\bfzero,
\ \forall\,i,j\in[m],\,i\neq j.
\end{equation}
For example,
from~\eqref{eq.def of T,M,R},
it follows that
$(\brI,\brR)$ is $\rho$-orthonormal in $\bbR^{2\times 2}$ and $(\brI,\brR,\rmi\brM,\rmi\brT)$ is $\rho$-orthonormal in $\bbC^{2\times 2}$.
Since the members of any such sequence form a $\rho$-set,
their real span is a $\rho$-space.
It follows that $\rho_\bbF(r)$ is the largest positive integer $m$ for which there exist $m$ unitary matrices in $\bbF^{r\times r}$ that satisfy~\eqref{eq.RH equations},
and moreover,
that any $\rho$-space of a given dimension contains a $\rho$-space of any lesser dimension.

We say that a sequence in $\bbF^{r\times r}$ is \textit{equivalent} to another such sequence $(\bfA_i)_{i=1}^{m}$ if it is of the form $(\bfU^*\bfA_i\bfV)_{i=1}^{m}$ for some unitaries $\bfU$ and $\bfV$ in $\bbF^{r\times r}$.
By~\eqref{eq.RH inner product}, this notion of equivalence preserves \mbox{$\rho$-orthonormality}.
In particular, unitaries $(\bfC_i)_{i=1}^{m}$ in $\bbF^{r\times r}$ are $\rho$-orthonormal if and only if $(\bfC_m^*\bfC_i^{})_{i=1}^{m}$ are as well.
Moreover,
$(\brI,\bfC)$ in $\bbF^{r\times r}$ is $\rho$-orthonormal if and only if $\bfC$ is both \textit{skew-Hermitian}, namely $\bfC^*=-\bfC$, and unitary.
It follows that $\bbF^{r\times r}$ contains $m$ $\rho$-orthonormal matrices if and only if it contains $m-1$ skew-Hermitian unitaries, any two of which anticommute.
Exploiting this idea yields the following characterization of complex $\rho$-orthonormality.
Its proof, given in the appendix, informs our work in Section~IV.
It implies that $\rho_\bbC(r)=\rho_\bbC(\frac r2)+2$ when $r$ is even and $\rho_\bbC(r)=2$ when $r$ is odd,
yielding the formula for $\rho_\bbC(r)$ given in~\eqref{eq.RH number}.

\begin{lemma}[Folklore]
\label{lem.complex RH}
If $\bbC^{r\times r}$ contains a $\rho$-orthonormal sequence of $m\geq 3$ matrices then $2\mid r$ and it equates to some $(\bfC_i)_{i=1}^{m}$ with
$\bfC_m=\left[\begin{smallmatrix}
\brI&\bfzero\\
\bfzero&\brI
\end{smallmatrix}\right]$,
$\bfC_{m-1}
=\rmi\brM\otimes\brI
=\rmi\left[\begin{smallmatrix}
\brI&\phantom{-}\bfzero\\
\bfzero&-\brI
\end{smallmatrix}\right]$,
and
\begin{equation}
\label{eq.complex reduction}
\bfC_i
=\left[\begin{smallmatrix}
\bfzero&-\widehat{\bfC}_i^*\\
\widehat{\bfC}_i^{}&\bfzero
\end{smallmatrix}\right],
\ \forall\,i\in[m-2],
\end{equation}
where $(\widehat{\bfC}_i)_{i=1}^{m-2}$ is some $\rho$-orthonormal sequence in $\bbC^{\frac{r}{2}\times\frac{r}{2}}$.
Conversely,
any such $(\bfC_i)_{i=1}^{m}$ is $\rho$-orthonormal.
\end{lemma}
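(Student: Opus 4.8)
The plan is to use the reduction recorded just before the statement: since $\rho$-orthonormality is preserved by left multiplication by $\bfC_m^*$, I may replace $(\bfC_i)_{i=1}^m$ by the equivalent sequence $(\bfC_m^*\bfC_i^{})_{i=1}^m$ and so assume $\bfC_m=\brI$. Then each of $\bfC_1,\dotsc,\bfC_{m-1}$ is a skew-Hermitian unitary, and any two of them anticommute. In particular $\bfC_{m-1}$ is normal with $\bfC_{m-1}^2=-\brI$, so its spectrum lies in $\set{\rmi,-\rmi}$; write $p$ and $q$ for the respective multiplicities, $p+q=r$. Here is the one genuinely substantive step: because $m\geq3$, the matrix $\bfC_{m-2}$ belongs to the sequence, is invertible, and anticommutes with $\bfC_{m-1}$, so $\bfC_{m-2}$ carries the $\rmi$-eigenspace of $\bfC_{m-1}$ injectively into the $(-\rmi)$-eigenspace and vice versa; hence $p=q$, i.e.\ $2\mid r$. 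From now on $\brI$ denotes $\brI_{r/2}$. Since $\bfC_{m-1}$ is normal with equidimensional eigenspaces, there is a unitary $\bfV\in\bbC^{r\times r}$ with $\bfV^*\bfC_{m-1}^{}\bfV=\rmi\brM\otimes\brI$; replacing $(\bfC_i)_{i=1}^m$ by $(\bfV^*\bfC_i^{}\bfV)_{i=1}^m$ is again an equivalence (compositions of equivalences of the form $(\bfU^*\bfC_i\bfV)$ are of the same form), leaves $\bfC_m=\brI$ unchanged, and puts $\bfC_{m-1}$ in the stated form.

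It remains to analyze $\bfC_i$ for $i\in[m-2]$. Writing $\bfC_i$ in $\tfrac{r}{2}\times\tfrac{r}{2}$ blocks, anticommutation with $\bfC_{m-1}=\rmi\brM\otimes\brI$ annihilates both diagonal blocks, and then skew-Hermiticity forces $\bfC_i=\left[\begin{smallmatrix}\bfzero&-\widehat{\bfC}_i^*\\\widehat{\bfC}_i^{}&\bfzero\end{smallmatrix}\right]$ for some $\widehat{\bfC}_i\in\bbC^{\frac{r}{2}\times\frac{r}{2}}$; the identity $\bfC_i^2=-\brI$ then says exactly that $\widehat{\bfC}_i$ is unitary. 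For distinct $i,j\in[m-2]$, expanding $\bfC_i^{}\bfC_j^{}+\bfC_j^{}\bfC_i^{}=\bfzero$ blockwise yields $\widehat{\bfC}_i^*\widehat{\bfC}_j^{}+\widehat{\bfC}_j^*\widehat{\bfC}_i^{}=\bfzero$, so $(\widehat{\bfC}_i)_{i=1}^{m-2}$ is $\rho$-orthonormal in $\bbC^{\frac{r}{2}\times\frac{r}{2}}$. This proves the forward implication.

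For the converse, suppose $(\widehat{\bfC}_i)_{i=1}^{m-2}$ is $\rho$-orthonormal in $\bbC^{\frac{r}{2}\times\frac{r}{2}}$ and define $(\bfC_i)_{i=1}^m$ as in the statement. Each $\bfC_i$ is unitary (for $i\in[m-2]$ because $\widehat{\bfC}_i$ is), and the pairings with $\bfC_m=\brI$ and with $\bfC_{m-1}=\rmi\brM\otimes\brI$ in~\eqref{eq.RH equations} hold at once since each such $\bfC_i$ is skew-Hermitian and anticommutes with $\bfC_{m-1}$ by the block structure. Finally, for distinct $i,j\in[m-2]$ the left side of~\eqref{eq.RH equations} is block-diagonal with diagonal blocks $\widehat{\bfC}_i^*\widehat{\bfC}_j^{}+\widehat{\bfC}_j^*\widehat{\bfC}_i^{}$ and $\widehat{\bfC}_i^{}\widehat{\bfC}_j^*+\widehat{\bfC}_j^{}\widehat{\bfC}_i^*$, both of which vanish by~\eqref{eq.RH set relations} since $\ip{\widehat{\bfC}_i}{\widehat{\bfC}_j}_\rho=0$. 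I expect the only real obstacle to be the parity step in the first paragraph; everything else is routine $2\times2$-block bookkeeping.
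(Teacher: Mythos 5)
Your proof is correct and takes essentially the same route as the paper's: reduce to $\bfC_m=\brI$ so that $\bfC_1,\dotsc,\bfC_{m-1}$ are pairwise anticommuting skew-Hermitian unitaries, diagonalize $\bfC_{m-1}$, use anticommutation with the invertible $\bfC_{m-2}$ to force the $\pm\rmi$-eigenspaces to have equal dimension (the paper phrases this same parity step as the off-diagonal blocks of an anticommuting unitary being unitary, hence square), and then finish with the blockwise bookkeeping and~\eqref{eq.RH set relations} for the converse. No gaps; your eigenspace-intertwining wording is just a minor repackaging of the paper's block argument.
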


Lemma~\ref{lem.complex RH} transforms a maximal $\rho$-orthonormal sequence in $\bbC^{\frac{r}{2}\times\frac{r}{2}}$ into one in $\bbC^{r\times r}$.
For example,
it transforms $(\rmi,1)$ in $\bbC^{1\times1}$ into $(\rmi\brT,\brR,\rmi\brM,\brI)$ in $\bbC^{2\times2}$,
and transforms the latter into a maximal $\rho$-orthonormal sequence
in $\bbC^{4\times4}$, namely
\mbox{$(\rmi\brT\otimes\brT,\
\brT\otimes\brR,\
\rmi\brT\otimes\brM,\
\brR\otimes\brI,\
\rmi\brM\otimes\brI,\
\brI\otimes\brI)$}.

When $\bbF=\bbR$,
a more delicate approach is needed.
Comparing the values $\rho_\bbR(r)=8b+2^c$ and $\rho_\bbC(r)=8b+2c+2$  from~\eqref{eq.RH number} reveals that $\rho_\bbR(r)=\rho_\bbC(r)$ when $c=3$,
namely when $\frac{r}{8}$ is an odd multiple of a power of $16$,
and that $\rho_\bbR(r)<\rho_\bbC(r)$ when instead $c\in\set{0,1,2}$.
Since every real matrix is a complex one, $\rho_\bbC(r)$ is an upper bound on the maximal number of \mbox{$\rho$-orthonormal} matrices in $\bbR^{r\times r}$.
When $c\in\set{0,1,2}$, we know of no short and elementary proof of the classical fact that $\rho_\bbR(r)=8b+2^c$ is an upper bound on this same number;
one nontrivial proof of it~\cite{Rajwade93} involves a series of four various nontrivial generalizations of~\eqref{eq.complex reduction} that convert $\rho$-orthonormal matrices in $\bbR^{r\times r}$ into eight fewer such matrices in $\bbR^{\frac{r}{16}\times\frac{r}{16}}$.

In contrast, it is relatively straightforward to explicitly construct a maximal number of unitaries in $\bbR^{r\times r}$ that satisfy~\eqref{eq.RH equations}.
Taking the first of these to be $\brI$,
it suffices to have $m:=8b+2^c-1$ anticommuting skew-Hermitian unitaries in $\bbR^{r\times r}$.
For example, recalling~\eqref{eq.def of T,M,R},
consider the matrix $\brR$ when $r=2$ and $m=1$,
the matrices
$(\brI\otimes\brR$, $\brR\otimes\brT$, $\brR\otimes\brM)$ when $r=4$ and $m=3$, the matrices
($\brM\otimes\brM\otimes\brR$,
$\brM\otimes\brT\otimes\brR$,
$\brM\otimes\brR\otimes\brI$,
$\brT\otimes\brR\otimes\brM$,
$\brT\otimes\brR\otimes\brT$,
$\brT\otimes\brI\otimes\brR$,
$\brR\otimes\brI\otimes\brI$)
when $r=8$ and $m=7$, and the matrices
\begin{equation}
\label{eq.RH matrices of size 16}
\begin{array}{ccccccccc}
\brR&\otimes&\brT&\otimes&\brT&\otimes&\brT,\\
\brT&\otimes&\brR&\otimes&\brT&\otimes&\brM,\\
\brT&\otimes&\brM&\otimes&\brR&\otimes&\brT,\\
\brT&\otimes&\brT&\otimes&\brM&\otimes&\brR,
\end{array}
\qquad
\begin{array}{ccccccc}
\brR&\otimes&\brM&\otimes&\brM&\otimes&\brM,\\
\brM&\otimes&\brR&\otimes&\brM&\otimes&\brT,\\
\brM&\otimes&\brT&\otimes&\brR&\otimes&\brM,\\
\brM&\otimes&\brM&\otimes&\brT&\otimes&\brR,
\end{array}
\end{equation}
when $r=16$ and $m=8$.
Moreover,
as noted in~\cite{Au-YeungC06},
combining the above matrices $(\bfE_i)_{i=1}^{8}$ in $\bbR^{16\times 16}$ with any anticommuting skew-Hermitian matrices $(\bfC_i)_{i=1}^{m}$ in $\bbR^{r\times r}$ as follows yields $m+8$ such matrices in $\bbR^{16r\times 16r}$:
\begin{equation}
\label{eq.real inflation}
(\brR\otimes\brR\otimes\brR\otimes\brR\otimes\bfC_i)_{i=1}^m\sqcup
(\bfE_i\otimes\brI)_{i=1}^{8}.
\end{equation}

\subsection{Regular simplices}

Our characterization of $\EITFF_\bbF(d,r,n)$ with $d=2r$ involves a deep connection with $\rho$-spaces, but we will not encounter the latter directly as the span of $\rho$-orthonormal sequences.
Instead, we will see the following kind of object.
We say that a sequence $(\bfphi_i)_{i=1}^{m}$ of $m\geq 2$ unit norm vectors in a real space is a \textit{(regular) simplex} if
$\ip{\bfphi_i}{\bfphi_j}=-\tfrac1{m-1}$ for any distinct $i,j\in[m]$.
That is, in this paper, a simplex is a sequence of vectors in a real space whose Gram matrix is $\tfrac1{m-1}(m\brI-\brJ)$ where $\brJ$ is an $m\times m$ all-ones matrix.
Such a matrix has rank $m-1$ and is positive semidefinite.
As such, a real space $\calE$ contains an \mbox{$m$-vector} simplex if and only if
$m\leq\dim(\calE)+1$.
To recursively construct a matrix $\bfPsi_m$ whose $m$ columns form a simplex in $\bbR^{m-1}$,
let $\bfPsi_2:=\left[\begin{array}{cc}1&-1\end{array}\right]$ and
\begin{equation}
\label{eq.simplex recursion}
\bfPsi_m
:=\frac1{m-1}\left[\begin{array}{cc}
m-1&-\bfone^{\top}\medskip\\
\bfzero&\sqrt{m(m-2)}\,\bfPsi_{m-1}
\end{array}\right],
\ m\geq 3.
\end{equation}
In the next result, we summarize some elementary but obscure facts about simplices that we will use later on:

\begin{lemma}[Folklore]
\label{lem.simplex}
A sequence $(\bfphi_i)_{i=1}^{m}$ of $m\geq 2$ vectors in a real space is a simplex if and only if there exists an orthonormal basis $(\bfupsilon_i)_{i=1}^{m-1}$ for $\Span(\bfphi_i)_{i=1}^{m}$ such that
\begin{equation}
\label{eq.simplex basis}
\bfphi_j=\sum_{i=1}^{m-1}\bfPsi_m(i,j)\bfupsilon_i,
\ \forall\, j\in[m],
\end{equation}
where $\bfPsi_m$ is given by~\eqref{eq.simplex recursion}.
In that case, the following hold:
\begin{enumerate}
\renewcommand{\labelenumi}{(\alph{enumi})}
\item
$\bfupsilon_1=\bfphi_1$.
\item
$\bfupsilon_j\in\Span(\bfphi_i)_{i=1}^{j}$ for each $j\in[m-1]$.
\item
$\bfupsilon_{m-1}$ is a positive multiple of $\bfphi_{m-1}-\bfphi_m$.
\end{enumerate}
\end{lemma}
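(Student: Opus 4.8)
The plan is to reduce the entire lemma to one algebraic identity about the recursion matrix $\bfPsi_m$ of~\eqref{eq.simplex recursion}, together with one fact about its shape, both proved by a single induction on $m$. Concretely, I would first establish that
\begin{equation*}
\bfPsi_m^{\top}\bfPsi_m=\tfrac1{m-1}(m\brI-\brJ),
\end{equation*}
and that $\bfPsi_m$ is ``upper triangular with positive diagonal,'' meaning $\bfPsi_m(i,j)=0$ whenever $i>j$ while $\bfPsi_m(j,j)>0$ for every $j\in[m-1]$; note this already forces the first column of $\bfPsi_m$ to be $\bfdelta_1$. The base case $m=2$ is immediate from $\bfPsi_2=[\,1\ {-1}\,]$. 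For the inductive step one multiplies out the $2\times2$ block form~\eqref{eq.simplex recursion}: the triangular shape is inherited blockwise from that of $\bfPsi_{m-1}$, while the Gram identity reduces, using $\bfone\bfone^{\top}=\brJ$ and the inductive hypothesis $\bfPsi_{m-1}^{\top}\bfPsi_{m-1}=\tfrac1{m-2}((m-1)\brI-\brJ)$, to checking that the lower-right block $\bfone\bfone^{\top}+m(m-2)\bfPsi_{m-1}^{\top}\bfPsi_{m-1}$ equals $m(m-1)\brI-(m-1)\brJ$, after which the $2\times2$ block reassembles into $\tfrac1{m-1}(m\brI-\brJ)$. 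This induction is the only genuine computation in the proof, and is the step I expect to be the main, if routine, obstacle — essentially bookkeeping with the block structure.

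Granting this, the ``if'' direction is immediate: if $\bfphi_j=\sum_{i=1}^{m-1}\bfPsi_m(i,j)\bfupsilon_i$ for an orthonormal system $(\bfupsilon_i)_{i=1}^{m-1}$, then the Gram matrix of $(\bfphi_i)_{i=1}^{m}$ is $\bfPsi_m^{\top}\bfPsi_m=\tfrac1{m-1}(m\brI-\brJ)$, which is exactly the assertion that $(\bfphi_i)_{i=1}^{m}$ is a simplex. For the converse, suppose $(\bfphi_i)_{i=1}^{m}$ is a simplex, so its Gram matrix is $\bfG:=\tfrac1{m-1}(m\brI-\brJ)=\bfPsi_m^{\top}\bfPsi_m$, a positive semidefinite matrix of rank $m-1$. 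Then the columns of $\bfPsi_m$ span $\bbR^{m-1}$ (as $\rank(\bfPsi_m)=\rank(\bfPsi_m^{\top}\bfPsi_m)=m-1$), a space of dimension $\rank(\bfG)$, and $(\bfphi_i)_{i=1}^{m}$, viewed in $\Span(\bfphi_i)_{i=1}^{m}$, also has Gram matrix $\bfG$ and lives in a space of dimension $\rank(\bfG)=m-1$; so the uniqueness up to unitaries of sequences with a prescribed Gram matrix (noted earlier) yields a unitary $\bfV\colon\bbR^{m-1}\rightarrow\Span(\bfphi_i)_{i=1}^{m}$ with $\bfphi_j=\bfV\bfPsi_m\bfdelta_j$ for all $j\in[m]$. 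Taking $\bfupsilon_i:=\bfV\bfdelta_i$ then gives an orthonormal basis for $\Span(\bfphi_i)_{i=1}^{m}$ with $\bfphi_j=\sum_{i=1}^{m-1}\bfPsi_m(i,j)\bfupsilon_i$. (One could instead run Gram--Schmidt on $\bfphi_1,\dotsc,\bfphi_{m-1}$ and identify the coefficients with $\bfPsi_m$ via uniqueness of Cholesky factors, but the Gram-matrix route is shorter.)

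Finally I would read (a)--(c) off the representation together with the triangular shape. Part (a) is just that the first column of $\bfPsi_m$ is $\bfdelta_1$, whence $\bfphi_1=\bfupsilon_1$. For (b): since $\bfPsi_m(i,k)=0$ for $i>k$, the first $j$ columns of $\bfPsi_m$ are supported on rows $1,\dotsc,j$ and the $j\times j$ block they determine is triangular with nonzero diagonal, hence invertible; so $\bfdelta_j$ is a combination of those columns, and applying the synthesis map $\bfdelta_i\mapsto\bfupsilon_i$ shows $\bfupsilon_j\in\Span(\bfphi_1,\dotsc,\bfphi_j)$ (indeed $\Span(\bfupsilon_i)_{i=1}^{j}=\Span(\bfphi_i)_{i=1}^{j}$). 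For (c): the simplex relations give $\ip{\bfphi_k}{\bfphi_{m-1}-\bfphi_m}=0$ for $k\leq m-2$, and $\norm{\sum_{j=1}^{m}\bfphi_j}^2=\bfone^{\top}\bfG\bfone=0$ forces $\bfphi_m=-\sum_{k=1}^{m-1}\bfphi_k\in\Span(\bfphi_i)_{i=1}^{m-1}$. Hence $\bfphi_{m-1}-\bfphi_m$ lies in $\Span(\bfphi_i)_{i=1}^{m-1}$, is orthogonal to $\Span(\bfphi_i)_{i=1}^{m-2}=\Span(\bfupsilon_i)_{i=1}^{m-2}$ by (b), and is nonzero since $\norm{\bfphi_{m-1}-\bfphi_m}^2=2+\tfrac2{m-1}$, so it is a scalar multiple of $\bfupsilon_{m-1}$. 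Pairing with $\bfupsilon_{m-1}$ and using that orthogonality kills $\ip{\bfupsilon_{m-1}}{\bfphi_k}$ for $k\leq m-2$ gives $\ip{\bfupsilon_{m-1}}{\bfphi_m}=-\ip{\bfupsilon_{m-1}}{\bfphi_{m-1}}$, hence $\ip{\bfupsilon_{m-1}}{\bfphi_{m-1}-\bfphi_m}=2\ip{\bfupsilon_{m-1}}{\bfphi_{m-1}}=2\bfPsi_m(m-1,m-1)>0$; this positivity is precisely (c).
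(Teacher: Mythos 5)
Your proof is correct and takes essentially the same route as the paper: both directions come from factoring the synthesis map as $\bfPhi=\bfUpsilon\bfPsi_m$ via the Gram identity $\bfPsi_m^\top\bfPsi_m^{}=\tfrac1{m-1}(m\brI-\brJ)$ (which you prove by induction and the paper simply asserts), with (a)--(c) read off from the upper-triangular, positive-diagonal structure of $\bfPsi_m$. The only deviation is in (c), where the paper uses the inductive fact that the last two columns of $\bfPsi_m$ agree except for opposite-sign final entries, while you get the same conclusion from the simplex relations ($\sum_{j}\bfphi_j=\bfzero$ and orthogonality of $\bfphi_{m-1}-\bfphi_m$ to the earlier vectors); both are equally routine.
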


For a proof, see the appendix.
The interested reader can verify that for a given simplex $(\bfphi_i)_{i=1}^{m}$,
the associated orthonormal basis $(\bfupsilon_i)_{i=1}^{m-1}$ of Lemma~\ref{lem.simplex} is unique, and can be iteratively constructed by orthogonalizing  $(\bfphi_i)_{i=1}^{m-1}$ via the Gram--Schmidt process.

\renewcommand{\thesection}{\Roman{section}}
\section{Characterizing Radon--Hurwitz EITFFs}
\renewcommand{\thesection}{\arabic{section}}

In this section,
we characterize $\EITFF_\bbF(d,r,n)$ for which $d=2r$.
When $n\in\set{2,3}$, such EITFFs are trivial.
When $n\geq 3$, they equate to $\EITFF_\bbF((n-2)r,r,n)$ via Naimark complementation.
We begin with the following result, which characterizes the isometries of such EITFFs.

\begin{theorem}
\label{thm.RH EITFF isometries}
For any integers $r\geq 1$ and $n\geq 3$,
any $\EITFF_\bbF(2r,r,n)$ is equivalent to an EITFF for $\bbF^{2r}$ with isometries $(\bfPhi_i)_{i=1}^n$ of the form
\begin{equation}
\label{eq.RH EITFF isometries}
\bfPhi_i
=\left[\begin{array}{c}
\alpha\brI\\
\beta\bfB_i
\end{array}\right],
\ \forall\,i\in[n-1],
\quad
\bfPhi_n
=\left[\begin{array}{c}
\brI\\
\bfzero
\end{array}\right],
\end{equation}
where $\brI$ is the identity matrix in $\bbF^{r\times r}$,
$\alpha$ and $\beta$ are the scalars
\begin{equation}
\label{eq.RH EITFF parameters}
\alpha=\alpha(n):=\bigparen{\tfrac{n-2}{2n-2}}^{\frac12},
\quad
\beta=\beta(n):=\bigparen{\tfrac{n}{2n-2}}^{\frac12},
\end{equation}
and $(\bfB_i)_{i=1}^{n-1}$ are unitaries in $\bbF^{r\times r}$ that satisfy
\begin{equation}
\label{eq.RH simplex}
\bfB_i^*\bfB_j^{}+\bfB_j^*\bfB_i^{}=-\tfrac{2}{n-2}\brI,
\ \forall\,i,j\in[n-1],\,i\neq j.
\end{equation}
Conversely, for any such $(\bfB_i)_{i=1}^{n-1}$,
defining $(\bfPhi_i)_{i=1}^{n}$ by~\eqref{eq.RH EITFF isometries} and~\eqref{eq.RH EITFF parameters} yields the isometries of an $\EITFF_\bbF(2r,r,n)$ for $\bbF^{2r}$.
\end{theorem}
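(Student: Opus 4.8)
The plan is to prove the two directions separately, and for each to translate statements about the isometries $(\bfPhi_i)_{i=1}^n$ into statements about the cross-Gram matrices $(\bfPhi_i^*\bfPhi_j^{})_{i,j}$, since by item (iv) of Lemma~\ref{lem.equivalence} an EITFF is determined up to equivalence by these. Recall from the preliminaries that $(\bfPhi_i)_{i=1}^n$ gives the isometries of an $\EITFF_\bbF(2r,r,n)$ precisely when $\bfPhi_i^*\bfPhi_i^{}=\brI$ for all $i$ and~\eqref{eq.equi-isoclinic} holds with $\sigma^2=\frac{nr-d}{d(n-1)}=\frac{(n-2)r}{2r(n-1)}=\frac{n-2}{2n-2}=\alpha^2$.

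For the converse direction (which is the self-contained computational claim the statement ends with), I would take $(\bfB_i)_{i=1}^{n-1}$ as given and simply compute all cross-Gram matrices of the $(\bfPhi_i)_{i=1}^n$ defined by~\eqref{eq.RH EITFF isometries}--\eqref{eq.RH EITFF parameters}. First check normalization: $\bfPhi_n^*\bfPhi_n^{}=\brI$ is immediate, and for $i\in[n-1]$, $\bfPhi_i^*\bfPhi_i^{}=\alpha^2\brI+\beta^2\bfB_i^*\bfB_i^{}=(\alpha^2+\beta^2)\brI=\brI$ using $\alpha^2+\beta^2=\frac{n-2}{2n-2}+\frac{n}{2n-2}=1$ and unitarity of $\bfB_i$. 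Next, for distinct $i,j\in[n-1]$, $\bfPhi_i^*\bfPhi_j^{}=\alpha^2\brI+\beta^2\bfB_i^*\bfB_j^{}$; the equi-isoclinic condition requires $(\bfPhi_i^*\bfPhi_j^{})(\bfPhi_i^*\bfPhi_j^{})^*=\alpha^2\brI$, and expanding this product and invoking~\eqref{eq.RH simplex} together with the identity $\bfB_i^{}\bfB_j^*+\bfB_j^{}\bfB_i^*=-\frac{2}{n-2}\brI$ (which follows from~\eqref{eq.RH simplex} exactly as~\eqref{eq.RH set relations} follows from~\eqref{eq.RH property}, using that the $\bfB_i$ form a $\rho$-set after noting $\{\bfB_i\}$ spans a $\rho$-space) reduces it to a scalar identity in $\alpha,\beta,n$ that one verifies by direct substitution. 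Finally $\bfPhi_i^*\bfPhi_n^{}=\alpha\brI$ for $i\in[n-1]$, so $(\bfPhi_i^*\bfPhi_n^{})(\bfPhi_i^*\bfPhi_n^{})^*=\alpha^2\brI$ automatically. Since $2r=d<nr$ and $2r\ne r$, this EITFF is nontrivial for $n\ge3$. The main subtlety here is bookkeeping the scalar arithmetic, and being careful that~\eqref{eq.RH simplex} gives both $\bfB_i^*\bfB_j+\bfB_j^*\bfB_i$ and $\bfB_i\bfB_j^*+\bfB_j\bfB_i^*$; I would state the latter explicitly as a consequence.

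For the forward direction, start with an arbitrary $\EITFF_\bbF(2r,r,n)$; by the remarks after Lemma~\ref{lem.equivalence} we may assume it lives in $\bbF^{2r}$ with matrix isometries. Using the freedom $\bfPhi_i\mapsto\bfUpsilon\bfPhi_i\bfZ_i$ from item (iii), first apply a global unitary $\bfUpsilon$ and a right unitary $\bfZ_n$ to arrange $\bfPhi_n=\left[\begin{smallmatrix}\brI\\ \bfzero\end{smallmatrix}\right]$. Then $\sigma^2=\alpha^2$ means $\bfPhi_i^*\bfPhi_n^{}(\bfPhi_i^*\bfPhi_n^{})^*=\alpha^2\brI$, i.e. $\bfPhi_i^*\bfPhi_n^{}$ is a scaled unitary of scale $\alpha$ for each $i\in[n-1]$; writing $\bfPhi_i=\left[\begin{smallmatrix}\bfA_i\\ \bfB_i'\end{smallmatrix}\right]$ in $r\times r$ blocks, this says $\bfA_i$ is $\alpha$ times a unitary, so after right-multiplying $\bfPhi_i$ by a suitable unitary $\bfZ_i$ we get $\bfA_i=\alpha\brI$; normalization $\bfPhi_i^*\bfPhi_i^{}=\brI$ then forces $(\bfB_i')^*\bfB_i'=(1-\alpha^2)\brI=\beta^2\brI$, so $\bfB_i':=\beta\bfB_i$ with $\bfB_i$ unitary. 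This yields exactly the form~\eqref{eq.RH EITFF isometries}. It remains to derive~\eqref{eq.RH simplex}: apply~\eqref{eq.equi-isoclinic} to the pair $i,j\in[n-1]$, expand $\bfPhi_i^*\bfPhi_j^{}=\alpha^2\brI+\beta^2\bfB_i^*\bfB_j^{}$, and use $(\bfPhi_i^*\bfPhi_j^{})(\bfPhi_i^*\bfPhi_j^{})^*=\alpha^2\brI$; the $\brI$-coefficient cancels and the surviving relation, after dividing by $\beta^2$ and using $\alpha^2+\beta^2=1$, collapses to $\bfB_i^*\bfB_j^{}+\bfB_j^*\bfB_i^{}=-\frac{2\alpha^2}{\beta^2}\brI=-\frac{2}{n-2}\brI$, which is~\eqref{eq.RH simplex}. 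One should double-check that $\bfB_j^{}\bfB_i^*$ appearing after expansion is handled: write $(\bfB_i^*\bfB_j^{})(\bfB_i^*\bfB_j^{})^*=\bfB_i^*\bfB_j^{}\bfB_j^*\bfB_i^{}=\brI$ since each $\bfB_k$ is unitary, so the cross terms in the expansion of the product genuinely are $\alpha^2\beta^2(\bfB_i^*\bfB_j^{}+\bfB_j^*\bfB_i^{})$, confirming the reduction. The only real obstacle is making sure every use of the equivalence freedom in item (iii) is legitimate (each $\bfZ_i$ unitary, $\bfUpsilon$ unitary) and that fixing $\bfPhi_n$ first does not conflict with later adjustments of $\bfPhi_i$ for $i<n$ — it does not, since those $\bfZ_i$ act only on the $i$-th isometry and leave $\bfPhi_n$ untouched.
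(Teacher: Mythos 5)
Your proposal is correct in structure and follows essentially the same route as the paper's proof: normalize $\bfPhi_n$ to the form in~\eqref{eq.RH EITFF isometries} by equivalence, use equi-isoclinism against $\bfPhi_n$ with $\sigma^2=\alpha^2$ to see that each top block is $\alpha$ times a unitary and rotate it to $\alpha\brI$, then read off~\eqref{eq.RH simplex} from the cross-Gram expansion, with the converse obtained by reversing the computation. One arithmetic slip to fix in the forward direction: the expansion gives $\bfB_i^*\bfB_j^{}+\bfB_j^*\bfB_i^{}=\tfrac{\alpha^2-\alpha^4-\beta^4}{\alpha^2\beta^2}\,\brI$, and this coefficient equals $-\tfrac{2}{n-2}$ but not $-\tfrac{2\alpha^2}{\beta^2}$ (those two agree only when $n=4$), so the chain of equalities you wrote should be corrected even though your final relation is right. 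Also, in the converse the extra identity $\bfB_i^{}\bfB_j^*+\bfB_j^{}\bfB_i^*=-\tfrac{2}{n-2}\brI$ is never actually needed, since $(\bfPhi_i^*\bfPhi_j^{})(\bfPhi_i^*\bfPhi_j^{})^*$ involves only $\bfB_i^*\bfB_j^{}+\bfB_j^*\bfB_i^{}$ and $\bfB_i^*\bfB_j^{}\bfB_j^*\bfB_i^{}=\brI$.
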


\begin{proof}
Let $(\widehat{\bfPhi}_i)_{i=1}^n$ be isometries of some $\EITFF_\bbF(2r,r,n)$ $(\widehat{\calU}_i)_{i=1}^{n}$ for a space $\widehat{\calE}$.
Define $\bfPhi_n\in\bbF^{2r\times r}$ as in~\eqref{eq.RH EITFF isometries}.
Since $\bfPhi_n^*\bfPhi_n^{}=\brI=\widehat{\bfPhi}_n^*\widehat{\bfPhi}_n^{}$,
there is a unitary $\bfUpsilon_0$ from the image of $\bfPhi_n$ to that of $\widehat{\bfPhi}_n$ such that $\widehat{\bfPhi}_n=\bfUpsilon_0\bfPhi_n$.
Since $\widehat{\calE}$ has dimension $2r$ over $\bbF$,
$\bfUpsilon_0$ extends to a unitary $\bfUpsilon:\bbF^{2r}\rightarrow\widehat{\calE}$ such that $\widehat{\bfPhi}_n=\bfUpsilon\bfPhi_n$.
Letting $\bfOmega_i:=\bfUpsilon^*\widehat{\bfPhi}_i\in\bbF^{2r\times r}$ for all $i\in[n]$ thus yields isometries $(\bfOmega_i)_{i=1}^{n}$ of an (equivalent) EITFF for $\bbF^{2r}$ for which $\bfOmega_n^{}=\bfUpsilon^*\widehat{\bfPhi}_n=\bfPhi_n$.
As detailed in the previous section, this implies that
$\bfOmega_j^*\bfOmega_i^{}\bfOmega_i^*\bfOmega_j^{}=\sigma^2\brI$ for all distinct $i,j\in[n]$ where necessarily
\begin{equation}
\label{eq.pf of RH EITFF existence 1}
\sigma
=\bigbracket{\tfrac{nr-d}{d(n-1)}}^{\frac12}
=\bigbracket{\tfrac{nr-2r}{2r(n-1)}}^{\frac12}
=\bigparen{\tfrac{n-2}{2n-2}}^{\frac12}
=\alpha
>0.
\end{equation}
For each $i\in[n]$,
take $\bfOmega_{i,1},\bfOmega_{i,2}\in\bbF^{r\times r}$ such that
\begin{equation*}
\bfOmega_i
=\left[\begin{array}{c}
\bfOmega_{i,1}\\
\bfOmega_{i,2}
\end{array}\right],
\text{ and so }
\bfOmega_n^*\bfOmega_i^{}
=\left[\begin{array}{cc}\brI&\bfzero\end{array}\right]
\left[\begin{array}{c}
\bfOmega_{i,1}\\
\bfOmega_{i,2}
\end{array}\right]
=\bfOmega_{i,1}.
\end{equation*}
For any $i\in[n-1]$,
$\bfZ_i:=\alpha^{-1}\bfOmega_{i,1}$ is unitary since
\begin{equation*}
\bfZ_i^*\bfZ_i^{}
=\tfrac1{\alpha^2}\bfOmega_{i,1}^*\bfOmega_{i,1}^{}
=\tfrac1{\sigma^2}\bfOmega_i^*\bfOmega_n^{}\bfOmega_n^*\bfOmega_i^{}
=\brI.
\end{equation*}
Letting
$\bfPhi_i^{}:=\bfOmega_i^{}\bfZ_i^*$ for all $i\in[n-1]$ and $\bfZ_n:=\brI$,
it follows that
$\widehat{\bfPhi}_i
=\bfUpsilon\bfOmega_i
=\bfUpsilon\bfPhi_i\bfZ_i$ for all $i\in[n]$,
and so $(\bfPhi_i)_{i=1}^{n}$ is a sequence of isometries that is equivalent to $(\widehat{\bfPhi}_i)_{i=1}^{n}$.
Next,
for any $i\in[n-1]$,
let $\bfB_i:=\beta^{-1}\bfOmega_{i,2}^{}\bfZ_i^*\in\bbF^{r\times r}$ and note
\begin{equation*}
\bfPhi_i
=\bfOmega_i^{}\bfZ_i^*
=\left[\begin{array}{c}
\bfOmega_{i,1}\\
\bfOmega_{i,2}
\end{array}\right]\bfZ_i^*
=\left[\begin{array}{r}
\alpha\bfZ_i^{}\\
\beta\bfB_i\bfZ_i
\end{array}\right]\bfZ_i^*
=\left[\begin{array}{c}
\alpha\brI\\
\beta\bfB_i
\end{array}\right].
\end{equation*}
Thus, for any $i,j\in[n-1]$,
\begin{equation}
\label{eq.pf of RH EITFF existence 2}
\bfPhi_i^*\bfPhi_j^{}
=\left[\begin{array}{cc}\alpha\brI&\beta\bfB_i^*\end{array}\right]
\left[\begin{array}{c}\alpha\brI\\\beta\bfB_j\end{array}\right]
=\alpha^2\brI+\beta^2\bfB_i^*\bfB_j^{}.
\end{equation}
Since $\alpha^2+\beta^2=1$,
taking $i=j$ in~\eqref{eq.pf of RH EITFF existence 2} gives that each $\bfB_i$ is unitary.
Moreover, since $(\bfPhi_i)_{i=1}^{n}$ is a sequence of isometries of an $\EITFF_{\bbF}(2r,r,n)$,
\eqref{eq.pf of RH EITFF existence 1} and~\eqref{eq.pf of RH EITFF existence 2} imply
\begin{align*}
\alpha^2\brI
&=\bfPhi_j^*\bfPhi_i^{}\bfPhi_i^*\bfPhi_j^{}\\
&=(\alpha^2\brI+\beta^2\bfB_j^*\bfB_i^{})(\alpha^2\brI+\beta^2\bfB_i^*\bfB_j^{})\\
&=(\alpha^4+\beta^4)\brI+\alpha^2\beta^2(\bfB_i^*\bfB_j^{}+\bfB_j^*\bfB_i^{})
\end{align*}
for any distinct $i,j\in[n-1]$.
Since $\alpha^2-\alpha^4-\beta^4=-\tfrac{2}{n-2}\alpha^2\beta^2$ by~\eqref{eq.RH EITFF parameters},
it follows that $(\bfB_i)_{i=1}^{n}$ satisfies~\eqref{eq.RH simplex}.

Conversely, if $(\bfB_i)_{i=1}^{n-1}$ is any sequence of unitaries in $\bbF^{r\times r}$ that satisfies~\eqref{eq.RH simplex},
a straightforward reversal of these arguments gives that~\eqref{eq.RH EITFF isometries}
and~\eqref{eq.RH EITFF parameters} define isometries $(\bfPhi_i)_{i=1}^{n}$ in $\bbF^{2r\times r}$ that satisfy $\alpha^2\brI=\bfPhi_j^*\bfPhi_i^{}\bfPhi_i^*\bfPhi_j^{}$ for any distinct $i,j\in[n]$,
and so whose images form an $\EITFF_\bbF(2r,r,n)$ for $\bbF^{2r}$.
\end{proof}

By Theorem~\ref{thm.RH EITFF isometries},
the existence of an $\EITFF_\bbF(2r,r,n)$ equates to that of unitaries $(\bfB_i)_{i=1}^{n-1}$ in $\bbF^{r\times r}$ that satisfy~\eqref{eq.RH simplex}.
The latter equates to a simplex in a Radon--Hurwitz space:

\begin{theorem}
\label{thm.RH EITFF existence}
Let $r\geq 1$ and $n\geq 3$ be integers.
\begin{enumerate}
\renewcommand{\labelenumi}{(\alph{enumi})}
\item
Matrices $(\bfB_i)_{i=1}^{n-1}$ in $\bbF^{r\times r}$ are unitary and satisfy~\eqref{eq.RH simplex} if and only if they form a simplex in a $\rho$-space.
Defining $\bfPsi_{n-1}$ by~\eqref{eq.simplex recursion},
this occurs if and only if there exists a $\rho$-orthonormal sequence $(\bfC_i)_{i=1}^{n-2}$ in $\bbF^{r\times r}$ such that
\begin{equation}
\label{eq.RH simplex from RH orthonormal}
\bfB_j=\sum_{i=1}^{n-2}\bfPsi_{n-1}(i,j)\bfC_i,
\ \forall\,j\in[n-1].
\end{equation}

\item
An $\EITFF_\bbF(2r,r,n)$ exists if and only if $n\leq \rho_\bbF(r)+2$,
where $\rho_\bbF(r)$ is given by~\eqref{eq.RH number}.
\end{enumerate}
\end{theorem}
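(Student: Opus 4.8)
The overall strategy is to reduce the existence of an $\EITFF_\bbF(2r,r,n)$ — which by Theorem~\ref{thm.RH EITFF isometries} is equivalent to the existence of unitaries $(\bfB_i)_{i=1}^{n-1}$ satisfying~\eqref{eq.RH simplex} — first to the existence of a simplex in a $\rho$-space, then via Lemma~\ref{lem.simplex} to the existence of a $\rho$-orthonormal sequence of length $n-2$, and finally invoke the definition of $\rho_\bbF(r)$ as the largest such length. Part~(b) falls out of part~(a) once this chain is set up.

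\medskip

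\textbf{Proof of part~(a).} First I would observe that~\eqref{eq.RH simplex} says precisely that $(\bfB_i)_{i=1}^{n-1}$ is a simplex with respect to the Radon--Hurwitz inner product~\eqref{eq.RH inner product}: each $\bfB_i$ is unitary, so $\ip{\bfB_i}{\bfB_i}_\rho = 1$, and $\bfB_i^*\bfB_j^{}+\bfB_j^*\bfB_i^{} = -\tfrac{2}{n-2}\brI$ together with the trace identity~\eqref{eq.RH inner product} forces $\ip{\bfB_i}{\bfB_j}_\rho = -\tfrac1{n-2} = -\tfrac1{(n-1)-1}$ for $i\neq j$. Conversely, I must argue that if $(\bfB_i)_{i=1}^{n-1}$ is a simplex in a $\rho$-space $\calS$ then~\eqref{eq.RH simplex} holds: unit-norm elements of a $\rho$-space are unitary (stated in the excerpt), and since $\calS$ is a $\rho$-space each pair $\bfB_i,\bfB_j$ satisfies~\eqref{eq.RH property} for some $z\in\bbF$ which by~\eqref{eq.RH inner product} must equal $\ip{\bfB_i}{\bfB_j}_\rho = -\tfrac1{n-2}$; substituting gives~\eqref{eq.RH simplex}. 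One technical point: a priori the $(\bfB_i)$ need only be a $\rho$-set, but the real span of a $\rho$-set is a $\rho$-space, so ``simplex in a $\rho$-space'' is the right phrasing. For the second equivalence, apply Lemma~\ref{lem.simplex} directly: a sequence $(\bfB_i)_{i=1}^{n-1}$ in the real Hilbert space $\calS$ (under~\eqref{eq.RH inner product}) is a simplex if and only if there is a $\rho$-orthonormal basis $(\bfC_i)_{i=1}^{n-2}$ of $\Span(\bfB_i)$ with $\bfB_j = \sum_{i=1}^{n-2}\bfPsi_{n-1}(i,j)\bfC_i$, which is exactly~\eqref{eq.RH simplex from RH orthonormal}. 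The only thing to check is that a $\rho$-orthonormal sequence in $\bbF^{r\times r}$ is the same as an orthonormal sequence in a $\rho$-space — but that was established in the excerpt (the members of a $\rho$-orthonormal sequence form a $\rho$-set, hence span a $\rho$-space).

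\medskip

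\textbf{Proof of part~(b).} Combining Theorem~\ref{thm.RH EITFF isometries} with part~(a): an $\EITFF_\bbF(2r,r,n)$ exists iff there exist unitaries $(\bfB_i)_{i=1}^{n-1}$ in $\bbF^{r\times r}$ satisfying~\eqref{eq.RH simplex}, iff there exists a $\rho$-orthonormal sequence $(\bfC_i)_{i=1}^{n-2}$ in $\bbF^{r\times r}$ — using that given such a $\rho$-orthonormal sequence, \eqref{eq.RH simplex from RH orthonormal} produces the required $\bfB_i$ (here one needs that $\bfPsi_{n-1}$ has unit-norm columns with pairwise inner product $-\tfrac1{n-2}$, which is Lemma~\ref{lem.simplex} applied to a standard simplex). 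By the definition of the Radon--Hurwitz number, $\bbF^{r\times r}$ contains a $\rho$-orthonormal sequence of length $m$ iff $m \leq \rho_\bbF(r)$ (the excerpt notes $\rho_\bbF(r)$ is the largest such $m$, and that any $\rho$-space of a given dimension contains one of every lesser dimension). Hence the condition is $n-2 \leq \rho_\bbF(r)$, i.e.\ $n \leq \rho_\bbF(r)+2$, with $\rho_\bbF(r)$ as in~\eqref{eq.RH number}.

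\medskip

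\textbf{Main obstacle.} The substantive content is almost entirely packaged into the earlier lemmas, so the ``hard part'' is bookkeeping: carefully matching the abstract real-Hilbert-space statement of Lemma~\ref{lem.simplex} (vectors, orthonormal bases) with the concrete matrix setting ($\rho$-spaces, $\rho$-orthonormal sequences, the $\rho$-inner product), and making sure the two notions of equivalence and the dimension count $n-1$ vs.\ $n-2$ line up. In particular I must be careful that in the ``simplex $\Rightarrow$~\eqref{eq.RH simplex}'' direction one genuinely uses that $\calS$ is a $\rho$-\emph{space} (so that~\eqref{eq.RH property} is available for the pair $\bfB_i,\bfB_j$) rather than merely that the $\bfB_i$ are individually scaled unitaries — this is the one place where a careless argument would be circular. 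No single step should require a long computation; the formula~\eqref{eq.RH number} is cited wholesale from the classical literature.
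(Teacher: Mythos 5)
Your proposal is correct and follows essentially the same route as the paper: identify~\eqref{eq.RH simplex} with the simplex condition under the $\rho$-inner product~\eqref{eq.RH inner product} via~\eqref{eq.RH set relations}, note that unitaries satisfying~\eqref{eq.RH simplex} form a $\rho$-set whose real span is a $\rho$-space, invoke Lemma~\ref{lem.simplex} for the equivalence with~\eqref{eq.RH simplex from RH orthonormal}, and deduce (b) from Theorem~\ref{thm.RH EITFF isometries} together with the fact that $\rho_\bbF(r)$ is the maximal length of a $\rho$-orthonormal sequence. The bookkeeping points you flag (the $\rho$-set versus $\rho$-space distinction and the matching of Lemma~\ref{lem.simplex} to the matrix setting) are exactly the ones the paper's proof relies on.
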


\begin{proof}
For (a), recall that vectors $(\bfphi_i)_{i=1}^{m}$ in a real space form a simplex if and only if they are unit norm and satisfy $\ip{\bfphi_i}{\bfphi_j}=-\frac1{m-1}$ for any distinct $i,j\in[m]$.
Thus, by~\eqref{eq.RH set relations},
members $(\bfB_i)_{i=1}^{n-1}$ of a $\rho$-space in $\bbF^{r\times r}$ form a simplex if and only if they are unitary and satisfy~\eqref{eq.RH simplex}.
Moreover, any unitaries $(\bfB_i)_{i=1}^{n-1}$ in $\bbF^{r\times r}$ satisfying~\eqref{eq.RH simplex} form a $\rho$-set,
and so their real span is a $\rho$-space.
The remainder of (a) then follows immediately from Lemma~\ref{lem.simplex}.

For (b), recall from Section~II that $\bbF^{r\times r}$ contains a \mbox{$\rho$-orthonormal} sequence $(\bfC_i)_{i=1}^{m}$ if and only if $m\leq\rho_\bbF(r)$.
In light of (a) and Theorem~\ref{thm.RH EITFF isometries},
an $\EITFF_\bbF(2r,r,n)$ thus exists if and only if $n-2\leq\rho_\bbF(r)$.
\end{proof}

Henceforth, we call any $\EITFF_\bbF(d,r,n)$ with $d=2r$ a \textit{Radon--Hurwitz EITFF},
and call $(\bfB_i)_{i=1}^{n-1}$ in $\bbF^{r\times r}$ a \mbox{\textit{$\rho$-simplex}} if its members are unitaries that satisfy~\eqref{eq.RH simplex}.
Applying Theorem~\ref{thm.RH EITFF existence}(a) to the explicit $\rho$-orthonormal sequences $(\bfC_i)_{i=1}^{m}$ given in Section~II---see Lemma~\ref{lem.complex RH} and subsequent discussion---yields explicit $\rho$-simplices of every possible size.
Applying Theorem~\ref{thm.RH EITFF isometries} to the latter thus yields Radon--Hurwitz EITFFs of every possible size.
For an instance of this process,
consider the following example.

\begin{example}
\label{ex.EITFF(4,2,4)}
By Theorem~\ref{thm.RH EITFF existence},
combining the $\rho$-orthonormal sequence $(\bfC_1,\bfC_2)=(\brI,\brR)$ in $\bbR^{2\times 2}$ from~\eqref{eq.def of T,M,R} via the matrix
\begin{equation*}
\bfPsi_3=\left[\begin{array}{ccc}
1&-\tfrac12&-\tfrac12\smallskip\\
0&\phantom{-}\frac{\sqrt{3}}2&-\frac{\sqrt{3}}2
\end{array}\right]
\end{equation*}
from~\eqref{eq.simplex recursion} yields a $\rho$-simplex
$(\bfB_1,\bfB_2,\bfB_3)$ in $\bbR^{2\times 2}$.
Here,
$\bfB_1=\brI$,
$\bfB_2=-\tfrac12\brI+\frac{\sqrt{3}}2\brR$ and
$\bfB_3=-\tfrac12\brI-\frac{\sqrt{3}}2\brR$.
Applying Theorem~\ref{thm.RH EITFF isometries} to it yields the following four explicit $4\times 2$ isometries of an $\EITFF_\bbR(4,2,4)$:
\begin{equation*}
\hspace{-1pt}\left[\begin{array}{c}\frac1{\sqrt{3}}\brI\smallskip\\\frac{\sqrt{2}}{\sqrt{3}}\brI\end{array}\right],
\left[\begin{array}{l}\phantom{-}\frac1{\sqrt{3}}\brI\smallskip\\-\tfrac1{\sqrt{6}}\brI+\tfrac1{\sqrt{2}}\brR\end{array}\right],
\left[\begin{array}{l}\phantom{-}\frac1{\sqrt{3}}\brI\smallskip\\-\tfrac1{\sqrt{6}}\brI-\tfrac1{\sqrt{2}}\brR\end{array}\right],
\left[\begin{array}{c}\brI\medskip\\\bfzero\end{array}\right].
\end{equation*}
\end{example}

\begin{remark}
Theorem~\ref{thm.RH EITFF existence} generalizes one portion of the seminal paper~\cite{LemmensS73b} on real equi-isoclinic subspaces by Lemmens and Seidel.
Though~\cite{LemmensS73b} predates ``tight fusion frame" terminology,
its Theorems~3.6 and~5.1 together imply that an $\EITFF_\bbR(2r,r,n)$ exists if and only if $n\leq\rho_\bbR(r)+2$.
From the perspective of this literature,
the novel contributions of Theorems~\ref{thm.RH EITFF isometries} and~\ref{thm.RH EITFF existence} are that they generalize these facts to the complex setting and,
unlike the Gram-matrix-based proof of Theorem~5.1 of~\cite{LemmensS73b},
they provide explicit isometries for any Radon--Hurwitz EITFF.
In general,
certain applications of EITFFs such as compressed sensing~\cite{EldarKB10,CalderbankTX15} rely on having explicit isometries for them,
and some of these applications such as radar~\cite{HermanS09} naturally arise in the complex setting.
\end{remark}

\begin{remark}
By Theorem~\ref{thm.RH EITFF existence},
an $\EITFF_\bbC(2r,r,n(r))$ only exists for an infinite number of $r$ if $n(r)$ is $\mathcal{O}(\log(r))$.
In contrast, Lemma~\ref{lem.Gerzon} only requires $n(r)\leq 3r^2+1$.
One might suspect that $\EITFF_\bbF(d,r,n)$ with large $r$ and $n$ are rare since all $r\binom{n}{2}$ of their principal angles must be equal.
Theorem~\ref{thm.RH EITFF existence} gives some formal credence to that intuition.
For context, the bound of Lemma~\ref{lem.Gerzon} is approached infinitely often when $r=1$ and $\bbF=\bbC$ since an $\EITFF_\bbC(q+1,1,q^2+q+1)$ exists for any prime power $q$~\cite{StrohmerH03,XiaZG05}.
\end{remark}

\begin{remark}
Theorems~\ref{thm.RH EITFF isometries} and~\ref{thm.RH EITFF existence} generalize to the setting where $\bbF$ is the quaternions $\bbH$;
see~\cite{AdamsLP66} for the appropriate extension of~\eqref{eq.RH number}.
For example, $(1,\rmi,\rmj,\rmk)$ is $\rho$-orthonormal in $\bbH$,
and applying Theorem~\ref{thm.RH EITFF isometries} with resulting $4$- and $5$-member \mbox{$\rho$-simplices} in $\bbH$ yields $5$- and $6$-vector ETFs in $\bbH^2$,
respectively;
ETFs with such parameters are known~\cite{CohnKM16,EtTaoui20,Waldron20}.
Hoggar's method~\cite{Hoggar77} converts any $\EITFF_\bbF(d,r,n)$ in which $\bbF$ is $\bbC$ or $\bbH$ into an $\EITFF_{\widehat{\bbF}}(2d,2r,n)$ where $\widehat{\bbF}$ is $\bbR$ or $\bbC$, respectively.
This method---which relates to the matrices of~\eqref{eq.def of T,M,R}, the Pauli spin matrices, and $\bbC^{2\times 2}$ representations of $\bbH$---converts any quaternionic Radon--Hurwitz EITFF into a complex one with a doubled ``$r$" parameter.
Since Theorems~\ref{thm.RH EITFF isometries} and~\ref{thm.RH EITFF existence} already give a full characterization of complex Radon--Hurwitz EITFFs, these are not novel.
Because of this, and the subtleties of noncommutative scalar multiplication~\cite{CohnKM16,EtTaoui20,Waldron20},
we leave quaternionic Radon--Hurwitz EITFFs to the interested reader.
\end{remark}

\renewcommand{\thesection}{\Roman{section}}
\section{Symmetries of Radon--Hurwitz EITFFs}
\renewcommand{\thesection}{\arabic{section}}

Let $d\geq r\geq 1$ be integers,
let $\calE$ be a $d$-dimensional space over $\bbF\in\set{\bbR,\bbC}$ and, for each $i\in[n]$,
let $\bfPhi_i$ be an isometry from $\bbF^r$ onto a subspace $\calU_i$ of $\calE$ with corresponding projection $\bfPi_i:=\bfPhi_i^{}\bfPhi_i^*$.
A permutation $\sigma$ of $[n]$ is a \textit{symmetry} of $(\calU_i)_{i=1}^{n}$ if $(\calU_{\sigma(i)})_{i=1}^{n}$ is equivalent to it in the sense of Lemma~\ref{lem.equivalence}, that is, if there exists a unitary $\bfUpsilon_{\sigma}:\calE\rightarrow\calE$ such that
\begin{equation}
\label{eq.symmetry subspace equivalence}
\bfPi_{\sigma(i)}=\bfUpsilon_{\sigma}^{}\bfPi_i\bfUpsilon_{\sigma}^*
\end{equation}
for all $i\in[n]$,
or equivalently, unitaries $(\bfZ_{\sigma,i})_{i=1}^n$ in $\bbF^{r\times r}$ such that
$\bfPhi_{\sigma(i)}^*\bfPhi_{\sigma(j)}^{}
=\bfZ_{\sigma,i}^*\bfPhi_i^*\bfPhi_j^{}\bfZ_{\sigma,j}^{}$
for all $i,j\in[n]$.
The symmetries of $(\calU_i)_{i=1}^{n}$ form a subgroup of the symmetric group $\rmS_n$ on $[n]$.
We say that $(\calU_i)_{i=1}^{n}$ has \textit{total} or \textit{alternating} symmetry when its symmetry group equals $\rmS_n$ or its alternating subgroup $\rmA_n$, respectively.
Equivalent sequences of subspaces have the same symmetry group.
When $(\calU_i)_{i=1}^{n}$ is a tight fusion frame for $\calE$ and $nr>d$,
its symmetry group equals that of its Naimark complements:
taking isometries \smash{$(\widetilde{\bfPhi}_i)_{i=1}^{n}$} that satisfy~\eqref{eq.Naimark},
\begin{align*}
\widetilde{\bfPhi}_{\sigma(i)}^*\widetilde{\bfPhi}_{\sigma(j)}^{}
&=-\tfrac{d}{nr-d}\bfPhi_{\sigma(i)}^*\bfPhi_{\sigma(j)}^{}\\
&=-\tfrac{d}{nr-d}\bfZ_{\sigma,i}^*\bfPhi_i^*\bfPhi_j^{}\bfZ_{\sigma,j}\\
&=\bfZ_{\sigma,i}^*\widetilde{\bfPhi}_i^*\widetilde{\bfPhi}_j^{}\bfZ_{\sigma,j},
\ \forall\,i,j\in[n],\, i\neq j.
\end{align*}

Recall that an $\EITFF_\bbF(d,r,n)$ is trivial if $d$ is $r$, $(n-1)r$ or $nr$.
All such EITFFs have total symmetry.
As a partial converse to this fact, any totally symmetric $\EITFF_\bbF(d,r,n)$ with $r=1$ is trivial in this sense~\cite{King19}.
That said, nontrivial EITFFs with total symmetry exist.
In fact, \cite{FickusIJM22,FickusIJM24} constructs an infinite number of them using classical representation theory,
including an $\EITFF_\bbR(5,2,5)$, cf.~\cite{EtTaoui06}.
Below, we give the second-ever construction of an infinite family of them, showing that many but not all Radon--Hurwitz EITFFs are totally symmetric.
Our proof relies on the explicit formulation of their isometries given in Theorem~\ref{thm.RH EITFF isometries}.
For context, none of the EITFFs of~\cite{FickusIJM22,FickusIJM24} are of Radon--Hurwitz ($d=2r$) type.

\begin{theorem}\
\label{thm.total symmetry suf}
\begin{enumerate}
\renewcommand{\labelenumi}{(\alph{enumi})}
\item
Let $(\bfPhi_i)_{i=1}^{n}$ be isometries of an $\EITFF_\bbF(2r,r,n)$ of the form given in Theorem~\ref{thm.RH EITFF isometries}.
This EITFF is totally symmetric if each $\bfB_i$ is skew-Hermitian,
i.e., $\bfB_i^*=-\bfB_i$.

\item
A skew-Hermitian $\rho$-simplex $(\bfB_i)_{i=1}^{n-1}$ in $\bbF^{r\times r}$ exists if and only if $n\leq \rho_\bbF(r)+1$, where $\rho_\bbF(r)$ is given by~\eqref{eq.RH number}.

\item
Any $\EITFF_\bbF(2r,r,n)$ has alternating or total symmetry.
\end{enumerate}
\end{theorem}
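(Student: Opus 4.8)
\emph{Plan.}
I would prove parts (b), (a), (c) roughly in that order. By Theorem~\ref{thm.RH EITFF isometries}, parts (a) and (c) reduce at once to statements about the associated $\rho$-simplex $(\bfB_i)_{i=1}^{n-1}$, since an $\EITFF_\bbF(2r,r,n)$ has the isometries of~\eqref{eq.RH EITFF isometries}, equivalent sequences of subspaces have the same symmetry group, and one may compute all symmetries through the cross-Gram criterion of Lemma~\ref{lem.equivalence}(iv). Part (b) is essentially a dimension count that I would dispatch first.

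For (b): if $(\bfB_i)_{i=1}^{n-1}$ is a skew-Hermitian $\rho$-simplex, then by Theorem~\ref{thm.RH EITFF existence}(a) and Lemma~\ref{lem.simplex} there is a $\rho$-orthonormal $(\bfC_i)_{i=1}^{n-2}$ with $\Span_\bbR(\bfC_i)_{i=1}^{n-2}=\Span_\bbR(\bfB_i)_{i=1}^{n-1}$; the latter consists of skew-Hermitian matrices, so each $\bfC_i$ is a skew-Hermitian unitary, whence $(\brI,\bfC_1,\dotsc,\bfC_{n-2})$ is $\rho$-orthonormal of length $n-1$ and $n-1\leq\rho_\bbF(r)$. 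Conversely, given a $\rho$-orthonormal sequence of length $n-1$, replace it by an equivalent one beginning with $\brI$ (left-multiply by the adjoint of its first term); its last $n-2$ terms are then skew-Hermitian unitaries, and forming $\bfB_j$ via~\eqref{eq.RH simplex from RH orthonormal} produces a $\rho$-simplex that is skew-Hermitian because $\bfPsi_{n-1}$ is real.

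For (a): set $\bfV_i:=\bigparen{\tfrac{n(n-2)}{(n-1)^2}}^{\frac12}\bfB_i+\tfrac1{n-1}\brI$ for $i\in[n-1]$ and $\bfV_n:=-\brI$. Using $\bfB_i^*=-\bfB_i$ together with~\eqref{eq.RH simplex}, a direct computation shows each $\bfV_i$ is unitary and $\bfV_i^*\bfV_j^{}+\bfV_j^*\bfV_i^{}=-\tfrac2{n-1}\brI$ for distinct $i,j$; by~\eqref{eq.RH inner product} and~\eqref{eq.RH set relations} this makes $(\bfV_i)_{i=1}^{n}$ a simplex in a $\rho$-space. Next I would check that the EITFF with isometries~\eqref{eq.RH EITFF isometries} is equivalent to the one with isometries $\bfPsi_i:=\tfrac1{\sqrt2}\bigbracket{\begin{smallmatrix}\brI\\\bfV_i\end{smallmatrix}}$, by verifying the identity of Lemma~\ref{lem.equivalence}(iv) with the explicit unitaries $\bfZ_n:=\brI$ and $\bfZ_j:=\tfrac1{(2n-2)^{1/2}}\bigparen{(n-2)^{1/2}\brI-n^{1/2}\bfB_j}$ for $j\in[n-1]$ (unitary precisely because $\bfB_j^*=-\bfB_j$). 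Finally I would establish total symmetry of $(\bfPsi_i)_{i=1}^n$: writing $\bfV_j=\sum_k\bfPsi_n(k,j)\bfC_k$ as in Lemma~\ref{lem.simplex} for a $\rho$-orthonormal $(\bfC_k)_{k=1}^{n-1}$, permuting the $\bfV_j$ permutes the columns of $\bfPsi_n$, which (since the simplex Gram matrix commutes with permutations) replaces $(\bfC_k)$ by an orthogonally mixed, hence still $\rho$-orthonormal, sequence; and two-sided unitary multiplication of the $\rho$-orthonormal sequence induces equivalence of the resulting $\bfPsi$-EITFF, while by the classical Radon--Hurwitz rigidity (Lemma~\ref{lem.complex RH} and its real counterpart) all $\rho$-orthonormal sequences of a fixed length $\geq3$ in $\bbF^{r\times r}$ are equivalent in this sense. (The case $n=3$ is trivial.)

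For (c): reduce via Theorem~\ref{thm.RH EITFF isometries} to a possibly non-skew-Hermitian $\rho$-simplex and show the symmetry group contains $\rmA_n$. Since $\rmA_n$ is generated by the $3$-cycles $(i,i+1,i+2)$, $i\in[n-2]$, it suffices to realize each as a symmetry. For such a cycle contained in $\{1,\dotsc,n-1\}$, the equations of Lemma~\ref{lem.equivalence}(iv) force all conjugating unitaries to equal one $\bfZ$, and (writing $\bfB_k^*\bfB_l^{}=(\bfB_m^*\bfB_k^{})^*(\bfB_m^*\bfB_l^{})$ for a fixed index $m$ of the cycle) the condition becomes that a single conjugation $\bfA\mapsto\bfZ^*\bfA\bfZ$ implements the cycle on the unitaries $\bfB_m^*\bfB_\ell$, which one builds from the Radon--Hurwitz structure using that these unitaries differ from $-\tfrac1{n-2}\brI$ by skew-Hermitian matrices forming a rescaled smaller $\rho$-simplex. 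For the one remaining cycle $(n-2,n-1,n)$, which moves the distinguished index $n$, the conjugating unitaries need not coincide and I would exhibit them explicitly via the same normalization. Since the cycles inside $\{1,\dotsc,n-1\}$ generate a copy of $\rmA_{n-1}$, and $\rmA_{n-1}$ is maximal in $\rmA_n$, adjoining any one even permutation that moves $n$ forces the symmetry group to contain all of $\rmA_n$. The main obstacle is exactly this last construction of the conjugators when the $\bfB_i$ are not skew-Hermitian: here there is no globally symmetric model like $\tfrac1{\sqrt2}\bigbracket{\begin{smallmatrix}\brI\\\bfV_i\end{smallmatrix}}$ to fall back on, so one must either exploit the recursive reduction of $\rho$-simplices or adapt explicit conjugators to the canonical forms of Radon--Hurwitz data, and this is also the point at which total symmetry can genuinely fail, leaving only the alternating symmetry asserted in (c).
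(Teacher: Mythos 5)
Your part (b) is fine and is essentially the paper's own argument (pass to a skew-Hermitian $\rho$-orthonormal basis of the span and adjoin $\brI$; conversely normalize a length-$(n-1)$ $\rho$-orthonormal sequence so its first term is $\brI$). The gap is in (a), and, by your own admission, in (c). In (a), the passage to the model $\bfPsi_i=\tfrac1{\sqrt2}\bigl[\begin{smallmatrix}\brI\\\bfV_i\end{smallmatrix}\bigr]$ and the equivalence via your explicit $\bfZ_j$ do check out, but the final step rests on the claim that any two $\rho$-orthonormal sequences of a given length $m\geq3$ in $\bbF^{r\times r}$ are equivalent under $(\bfC_i)_i\mapsto(\bfU^*\bfC_i^{}\bfV)_i$. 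Lemma~\ref{lem.complex RH} gives only a block-diagonal reduction, not uniqueness, and the claim is false: for the paper's own sequence $(\bfC_1,\bfC_2,\bfC_3,\bfC_4)=(\brI,\brR,\rmi\brM,\rmi\brT)$ in $\bbC^{2\times2}$ one computes $\bfC_1^*\bfC_2^{}\bfC_3^*\bfC_4^{}=\brI$, whereas after interchanging $\bfC_2$ and $\bfC_3$ this product equals $-\brI$; under two-sided multiplication the product transforms as $\bfG\mapsto\bfV^*\bfG\bfV$, so a scalar value $\pm\brI$ is an invariant and the two orderings are inequivalent. (A real example of the same flavor: left- versus right-multiplication by $\rmi,\rmj,\rmk$ on the quaternions, viewed in $\bbR^{4\times4}$.) Such failures arise exactly for odd reorderings of maximal-length sequences, i.e.\ exactly in the boundary case $n=\rho_\bbF(r)+1$ and exactly for transpositions, which is the only content separating total from alternating symmetry; that this implementation question is delicate is precisely why Theorems~\ref{thm.total symmetry complex} and~\ref{thm.total symmetry real} are nontrivial, so invoking ``rigidity'' here begs the question. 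The paper sidesteps this by constructing, from the skew-Hermitian hypothesis, an explicit conjugator for each transposition: $\bfUpsilon_\sigma=\alpha\bigl[\begin{smallmatrix}\bfB_j-\bfB_k&\bfzero\\\bfzero&\bfB_k-\bfB_j\end{smallmatrix}\bigr]$ for $\sigma=(jk)$ with $k<n$, and $\bfUpsilon_\sigma=\bigl[\begin{smallmatrix}\alpha\bfB_j&\beta\brI\\-\beta\brI&-\alpha\bfB_j\end{smallmatrix}\bigr]$ for $\sigma=(jn)$, verifying~\eqref{eq.symmetry subspace equivalence} directly from $\bfB_i^2=-\brI$ and $\bfB_i\bfB_j+\bfB_j\bfB_i=\tfrac2{n-2}\brI$; note these conjugators do not act on the cross-Gram data through a single uniform $\bfZ$, so even a repaired version of your argument would need more than equivalence of the two $\rho$-orthonormal sequences.

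For (c) your plan is not a proof: everything hinges on producing the conjugating unitaries for an arbitrary, non-skew-Hermitian $\rho$-simplex, and you construct neither the single unitary $\bfZ$ implementing a $3$-cycle inside $[n-1]$ nor any conjugator for a cycle moving $n$; the maximality of $\rmA_{n-1}$ in $\rmA_n$ does not reduce that burden, since at least one even permutation moving $n$ must still be realized. The paper's route is genuinely different and is the missing idea: given an arbitrary $\EITFF_\bbF(2\hat{r},\hat{r},n)$ built from $(\widehat{\bfB}_i)_{i=1}^{n-1}$, it forms the skew-Hermitian unitaries $\bfB_i=\bigl[\begin{smallmatrix}\bfzero&-\widehat{\bfB}_i^*\\\widehat{\bfB}_i^{}&\bfzero\end{smallmatrix}\bigr]$ in dimension $r=2\hat{r}$, which satisfy the $\rho$-simplex relations by~\eqref{eq.RH set relations}; part (a) then yields a totally symmetric $\EITFF_\bbF(4\hat{r},2\hat{r},n)$, and after conjugation by a block permutation each of its projections is block diagonal with the original projection $\widehat{\bfPi}_i$ as its upper-left block, while each transposition conjugator becomes block anti-diagonal. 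Hence the product of the conjugators of any two transpositions is block diagonal, and its upper-left $2\hat{r}\times2\hat{r}$ block conjugates $\widehat{\bfPi}_i$ to $\widehat{\bfPi}_{\sigma_1(\sigma_2(i))}$, giving all of $\rmA_n$. To salvage your approach you would in effect have to reprove this compression, or build the $3$-cycle conjugators from Clifford-algebra (Spin) considerations, neither of which your proposal supplies.
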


\begin{proof}
For (a),
define $\alpha,\beta$ by~\eqref{eq.RH EITFF parameters} and $(\bfPhi_i)_{i=1}^{n}$ by~\eqref{eq.RH EITFF isometries} where $(\bfB_i)_{i=1}^{n}$ is a sequence of skew-Hermitian unitaries in $\bbF^{r\times r}$ that satisfies~\eqref{eq.RH simplex}.
As such, $\bfB_i^2=-\brI$ for all $i\in[n-1]$, and
\begin{equation}
\label{eq.pf total sym 1}
\bfB_i\bfB_j+\bfB_j\bfB_i
=\tfrac{2}{n-2}\brI,\ \forall\,i,j\in[n-1],\,i\neq j.
\end{equation}
For any distinct $j,k\in[n-1]$, the matrix $\bfA_{j,k}:=\bfB_j-\bfB_k$ thus satisfies
$\bfA_{j,k}^*=-\bfA_{j,k}=\bfA_{k,j}$ and
\begin{equation*}
\bfA_{j,k}^2
=-2\brI-\bfB_j\bfB_k-\bfB_k\bfB_j
=-\tfrac1{\alpha^2}\brI.
\end{equation*}
Moreover, for any $i\in[n-1]$ with $i\notin\set{j,k}$,
\begin{align*}
\bfA_{j,k}\bfB_i
&=\bfB_j\bfB_i-\bfB_k\bfB_i\\
&=\tfrac{2}{n-2}\brI-\bfB_i\bfB_j-\tfrac{2}{n-2}\brI+\bfB_i\bfB_k\\
&=-\bfB_i\bfA_{j,k}.
\end{align*}
When $i$ is either $j$ or $k$, we instead have
\begin{align*}
\bfA_{j,k}\bfB_j
&=\bfB_j^2-\bfB_k\bfB_j
=\bfB_k^2-\bfB_k\bfB_j
=-\bfB_k\bfA_{j,k},\\
\bfA_{j,k}\bfB_k
&=\bfB_j\bfB_k-\bfB_k^2
=\bfB_j\bfB_k-\bfB_j^2
=-\bfB_j\bfA_{j,k}.
\end{align*}
It follows that $\alpha\bfA_{j,k}$ is a skew-Hermitian unitary for which
\begin{equation}
\label{eq.pf total sym 2}
(\alpha\bfA_{j,k})\bfB_i(\alpha\bfA_{j,k})
=\left\{\begin{array}{cl}
\bfB_k,&\ i=j,\\
\bfB_j,&\ i=k,\\
\bfB_i,&\ \text{else}.
\end{array}\right.
\end{equation}

For any $i\in[n]$,
the projection $\bfPi_i:=\bfPhi_i^{}\bfPhi_i^*$ arising from the isometry $\bfPhi_i$ defined by~\eqref{eq.RH EITFF isometries}
is
\begin{equation*}
\bfPi_i
=\left[\begin{array}{cc}
\alpha^2\brI&-\alpha\beta\bfB_i\\
\alpha\beta\bfB_i^{}&\beta^2\brI
\end{array}\right]\text{ if }
i<n,\
\bfPi_n
=\left[\begin{array}{cc}
\brI&\bfzero\\
\bfzero&\bfzero
\end{array}\right].
\end{equation*}
To show that this EITFF has total symmetry,
it suffices to take any $j,k\in[n]$ with $j<k$,
let $\sigma$ be the corresponding transposition $(jk)$,
and provide a corresponding unitary $\bfUpsilon_\sigma$ in $\bbF^{2r\times 2r}$ such that \eqref{eq.symmetry subspace equivalence} holds for all $i\in[n]$.
If $k<n$,
let
\begin{equation}
\label{eq.pf total sym 4}
\bfUpsilon_\sigma
=\alpha\left[\begin{array}{cc}
\bfB_j-\bfB_k&\bfzero\\
\bfzero&\bfB_k-\bfB_j
\end{array}\right]
=\alpha\left[\begin{array}{cc}
\bfA_{j,k}&\bfzero\\
\bfzero&-\bfA_{j,k}
\end{array}\right].
\end{equation}
To see that~\eqref{eq.symmetry subspace equivalence} holds for any $i\in[n-1]$,
note that by~\eqref{eq.pf total sym 2},
\begin{align*}
\bfUpsilon_\sigma^{}\bfPi_i^{}\bfUpsilon_\sigma^*
&=\left[\begin{array}{cc}
-\alpha^4\bfA_{j,k}^2&-\alpha^3\beta\bfA_{j,k}\bfB_i\bfA_{j,k}\\
\alpha^3\beta\bfA_{j,k}\bfB_i\bfA_{j,k}&-\alpha^2\beta^2\bfA_{j,k}^2
\end{array}\right]\\
&=\left[\begin{array}{cc}
\alpha^2\brI&-\alpha\beta\bfB_{\sigma(i)}\\
\alpha\beta\bfB_{\sigma(i)}&\beta^2\brI
\end{array}\right]\\
&=\bfPi_{\sigma(i)}.
\end{align*}
To see that~\eqref{eq.symmetry subspace equivalence} also holds when $i=n$,
note
\begin{equation*}
\bfUpsilon_\sigma^{}\bfPi_n^{}\bfUpsilon_\sigma^*
=\left[\begin{array}{cc}
-\alpha^2\bfA_{j,k}^2&\bfzero\\
\bfzero&\bfzero
\end{array}\right]
=\bfPi_n
=\bfPi_{\sigma(n)}.
\end{equation*}
If instead $k=n$,
let $\bfUpsilon_\sigma$ be the matrix
\begin{equation}
\label{eq.pf total sym 5}
\bfUpsilon_\sigma
=\left[\begin{array}{cc}
\alpha\bfB_j&\beta\brI\\
-\beta\brI&-\alpha\bfB_j
\end{array}\right].
\end{equation}
It is skew-Hermitian unitary since both $\bfUpsilon_\sigma^*=-\bfUpsilon_\sigma^{}$ and
\begin{equation*}
\bfUpsilon_\sigma^2
=\left[\begin{array}{cc}
\alpha^2\bfB_j^2-\beta^2\brI&\bfzero\\
\bfzero&-\beta^2\brI+\alpha^2\bfB_j^2
\end{array}\right]
=-\brI.
\end{equation*}
For this choice of $\sigma$ and $\bfUpsilon_\sigma$,
\eqref{eq.symmetry subspace equivalence} holds when $i=n$ since
\begin{align*}
\bfUpsilon_\sigma^{}\bfPi_n^{}\bfUpsilon_\sigma^*
&=\left[\begin{array}{cc}
\alpha\bfB_j&\beta\brI\\
-\beta\brI&-\alpha\bfB_j
\end{array}\right]
\left[\begin{array}{cc}
\brI&\bfzero\\
\bfzero&\bfzero
\end{array}\right]
\left[\begin{array}{cc}
-\alpha\bfB_j&-\beta\brI\\
\beta\brI&\alpha\bfB_j
\end{array}\right]\\
&=\left[\begin{array}{cc}
\alpha^2\brI&-\alpha\beta\bfB_j\\
\alpha\beta\bfB_j&\beta^2\brI
\end{array}\right]\\
&=\bfPi_j\\
&=\bfPi_{\sigma(n)}.
\end{align*}
Since $\bfUpsilon$ is a skew-Hermitian unitary,
this moreover implies that
$\bfUpsilon_\sigma^{}\bfPi_j^{}\bfUpsilon_\sigma^*
=\bfUpsilon_\sigma^*\bfPi_j^{}\bfUpsilon_\sigma^{}
=\bfPi_n
=\bfPi_{\sigma(j)}$,
and so~\eqref{eq.symmetry subspace equivalence} holds when $i=j$.
To see that~\eqref{eq.symmetry subspace equivalence} also holds for any $i\in[n-1]$ with $i\neq j$,
it suffices to show that
$\bfUpsilon_\sigma^{}\bfPi_i^{}=\bfPi_i^{}\bfUpsilon_\sigma^{}$.
Since
\begin{align*}
\bfUpsilon_\sigma^{}\bfPi_i^{}
=&\left[\begin{array}{cc}
\alpha^3\bfB_j+\alpha\beta^2\bfB_i&
-\alpha^2\beta\bfB_j\bfB_i+\beta^3\brI\\
-\alpha^2\beta\brI-\alpha^2\beta\bfB_j\bfB_i&
\alpha\beta^2\bfB_i-\alpha\beta^2\bfB_j
\end{array}\right],\\
\bfPi_i^{}\bfUpsilon_\sigma^{}
=&\left[\begin{array}{cc}
\alpha^3\bfB_j+\alpha\beta^2\bfB_i&
\alpha^2\beta\brI+\alpha^2\beta\bfB_i\bfB_j\\
\alpha^2\beta\bfB_i\bfB_j-\beta^3\brI&
\alpha\beta^2\bfB_i-\alpha\beta^2\bfB_j
\end{array}\right],
\end{align*}
this reduces to showing that both
\begin{align*}
-\alpha^2\beta\bfB_j\bfB_i+\beta^3\brI
&=\alpha^2\beta\brI+\alpha^2\beta\bfB_i\bfB_j,\\
-\alpha^2\beta\brI-\alpha^2\beta\bfB_j\bfB_i
&=\alpha^2\beta\bfB_i\bfB_j-\beta^3\brI,
\end{align*}
namely that $\bfB_i\bfB_j+\bfB_j\bfB_i=(\frac{\beta^2}{\alpha^2}-1)\brI$.
This is indeed the case, being a restatement of~\eqref{eq.pf total sym 1}.
Thus, the $\EITFF_\bbF(2r,r,n)$ arising from $(\bfPhi_i)_{i=1}^{n}$ is indeed totally symmetric.

For (b), let $(\bfB_i)_{i=1}^{n-1}$ be a skew-Hermitian $\rho$-simplex in $\bbF^{r\times r}$.
Its real span is a $\rho$-space in $\bbF^{r\times r}$ of dimension $n-2$ whose members are skew-Hermitian.
Let $(\bfC_i)_{i=1}^{n-2}$ be a \mbox{$\rho$-orthonormal} basis for this $\rho$-space and put $\bfC_{n-1}:=\brI$.
Since
$\bfC_{n-1}^*\bfC_i^{}+\bfC_i^*\bfC_{n-1}
=\bfC_i^{}+\bfC_i^*
=\bfzero$ for all $i\in[n-2]$,
the sequence $(\bfC_i)_{i=1}^{n-1}$ is $\rho$-orthonormal.
Thus, $n-1\leq\rho_\bbF(r)$.
Conversely, if $n\leq\rho_\bbF(r)+1$,
then there exists a \mbox{$\rho$-orthonormal} sequence $(\bfC_i)_{i=1}^{n-1}$ in $\bbF^{r\times r}$.
Here, $(\bfC_{n-1}^*\bfC_i^{})_{i=1}^{n-2}$ is a \mbox{$\rho$-orthonormal} sequence in $\bbF^{r\times r}$ whose real span is a $\rho$-space of dimension $n-2$ whose members are skew-Hermitian.
This space thus contains a skew-Hermitian $\rho$-simplex $(\bfB_i)_{i=1}^{n-1}$ in $\bbF^{r\times r}$.

For (c),
we show that any $\EITFF_\bbF(2\hat{r},\hat{r},n)$ has either alternating or total symmetry.
If $n\in\set{2,3}$, such an EITFF is trivial and has total symmetry.
Since any two equivalent EITFFs have the same symmetry group,
it thus suffices via Theorem~\ref{thm.RH EITFF isometries} to let $n\geq 4$ and consider EITFFs with isometries \smash{$(\widehat{\bfPhi}_i)_{i=1}^{n}$} in $\bbF^{2\hat{r}\times\hat{r}}$ of the form
\begin{equation*}
\widehat{\bfPhi}_i
=\left[\begin{array}{c}
\alpha\brI\\
\beta\widehat{\bfB}_i
\end{array}\right],
\ \forall\,i\in[n-1],
\quad
\widehat{\bfPhi}_n
=\left[\begin{array}{c}
\brI\\
\bfzero
\end{array}\right],
\end{equation*}
where $(\widehat{\bfB}_i)_{i=1}^{n-1}$ is a sequence of unitaries in
$\bbF^{\hat{r}\times\hat{r}}$ for which
\begin{equation}
\label{eq.pf total sym 6}
\widehat{\bfB}_i^*\widehat{\bfB}_j^{}+\widehat{\bfB}_j^*\widehat{\bfB}_i^{}
=-\tfrac{2}{n-2}\brI,
\ \forall\,i,j\in[n-1],\,i\neq j.
\end{equation}
Since $\rmA_n$ is a subgroup of $\rmS_n$ of index~$2$,
it moreover suffices to show that the symmetry group of this EITFF contains all products of two transpositions.
Now let $r=2\hat{r}$,
and define a sequence $(\bfB_i)_{i=1}^{n}$ of skew-Hermitian unitaries in $\bbF^{r\times r}$ by
\begin{equation*}
\bfB_i:=\left[\begin{smallmatrix}
\bfzero&-\widehat{\bfB}_i^*\\
\widehat{\bfB}_i^{}&\bfzero
\end{smallmatrix}\right].
\end{equation*}
In light of~\eqref{eq.RH set relations} and~\eqref{eq.pf total sym 6},
this sequence $(\bfB_i)_{i=1}^{n}$ satisfies \eqref{eq.pf total sym 1},
and so~\eqref{eq.RH EITFF isometries} defines isometries $(\bfPhi_i)_{i=1}^{n}$ of a totally symmetric $\EITFF_\bbF(2r,r,n)$ for $\bbF^{2r}=\bbF^{4\hat{r}}$.
Now consider the isometries $(\bfP\bfPhi_i)_{i=1}^{n}$ of the equivalent EITFF where
\begin{equation*}
\bfP:=\left[\begin{smallmatrix}
\brI&\bfzero&\bfzero&\bfzero\\
\bfzero&\bfzero&\bfzero&\brI\\
\bfzero&\brI&\bfzero&\bfzero\\
\bfzero&\bfzero&\brI&\bfzero
\end{smallmatrix}\right]\in\bbF^{4\hat{r}\times 4\hat{r}}
\end{equation*}
is a block permutation matrix,
namely
\begin{gather*}
\bfP\bfPhi_i
=\left[\begin{smallmatrix}
\alpha\brI&\phantom{-}\bfzero\\
\beta\widehat{\bfB}_i&\phantom{-}\bfzero\\
\bfzero&\phantom{-}\alpha\brI\\
\bfzero&-\beta\widehat{\bfB}_i^*
\end{smallmatrix}\right]\ \forall\,i\in[n-1],\quad
\bfP\bfPhi_n
=\left[\begin{smallmatrix}
\brI&\bfzero\\
\bfzero&\bfzero\\
\bfzero&\brI\\
\bfzero&\bfzero
\end{smallmatrix}\right].
\end{gather*}
Each associated projection
$\bfP\bfPi_i\bfP^*=\bfP\bfPhi_i^{}\bfPhi_i^*\bfP^*$
is a $4\hat{r}\times4\hat{r}$ block diagonal matrix whose upper left $2\hat{r}\times2\hat{r}$ corner
is the projection $\widehat{\bfPi}_i:=\widehat{\bfPhi}_i^{}\widehat{\bfPhi}_i^*$ onto the $i$th subspace of our arbitrary $\EITFF_\bbF(2\hat{r},\hat{r},n)$.
For any $\sigma\in\rmS_n$ and unitary $\bfUpsilon_\sigma$ in $\bbF^{2r\times 2r}$ that satisfy~\eqref{eq.symmetry subspace equivalence} for all $i\in[n]$,
\begin{equation*}
\bfP\bfPi_{\sigma(i)}\bfP^*
=(\bfP\bfUpsilon_\sigma\bfP^*)(\bfP\bfPi_i\bfP^*)(\bfP\bfUpsilon_\sigma\bfP^*)^*.
\end{equation*}
For instance, this holds when $\bfUpsilon_\sigma$ is given by~\eqref{eq.pf total sym 4} and $\sigma$ is the transposition $(jk)$ for some $j,k\in[n-1]$ with $j<k$; here,
\begin{align*}
\bfP\bfUpsilon_\sigma\bfP^*
&=\alpha\left[\begin{smallmatrix}
\bfzero&\bfzero&\widehat{\bfB}_k^*-\widehat{\bfB}_j^*&\bfzero\\
\bfzero&\bfzero&\bfzero&\widehat{\bfB}_k-\widehat{\bfB}_j\\
\widehat{\bfB}_j-\widehat{\bfB}_k&\bfzero&\bfzero&\bfzero\\
\bfzero&\widehat{\bfB}_j^*-\widehat{\bfB}_k^*&\bfzero&\bfzero
\end{smallmatrix}\right].
\end{align*}
It also holds when $\bfUpsilon_\sigma$ is given by~\eqref{eq.pf total sym 5} and
$\sigma$ is the transposition $(jn)$ for some $j\in[n-1]$; here,
\begin{align*}
\bfP\bfUpsilon_\sigma\bfP^*
&=\left[\begin{smallmatrix}
\bfzero&\bfzero&-\alpha\widehat{\bfB}_j^*&\beta\brI\\
\bfzero&\bfzero&\beta\brI&-\alpha\widehat{\bfB}_j\\
\alpha\widehat{\bfB}_j&\beta\brI&\bfzero&\bfzero\\
-\beta\brI&\alpha\widehat{\bfB}_j^*&\bfzero&\bfzero
\end{smallmatrix}\right].
\end{align*}
For any transpositions $\sigma_1$ and $\sigma_2$,
the product $(\bfP\bfUpsilon_{\sigma_1}\bfP^*)(\bfP\bfUpsilon_{\sigma_2}\bfP^*)$
of the associated unitaries is thus a $4\hat{r}\times4\hat{r}$ block diagonal matrix with diagonal blocks of size $2\hat{r}\times2\hat{r}$.
Since $(\bfP\bfUpsilon_{\sigma_1}\bfP^*)(\bfP\bfUpsilon_{\sigma_2}\bfP^*)$ conjugates
each $\bfP\bfPi_i\bfP^*$ to $\bfP\bfPi_{\sigma_1(\sigma_2(i))}\bfP^*$,
the upper left $2\hat{r}\times2\hat{r}$ corner of $(\bfP\bfUpsilon_{\sigma_1}\bfP^*)(\bfP\bfUpsilon_{\sigma_2}\bfP^*)$ conjugates each $\widehat{\bfPi}_i$ to $\widehat{\bfPi}_{\sigma_1(\sigma_2(i))}$.
As such, the product of any two transpositions in $\rmS_n$ indeed lies in the symmetry group of our arbitrary $\EITFF_\bbF(2\hat{r},\hat{r},n)$.
\end{proof}

By Theorem~\ref{thm.total symmetry suf},
an $\EITFF_\bbF(2r,r,n)$ with total symmetry exists if $n\leq\rho_\bbF(r)+1$.
Below, we show that when $\bbF=\bbC$,
this sufficient condition for the existence of a totally symmetric $\EITFF_\bbF(2r,r,n)$ is moreover necessary.
The situation is more complicated when $\bbF=\bbR$,
as some $\EITFF_\bbR(2r,r,n)$ with $n=\rho_\bbR(r)+2$ have total symmetry while others do not.
For example, $\rho_\bbR(8)=\rho_\bbC(8)=8$,
and no $\EITFF_\bbF(16,8,10)$ with $\bbF=\bbR$ has total symmetry since, more generally, no such EITFF with $\bbF=\bbC$ does.
In contrast,
$\rho_\bbR(2)=2$,
and the $\EITFF_\bbR(4,2,4)$ of Example~\ref{ex.EITFF(4,2,4)} has total symmetry:
its symmetry group contains $\rmA_4$ by Theorem~\ref{thm.total symmetry suf},
and moreover contains the transposition $\sigma=(23)$ since~\eqref{eq.symmetry subspace equivalence} holds for all $i\in[4]$ provided $\bfUpsilon_\sigma$ is taken to be
$[\begin{smallmatrix}\brM&\bfzero\\\bfzero&\brM\end{smallmatrix}]$
where $\brM=[\begin{smallmatrix}1&\phantom{-}0\\0&-1\end{smallmatrix}]$.
To better understand these facts,
consider the following technical characterization of the existence of $\EITFF_\bbF(2r,r,n)$ with total symmetry in terms of types of solutions to the classical Radon--Hurwitz equations~\eqref{eq.RH equations}.

\begin{lemma}
\label{lem.tot sym}
For any integers $r\geq1$ and $n\geq 4$,
an $\EITFF_\bbF(2r,r,n)$ with total symmetry exists if and only if there exist \mbox{$\rho$-orthonormal} matrices $(\bfC_i)_{i=1}^{n-2}$ in $\bbF^{r\times r}$ with $\bfC_1=\brI$ as well as a unitary $\bfU$ in $\bbF^{r\times r}$ that anticommutes with $\bfC_{n-2}$ and commutes with each $\bfC_i$ with $i\leq n-3$.
\end{lemma}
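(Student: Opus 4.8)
The plan is to use Theorems~\ref{thm.RH EITFF isometries} and~\ref{thm.RH EITFF existence}(a) to put an arbitrary $\EITFF_\bbF(2r,r,n)$ in the $\rho$-simplex form $\bfPhi_i=\bigl[\begin{smallmatrix}\alpha\brI\\\beta\bfB_i\end{smallmatrix}\bigr]$ for $i\in[n-1]$, $\bfPhi_n=\bigl[\begin{smallmatrix}\brI\\\bfzero\end{smallmatrix}\bigr]$, and then to pin down exactly when one transposition is a symmetry. The starting observation is that by Theorem~\ref{thm.total symmetry suf}(c) the symmetry group always contains $\rmA_n$, which has index $2$ in $\rmS_n$; since a transposition is odd, the EITFF has total symmetry if and only if \emph{some} transposition is a symmetry. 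I would work with $\sigma:=(n-2\ \ n-1)$; geometrically it swaps the last two vertices of the underlying regular $\rho$-simplex, and crucially it fixes the index $n$.

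First I would normalize so that $\bfB_1=\brI$: left-multiplying every $\bfPhi_i$ by the unitary $\bigl[\begin{smallmatrix}\brI&\bfzero\\\bfzero&\bfB_1^*\end{smallmatrix}\bigr]$ fixes $\bfPhi_n$, replaces $\bfB_i$ by the $\rho$-simplex $\bfB_1^*\bfB_i$ (using~\eqref{eq.RH set relations}), and preserves the symmetry group since it is a global unitary. Applying Theorem~\ref{thm.RH EITFF existence}(a) with $(\bfC_i)_{i=1}^{n-2}$ taken to be the orthonormal basis of Lemma~\ref{lem.simplex} then yields a $\rho$-orthonormal sequence with $\bfC_1=\bfB_1=\brI$ and $\bfB_j=\sum_{i=1}^{n-2}\bfPsi_{n-1}(i,j)\bfC_i$. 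For such a normalized EITFF the heart of the proof is the chain of equivalences: $\sigma$ is a symmetry $\iff$ there is a unitary $\bfU\in\bbF^{r\times r}$ with $\bigl[\begin{smallmatrix}\bfU&\bfzero\\\bfzero&\bfU\end{smallmatrix}\bigr]\bfPi_i\bigl[\begin{smallmatrix}\bfU&\bfzero\\\bfzero&\bfU\end{smallmatrix}\bigr]^{*}=\bfPi_{\sigma(i)}$ for all $i\in[n]$ $\iff$ there is a unitary $\bfU$ commuting with $\bfC_1,\dots,\bfC_{n-3}$ and anticommuting with $\bfC_{n-2}$ --- precisely the condition in the statement. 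For the first equivalence: any unitary $\bfUpsilon$ implementing $\sigma$ satisfies $\bfUpsilon\bfPi_n\bfUpsilon^{*}=\bfPi_n$ (as $\sigma(n)=n$), hence commutes with the projection $\bfPi_n=\bigl[\begin{smallmatrix}\brI&\bfzero\\\bfzero&\bfzero\end{smallmatrix}\bigr]$, so $\bfUpsilon=\bigl[\begin{smallmatrix}\bfV_1&\bfzero\\\bfzero&\bfV_2\end{smallmatrix}\bigr]$ is block diagonal; comparing lower-left blocks of $\bfUpsilon\bfPi_i\bfUpsilon^{*}=\bfPi_{\sigma(i)}$ gives $\bfV_2\bfB_i\bfV_1^{*}=\bfB_{\sigma(i)}$ for $i\in[n-1]$, and the case $i=1$ (where $\sigma(1)=1$ and $\bfB_1=\brI$, using $n\geq4$) forces $\bfV_1=\bfV_2=:\bfU$.

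For the second equivalence: $(\bfB_{\sigma(i)})_{i=1}^{n-1}$ is again a simplex (its Gram matrix $\tfrac1{n-2}((n-1)\brI-\brJ)$ is invariant under simultaneous permutation), so Lemma~\ref{lem.simplex} attaches to it an orthonormal basis $(\bfC'_j)_{j=1}^{n-2}$ with $\bfB_{\sigma(i)}=\sum_k\bfPsi_{n-1}(k,i)\bfC'_k$. Expanding $\bfB_i=\sum_k\bfPsi_{n-1}(k,i)\bfC_k$ and using that $\bfPsi_{n-1}$ has rank $n-2$, so its $n-1$ columns span $\bbR^{n-2}$, one sees that ``$\bfU\bfB_i\bfU^{*}=\bfB_{\sigma(i)}$ for all $i\in[n-1]$" is equivalent to ``$\bfU\bfC_j\bfU^{*}=\bfC'_j$ for all $j\in[n-2]$"; the $i=n$ condition is automatic. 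The step I expect to be the real obstacle is identifying $(\bfC'_j)$, i.e.\ showing $\bfC'_j=\bfC_j$ for $j\leq n-3$ and $\bfC'_{n-2}=-\bfC_{n-2}$. This I would derive from the Gram--Schmidt description of the basis in Lemma~\ref{lem.simplex}: since $(\bfB_{\sigma(i)})_{i\leq n-3}=(\bfB_i)_{i\leq n-3}$, the first $n-3$ basis vectors are unchanged, while Lemma~\ref{lem.simplex}(c) makes $\bfC'_{n-2}$ a positive multiple of $\bfB_{n-1}-\bfB_{n-2}$ and $\bfC_{n-2}$ a positive multiple of $\bfB_{n-2}-\bfB_{n-1}$, forcing $\bfC'_{n-2}=-\bfC_{n-2}$. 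Given this, $\bfU\bfC_j\bfU^{*}=\bfC'_j$ for all $j$ says exactly that $\bfU$ commutes with $\bfC_1,\dots,\bfC_{n-3}$ and anticommutes with $\bfC_{n-2}$.

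With these equivalences in hand the lemma follows: if $(\bfC_i)_{i=1}^{n-2}$ with $\bfC_1=\brI$ and $\bfU$ as in the statement exist, form the $\rho$-simplex $\bfB_j=\sum_i\bfPsi_{n-1}(i,j)\bfC_i$ and the corresponding $\EITFF_\bbF(2r,r,n)$ (Theorems~\ref{thm.RH EITFF existence}(a) and~\ref{thm.RH EITFF isometries}); it has $\rmA_n$-symmetry by Theorem~\ref{thm.total symmetry suf}(c), and adjoining the odd permutation $\sigma$, realized by $\bigl[\begin{smallmatrix}\bfU&\bfzero\\\bfzero&\bfU\end{smallmatrix}\bigr]$ via the equivalences, yields the full group $\rmS_n$. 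Conversely, a totally symmetric such EITFF has $\sigma$ as a symmetry, so after the normalization above the equivalences produce the required $(\bfC_i)_{i=1}^{n-2}$ and $\bfU$. Everything outside the identification of $(\bfC'_j)$ is routine $2\times2$ block-matrix arithmetic and bookkeeping with Lemmas~\ref{lem.equivalence} and~\ref{lem.simplex}.
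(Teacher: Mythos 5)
Your proposal is correct and follows essentially the same route as the paper's proof: reduce total symmetry to the single transposition $(n-2\ \ n-1)$ via Theorem~\ref{thm.total symmetry suf}(c), normalize $\bfB_1=\brI$ using Theorem~\ref{thm.RH EITFF isometries}, show any implementing unitary must be block diagonal with equal blocks $\bfU$, and translate $\bfU\bfB_i\bfU^*=\bfB_{\sigma(i)}$ into commutation/anticommutation with the $\rho$-orthonormal basis of Lemma~\ref{lem.simplex}. The only differences are cosmetic: you work with projections rather than the isometry relation $\bfPhi_{\sigma(i)}=\bfUpsilon_\sigma\bfPhi_i\bfZ_{\sigma,i}$, and you extract the commutation relations by comparing the canonical bases of the original and permuted simplices (using the uniqueness/Gram--Schmidt remark after Lemma~\ref{lem.simplex} and the full row rank of $\bfPsi_{n-1}$) where the paper argues directly from parts (a)--(c) of that lemma.
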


\begin{proof}
Assume that an $\EITFF_\bbF(2r,r,n)$ with total symmetry exists.
Since symmetry groups are preserved by equivalence,
and since multiplying every member of any $\rho$-simplex $(\bfB_i)_{i=1}^{n}$ in $\bbF^{r\times r}$ by $\bfB_1^*$ yields another $\rho$-simplex whose first member is $\brI$,
we assume without loss of generality via Theorem~\ref{thm.RH EITFF isometries} that it has isometries $(\bfPhi_i)_{i=1}^{n}$ of the form~\eqref{eq.RH EITFF isometries} where $(\bfB_i)_{i=1}^{n-1}$ is a $\rho$-simplex in $\bbF^{r\times r}$ with $\bfB_1=\brI$.
Then the transposition $\sigma$ that interchanges $n-2$ and $n-1$ is a symmetry of this EITFF,
and so Lemma~\ref{lem.equivalence} gives a unitary $\bfUpsilon_\sigma$ in $\bbF^{2r\times 2r}$ and unitaries $(\bfZ_{\sigma,i})_{i=1}^{n}$ in $\bbF^{r\times r}$ such that for any $i\in[n]$,
\begin{equation}
\label{eq.pf.lem.tot sym.1}
\bfPhi_{\sigma(i)}=\bfUpsilon_\sigma\bfPhi_i\bfZ_{\sigma,i}.
\end{equation}
Since
\smash{$\bfPhi_{\sigma(n)}
=\bfPhi_n
=[\begin{smallmatrix}\brI\\\bfzero\end{smallmatrix}]$},
applying~\eqref{eq.pf.lem.tot sym.1} when $i=n$ gives
\smash{$\bfUpsilon_\sigma=[\begin{smallmatrix}\bfU&\bfzero\\\bfzero&\bfV\end{smallmatrix}]$}
and $\bfZ_{\sigma,n}=\bfU^*$ for some unitaries $\bfU,\bfV$ in $\bbF^{r\times r}$.
For any $i\in[n-1]$,
\eqref{eq.pf.lem.tot sym.1} moreover guarantees that $\bfZ_{\sigma,i}=\bfU^*$ since both $\bfPhi_i$ and $\bfPhi_{\sigma(i)}$ have $\alpha\brI$ as their upper block.
Since $\bfZ_{\sigma,1}=\bfU^*$ and $\bfPhi_{\sigma(1)}=\bfPhi_1$ has $\beta\bfB_1=\beta\brI$ as its lower block,
applying~\eqref{eq.pf.lem.tot sym.1} when $i=1$ moreover gives that $\bfV=\bfU$.
Examining the bottom blocks of~\eqref{eq.pf.lem.tot sym.1} then gives
\begin{equation}
\label{eq.pf.lem.tot sym.2}
\bfU\bfB_i\bfU^*=\bfB_{\sigma(i)}\text{ when }2\leq i\leq n-1.
\end{equation}
In particular,
$\bfU$ commutes with each $\bfB_i$ with $i\in[n-3]$,
and anticommutes with $\bfB_{n-2}-\bfB_{n-1}$.
Now take $\rho$-orthonormal matrices $(\bfC_i)_{i=1}^{n-2}$ that satisfy~\eqref{eq.RH simplex from RH orthonormal}.
Here, Lemma~\ref{lem.simplex} gives that
$\bfC_1=\bfB_1=\brI$,
that each $\bfC_i$ lies in the span of $(\bfB_j)_{j=1}^{i}$ and so commutes with $\bfU$ for $i\leq n-3$,
and that $\bfC_{n-2}$ is a multiple of $\bfB_{n-2}-\bfB_{n-1}$ and so anticommutes with $\bfU$.

Conversely, if $(\bfC_i)_{i=1}^{n-2}$ is a $\rho$-orthonormal sequence in $\bbF^{r\times r}$ with $\bfC_1=\brI$ and there exists a unitary $\bfU$ in $\bbF^{r\times r}$ that both anticommutes with $\bfC_{n-2}$ and commutes with each $\bfC_i$ with $i\leq n-3$,
then~\eqref{eq.RH simplex from RH orthonormal} defines a $\rho$-simplex $(\bfB_i)_{i=1}^{n-1}$ in $\bbF^{r\times r}$ that satisfies~\eqref{eq.pf.lem.tot sym.2} where $\sigma$ is again the transposition that interchanges $n-2$ and $n-1$.
It follows that~\eqref{eq.pf.lem.tot sym.1} holds for all $i\in[n]$,
provided we define $(\bfPhi_i)_{i=1}^{n}$ by~\eqref{eq.RH EITFF isometries},
let $\bfUpsilon_\sigma$ be
$[\begin{smallmatrix}\bfU&\bfzero\\\bfzero&\bfU\end{smallmatrix}]$,
and let $\bfZ_{\sigma,i}$ be $\bfU^*$ for all $i\in[n]$.
As such, the symmetry group of the associated $\EITFF_\bbF(2r,r,n)$ contains $\sigma$.
Since it also contains $\rmA_n$ by Theorem~\ref{thm.total symmetry suf},
it is necessarily $\rmS_n$.
\end{proof}

For instance,
by Lemma~\ref{lem.tot sym}, the $\EITFF_\bbR(4,2,4)$ of Example~\ref{ex.EITFF(4,2,4)} has total symmetry since it arises from $(\bfC_1,\bfC_2)=(\brI,\brR)$ where $\brR$ anticommutes with the unitary $\bfU=\brM$.

\begin{theorem}
\label{thm.total symmetry complex}
For any integers $r\geq 1$ and $n\geq 3$,
an $\EITFF_\bbC(2r,r,n)$ with total symmetry exists if and only if $n\leq\rho_\bbC(r)+1$, where $\rho_\bbC(r)$ is given by~\eqref{eq.RH number}.
\end{theorem}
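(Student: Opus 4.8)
The plan is to prove the two implications separately. The forward direction --- that $n\leq\rho_\bbC(r)+1$ yields a totally symmetric $\EITFF_\bbC(2r,r,n)$ --- is immediate and requires nothing new: under this hypothesis Theorem~\ref{thm.total symmetry suf}(b) produces a skew-Hermitian $\rho$-simplex $(\bfB_i)_{i=1}^{n-1}$ in $\bbC^{r\times r}$, and Theorem~\ref{thm.total symmetry suf}(a) then guarantees that the $\EITFF_\bbC(2r,r,n)$ built from it via Theorem~\ref{thm.RH EITFF isometries} has total symmetry. So the content of the result lies in the converse.

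For the converse, suppose an $\EITFF_\bbC(2r,r,n)$ with total symmetry exists. Theorem~\ref{thm.RH EITFF existence}(b) already gives $n\leq\rho_\bbC(r)+2$, so it suffices to rule out $n=\rho_\bbC(r)+2$. Since $\rho_\bbC(r)\geq2$ always, the case $n=3$ forces $3\leq\rho_\bbC(r)+1$ and there is nothing to do; thus I assume $n\geq4$ and, for contradiction, $n=\rho_\bbC(r)+2$. Writing $m:=n-2=\rho_\bbC(r)$, Lemma~\ref{lem.tot sym} furnishes $\rho$-orthonormal matrices $(\bfC_i)_{i=1}^{m}$ in $\bbC^{r\times r}$ with $\bfC_1=\brI$, together with a unitary $\bfU$ that anticommutes with $\bfC_m$ and commutes with each $\bfC_i$ for $i\leq m-1$. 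Because $\bfC_1=\brI$, each $\bfC_i$ with $i\geq2$ is a skew-Hermitian unitary, the matrices $\bfC_2,\dotsc,\bfC_m$ pairwise anticommute, and $\bfC_i^2=-\brI$ for $i\geq2$.

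The crux is to manufacture from $\bfU$ a skew-Hermitian unitary $\boldsymbol{W}$ in $\bbC^{r\times r}$ that anticommutes with all of $\bfC_2,\dotsc,\bfC_m$: once we have it, $(\brI,\bfC_2,\dotsc,\bfC_m,\boldsymbol{W})$ is a $\rho$-orthonormal sequence of length $m+1>\rho_\bbC(r)$, contradicting the maximality of $\rho_\bbC(r)$ recalled in Section~II. I will take $\bfD:=\bfC_m\bfU$, which a direct computation (using $\bfC_m^2=-\brI$, $\bfU\bfC_m=-\bfC_m\bfU$, and $\bfU\bfC_i=\bfC_i\bfU$ for $i\leq m-1$) shows to be unitary and to anticommute with each $\bfC_i$, $2\leq i\leq m$, so that $\bfC_i\bfD\bfC_i^{-1}=-\bfD$ for these $i$. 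Hence $\bfD$ is a unitary (in particular normal) operator whose set $\Lambda$ of eigenvalues satisfies $\Lambda=-\Lambda$ and $0\notin\Lambda$. I then choose a map $g\colon\Lambda\to\{\rmi,-\rmi\}$ with $g(-\lambda)=-g(\lambda)$ for every $\lambda\in\Lambda$ --- possible since $\Lambda$ splits into two-element subsets $\{\lambda,-\lambda\}$ --- and set $\boldsymbol{W}:=g(\bfD)$, the normal operator acting as $g(\lambda)$ on the $\lambda$-eigenspace of $\bfD$. Being normal with unimodular, purely imaginary eigenvalues, $\boldsymbol{W}$ is a skew-Hermitian unitary; and since $g$ is odd on $\Lambda$ and $\bfC_i\bfD\bfC_i^{-1}=-\bfD$, we get $\bfC_i\boldsymbol{W}\bfC_i^{-1}=g(-\bfD)=-g(\bfD)=-\boldsymbol{W}$, i.e.\ $\boldsymbol{W}$ anticommutes with each $\bfC_i$, $2\leq i\leq m$. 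Checking that $(\brI,\bfC_2,\dotsc,\bfC_m,\boldsymbol{W})$ satisfies~\eqref{eq.RH equations} is then routine, and the contradiction follows; combined with the forward direction this establishes the stated equivalence.

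I expect the only real obstacle to be the construction of $\boldsymbol{W}$: the obvious candidate $\bfD=\bfC_m\bfU$ has exactly the right anticommutation relations but is typically neither Hermitian nor skew-Hermitian, and naive symmetrizations such as $\tfrac12(\bfD-\bfD^*)$ sacrifice unitarity. Exploiting that $\bfD$ anticommutes with $\bfC_m$ --- equivalently, that its spectrum is symmetric about the origin --- to build $\boldsymbol{W}$ by functional calculus is what resolves this. (Everything goes through in the degenerate case $m=2$, i.e.\ $n=4$ with $r$ odd, where only $\bfC_2$ is present; there the contradiction is especially transparent, as a unitary anticommuting with a skew-Hermitian unitary on an odd-dimensional space cannot exist.)
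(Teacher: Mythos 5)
Your proof is correct, and while the setup coincides with the paper's (reduce to $n=\rho_\bbC(r)+2$, $n\geq4$, and invoke Lemma~\ref{lem.tot sym} to obtain $\rho$-orthonormal $(\bfC_i)_{i=1}^{n-2}$ with $\bfC_1=\brI$ together with a unitary $\bfU$ commuting with $\bfC_2,\dotsc,\bfC_{n-3}$ and anticommuting with $\bfC_{n-2}$), the way you extract the contradiction is genuinely different. The paper diagonalizes $\bfU$, passes to an eigenblock of $\bfU^2$ on which the Radon--Hurwitz number is undiminished, normalizes $\bfU$ there to a $\pm1$ diagonal matrix, and then reads off the contradiction $n-3\leq\rho_\bbC(\tfrac r2)=\rho_\bbC(r)-2=n-4$ from the halving structure underlying Lemma~\ref{lem.complex RH}. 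You instead set $\bfD:=\bfC_{n-2}\bfU$, check that it is a unitary anticommuting with each of $\bfC_2,\dotsc,\bfC_{n-2}$ (so its spectrum is symmetric under $\lambda\mapsto-\lambda$), and apply an odd $\set{\rmi,-\rmi}$-valued function $g$ to it by functional calculus; the resulting $\boldsymbol{W}=g(\bfD)$ is a skew-Hermitian unitary that still anticommutes with $\bfC_2,\dotsc,\bfC_{n-2}$, since unitary conjugation by $\bfC_i$ intertwines the spectral projections of $\bfD$ and $-\bfD$ and $g$ is odd. Appending $\boldsymbol{W}$ to $(\brI,\bfC_2,\dotsc,\bfC_{n-2})$ gives $\rho_\bbC(r)+1$ $\rho$-orthonormal matrices in $\bbC^{r\times r}$, contradicting the maximality that defines $\rho_\bbC(r)$ --- no recursion for $\rho_\bbC$ is needed. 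Your argument is shorter and more self-contained in that sense, and it correctly isolates where complexity enters (the eigenvalues $\pm\rmi$ of $\boldsymbol{W}$ need not yield a real matrix even when everything else is real), which is consistent with the fact that the real analogue can fail only partially (Theorem~\ref{thm.total symmetry real}); what the paper's block-diagonal reduction buys in exchange is that it reuses the machinery already set up for Lemma~\ref{lem.complex RH} and makes visible exactly where the dimension halves. Your handling of the edge cases ($n=3$ vacuous; $n=4$, $r$ odd, where a unitary anticommuting with an invertible matrix forces $\det\bfU=-\det\bfU$) is also sound.
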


\begin{proof}
In light of Theorems~\ref{thm.RH EITFF existence} and~\ref{thm.total symmetry suf},
it suffices to assume that a totally symmetric $\EITFF_\bbC(2r,r,n)$ with $n=\rho_\bbC(r)+2$ exists and then arrive at a contradiction.
By Lemma~\ref{lem.tot sym}, there exists a $\rho$-orthonormal sequence $(\bfC_i)_{i=1}^{n-2}$ in $\bbC^{r\times r}$ with \mbox{$\bfC_1=\brI$} and a unitary $\bfU$ in $\bbC^{r\times r}$ that anticommutes with $\bfC_{n-2}$ and commutes with each $\bfC_i$ with $i\leq n-3$.

Since $\bbC$ is the underlying field,
there exists a unitary matrix $\bfV$ and a diagonal matrix $\bfD$,
both in $\bbC^{r\times r}$, such that $\bfU=\bfV\bfD\bfV^*$.
Here, without loss of generality, assume that the columns of $\bfV$ have been arranged so that common diagonal values of $\bfD$ appear consecutively,
and moreover that the same is true for $\bfD^2$.
To be precise, take distinct unimodular scalars $(z_k)_{k=1}^{l}$ and positive integers $(r_k)_{k=1}^l$ that sum to $r$ such that the $k$th diagonal block of $\bfD^2$ is $z_k\brI\in\bbC^{r_k\times r_k}$.
Replacing each $\bfC_i$ with $\bfV^*\bfC_i\bfV$ yields a $\rho$-orthonormal sequence $(\bfC_i)_{i=1}^{n-2}$ in $\bbC^{r\times r}$ with $\bfC_1=\brI$ where $\bfD$ anticommutes with $\bfC_{n-2}$ and commutes with each $\bfC_i$ with $i\leq n-3$.

Each matrix $\bfC_i$ commutes with $\bfD^2$,
and so is block diagonal,
with its $k$th diagonal block $\bfC_i^{(k)}$ belonging to $\bbC^{r_k\times r_k}$.
For any $k\in[l]$,
the sequence $(\bfC_i^{(k)})_{i=1}^{n-2}$ is thus \mbox{$\rho$-orthonormal},
implying $\rho_\bbC(r)=n-2\leq\rho_\bbC(r_k)$.
Moreover, there exists $k_0\in[l]$ such that $\rho_\bbC(r_{k_0})=\rho_\bbC(r)$:
otherwise,
$\bbC^{r_k\times r_k}$ contains $\rho_\bbC(r)+1$ $\rho$-orthonormal matrices for each $k\in[l]$,
and these can form the diagonal blocks of $\rho_\bbC(r)+1$ $\rho$-orthonormal matrices in $\bbC^{r\times r}$, a contradiction.
(We emphasize that these arguments remain valid in the degenerate case where $\bfU^2=z\brI$ for some $z\in\bbC$.
In that case, $\bfD^2=z\brI$,
implying that $l=1$, $r_1=r$ and $k_0=1$,
and moreover that each matrix $\bfC_i$ is block diagonal in a trivial way,
having $\bfC_i^{(1)}$ in $\bbC^{r\times r}$ as its sole block.)
Taking any such $k_0$,
replacing $r$ with $r_{k_0}$,
each $\bfC_i$ with $\bfC_i^{(k_0)}$,
and $\bfD$ with the appropriate unimodular multiple of its $k_0$th diagonal block yields a $\rho$-orthonormal sequence $(\bfC_i)_{i=1}^{n-2}$ in $\bbC^{r\times r}$ with $n=\rho_\bbC(r)+2$ and $\bfC_1=\brI$ where \smash{$\bfD=[\begin{smallmatrix}
\brI&\phantom{-}\bfzero\\
\bfzero&-\brI\end{smallmatrix}]$}
anticommutes with $\bfC_{n-2}$, and commutes with $\bfC_i$ for $i\leq n-3$.
Since $\bfD$ anticommutes with $\bfC_{n-2}$,
each of its two diagonal blocks lies in $\bbC^{\frac{r}{2}\times\frac{r}{2}}$ where
$r$ is even, and so $n=\rho_\bbC(r)+2=\rho_\bbC(\frac{r}{2})+4\geq 6$.
Moreover,
for any $i\in[n-3]$,
the fact that $\bfC_i$ commutes with $\bfD$ implies that it is of the form
\begin{equation*}
\bfC_i=\left[\begin{smallmatrix}
\bfC_i^{(1)}&\bfzero\\
\bfzero&\bfC_i^{(2)}
\end{smallmatrix}\right]
\end{equation*}
for some unitaries $\bfC_i^{(1)}$ and $\bfC_i^{(2)}$ in $\bbC^{\frac{r}{2}\times\frac{r}{2}}$.
Since $(\bfC_i)_{i=1}^{n-2}$ is $\rho$-orthonormal,
this implies that \smash{$(\bfC_i^{(1)})_{i=1}^{n-3}$} is as well,
and so $n-3\leq\rho_\bbC(\tfrac{r}{2})=\rho_\bbC(r)-2=n-4$, a contradiction.
\end{proof}

In the real setting,
Theorems~\ref{thm.RH EITFF existence} and~\ref{thm.total symmetry suf} imply that an $\EITFF_\bbR(2r,r,n)$ exists only if $n\leq\rho_\bbR(r)+2$,
and that a totally symmetric such EITFF exists if $n\leq\rho_\bbR(r)+1$.
The following result gives a partial characterization of the existence of totally symmetric such EITFFs in the remaining case.

\begin{theorem}
\label{thm.total symmetry real}
Let $\rho_\bbR(r)=8b+2^c$ where $r=(2a+1)2^{4b+c}$ for some nonnegative integers $a,b,c$ with $c\leq3$.
A totally symmetric $\EITFF_\bbR(2r,r,n)$ with $n=\rho_\bbR(n)+2$ exists if $c\in\set{0,1}$, and does not exist if $c=3$.
\end{theorem}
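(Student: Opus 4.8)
By Lemma~\ref{lem.tot sym}, a totally symmetric $\EITFF_\bbR(2r,r,n)$ with $n=\rho_\bbR(r)+2$ exists if and only if there is a $\rho$-orthonormal sequence $(\bfC_i)_{i=1}^{m}$ in $\bbR^{r\times r}$, with $m=\rho_\bbR(r)=n-2$ and $\bfC_1=\brI$, together with a unitary $\bfU\in\bbR^{r\times r}$ that anticommutes with $\bfC_m$ and commutes with $\bfC_i$ for all $i\leq m-1$. So the plan is to analyze when such a ``companion unitary'' $\bfU$ exists for a \emph{maximal} real $\rho$-orthonormal sequence, splitting into the cases $c\in\{0,1\}$ (existence) and $c=3$ (nonexistence). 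The key reformulation: since $(\bfC_i)_{i=1}^m$ is $\rho$-orthonormal with $\bfC_1=\brI$, the matrices $\bfC_2,\dots,\bfC_m$ are $m-1$ anticommuting skew-Hermitian real unitaries, i.e.\ they generate a real representation of a Clifford algebra, and we seek $\bfU$ commuting with $\bfC_2,\dots,\bfC_{m-1}$ but anticommuting with $\bfC_m$.

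\textbf{The $c\in\{0,1\}$ case (existence).} Here I would build an explicit maximal real $\rho$-orthonormal sequence together with its companion $\bfU$ using the inflation machinery of Section~II. Recall from~\eqref{eq.real inflation} that any anticommuting skew-Hermitian family $(\bfC_i)_{i=1}^m$ in $\bbR^{r\times r}$ yields one of size $m+8$ in $\bbR^{16r\times16r}$, and that $\rho_\bbR(16r)=\rho_\bbR(r)+8$. Since the companion-unitary property is preserved under $\otimes\brI$ (or can be restored by tensoring $\bfU$ with a suitable fixed matrix on the new factor), it suffices to treat the base cases $c=0$ (so $\rho_\bbR(r)$ small, governed by $r$ odd, $r\equiv2$, $r\equiv4\cdot(\text{odd})$ etc.\ giving $\rho=1,2,4,8$) and $c=1$. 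Concretely: for $c=0$ with $r$ odd, $\rho_\bbR(r)=1$, $n=3$ is the trivial regime (excluded, as $n\ge4$ in Lemma~\ref{lem.tot sym}) — actually the smallest genuine instance is $r=2$, $\rho_\bbR(2)=2$, $n=4$, where Example~\ref{ex.EITFF(4,2,4)} already exhibits $(\bfC_1,\bfC_2)=(\brI,\brR)$ with $\bfU=\brM$, as noted after Lemma~\ref{lem.tot sym}. The next cases $r=4$ ($\rho=4$) and the $c=1$ base case $r=2\cdot(\text{odd}\ \text{coprime to }2)$ with $\rho_\bbR(2)=2$... (more carefully $c=1$ means $\rho_\bbR(r)=8b+2$) should be handled by writing down the standard maximal anticommuting families — e.g.\ the $\brR\otimes\cdots$, $\brM\otimes\cdots$, $\brT\otimes\cdots$ blocks displayed in Section~II for $r\in\{4,8,16\}$ — and exhibiting a commuting/anticommuting $\bfU$ by inspection (typically $\bfU$ is a short tensor of $\brM$'s and $\brI$'s, or one of the $\bfC_i$ themselves when $m$ is such that $\bfC_2\cdots\bfC_m$ lands in the centralizer). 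I would then push these base cases up all $c$-strata by~\eqref{eq.real inflation}.

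\textbf{The $c=3$ case (nonexistence).} Here $\rho_\bbR(r)=\rho_\bbC(r)$ by the comparison of~\eqref{eq.RH number}, so $n=\rho_\bbR(r)+2=\rho_\bbC(r)+2$. A totally symmetric $\EITFF_\bbR(2r,r,n)$ would in particular be a totally symmetric $\EITFF_\bbC(2r,r,n)$ (real matrices are complex), contradicting Theorem~\ref{thm.total symmetry complex}, which forbids total symmetry for complex Radon--Hurwitz EITFFs once $n=\rho_\bbC(r)+2$. So the $c=3$ nonexistence is essentially immediate from the already-proved complex result.

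\textbf{Main obstacle.} The genuine work is the $c\in\{0,1\}$ existence direction: producing, for each stratum, a maximal real anticommuting skew-Hermitian family \emph{and} a companion unitary $\bfU$, and verifying the companion property propagates correctly under the $16$-fold inflation~\eqref{eq.real inflation}. The delicate point is that inflation adds eight new generators $(\bfE_i\otimes\brI)_{i=1}^8$ plus a common twist $\brR^{\otimes4}\otimes\bfC_i$; one must check that (after reordering so the ``last'' generator $\bfC_m$ stays last) a suitable $\bfU$ — likely of the form $\bfW\otimes\bfU_0$ for a fixed $16\times16$ matrix $\bfW$ commuting with $\bfE_1,\dots,\bfE_7$ and anticommuting with $\bfE_8$, times the old companion $\bfU_0$ — still commutes with all but the last generator and anticommutes with the last. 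Establishing that such a $\bfW$ exists in $\bbR^{16\times16}$ for the explicit $(\bfE_i)_{i=1}^8$ of~\eqref{eq.RH matrices of size 16}, and that the parity bookkeeping (which generator is ``$\bfC_m$'' before vs.\ after inflation) is consistent, is where I expect the care to be needed; everything else is routine Clifford-algebra manipulation.
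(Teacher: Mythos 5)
Your skeleton matches the paper's proof: the $c=3$ nonexistence is exactly the paper's argument (a real totally symmetric example would also be a complex totally symmetric $\EITFF_\bbC(2r,r,n)$ with $n=\rho_\bbC(r)+2$, contradicting Theorem~\ref{thm.total symmetry complex}), and for existence the paper likewise runs Lemma~\ref{lem.tot sym} through explicit base cases plus the $16$-fold inflation~\eqref{eq.real inflation}, with companion unitary $\widehat{\bfU}=\brI\otimes\bfU$. So the ``main obstacle'' you flag is a non-issue: no matrix $\bfW$ on the new tensor factor is needed, since $\brI\otimes\bfU$ automatically commutes with every $\bfE_i\otimes\brI$ and interacts with $\brR\otimes\brR\otimes\brR\otimes\brR\otimes\bfC_i$ exactly as $\bfU$ did with $\bfC_i$, so the unique anticommuting generator simply stays among the old ones.

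The genuine gap is in your base-case bookkeeping for the existence direction. The theorem needs exactly two nontrivial base cases: $(b,c)=(0,1)$, i.e.\ $r$ an odd multiple of $2$, where $\bfC_1=\brR\otimes\brI$ and $\bfU=\brM\otimes\brI$ suffice (your Example~\ref{ex.EITFF(4,2,4)} observation), and $(b,c)=(1,0)$, i.e.\ $r$ an odd multiple of $16$, where one takes the eight matrices of~\eqref{eq.RH matrices of size 16} together with $\bfU=\brI\otimes\brM\otimes\brM\otimes\brM$ (all tensored with the identity of size $2a+1$). The latter is unavoidable because for $c=0$ you cannot launch the inflation from $b=0$: there $n=3$ and $m=\rho_\bbR(r)-1=0$, so there is no anticommuting generator for $\brI\otimes\bfU$ to anticommute with after inflating, and Lemma~\ref{lem.tot sym} does not even apply ($n\geq4$); the $n=3$ stratum must instead be disposed of separately as a trivial, hence totally symmetric, EITFF. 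Your plan instead lists $\rho=1,2,4,8$ (that is, $b=0$ with $c=0,1,2,3$) as the base cases and proposes to handle $r=4$ ``by inspection''; but $r=4$ lies in the $c=2$ stratum, which the theorem deliberately omits and which the paper leaves open (its preliminary work suggests no $\EITFF_\bbR(8,4,6)$ is totally symmetric), so that inspection would not succeed and, more importantly, is not what is needed. Until you actually exhibit and verify the $r=16$ companion pair---the one real computation the paper performs, namely that $\brI\otimes\brM\otimes\brM\otimes\brM$ anticommutes with $\brR\otimes\brT\otimes\brT\otimes\brT$ and commutes with the other seven matrices of~\eqref{eq.RH matrices of size 16}---the $c=0$, $b\geq1$ half of the existence claim remains unproved.
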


\begin{proof}
If $c=3$ then any totally symmetric $\EITFF_\bbR(2r,r,n)$ with $n=\rho_\bbR(r)+2$ would also be a totally symmetric $\EITFF_\bbC(2r,r,n)$ with $n=\rho_\bbC(r)+2$.
By Theorem~\ref{thm.total symmetry complex}, the latter does not exist.
If $c=0$ and $b=0$ then $n=\rho_\bbR(r)+2=8b+2^c+2=3$,
and a totally symmetric $\EITFF_\bbR(2r,r,3)$ exists,
namely the Naimark complement of the trivial $\EITFF_\bbR(r,r,3)$.
When instead either $c=1$ or $c=0$ and $b\geq 1$,
we have $n=\rho_\bbR(r)+2=8b+2^c+2\geq 4$,
implying by Lemma~\ref{lem.tot sym} that it suffices to construct a unitary $\bfU$ in $\bbR^{r\times r}$ as well as $m:=n-3=\rho_\bbR(r)-1=8b+2^c-1$ anticommuting skew-Hermitian matrices $(\bfC_i)_{i=1}^{m}$ in $\bbR^{r\times r}$ whose members commute with $\bfU$, save one, which anticommutes with $\bfU$.
Moreover, from any such $\bfU$ and $(\bfC_i)_{i=1}^{m}$ in $\bbR^{r\times r}$,
we can obtain analogous matrices in $\bbR^{\hat{r}\times\hat{r}}$ where $\hat{r}:=16r=(2a+1)2^{4(b+1)+c}$ by letting $\widehat{\bfU}:=\brI\otimes\bfU$ and taking $(\widehat{\bfC}_i)_{i=1}^{\hat{m}}$ to be the matrices of~\eqref{eq.real inflation},
where
\begin{equation*}
\hat{m}
:=m+8
=8(b+1)+2^c-1
=\rho_\bbR(\hat{r})-1.
\end{equation*}
As such, it suffices to consider just two cases,
namely when $(b,c)$ is either $(0,1)$ or $(1,0)$.
In the former, which equates to having $r=(2a+1)2^{4b+c}$ be an odd multiple of $2$,
let $\bfU$ be $\brM\otimes\brI$ and let $(\bfC_i)_{i=1}^{1}$ consist of the sole matrix $\brR\otimes\brI$.
In the latter, which equates to having $r=(2a+1)2^{4b+c}$ be an odd multiple of $16$,
let $\bfU$ and $(\bfC_i)_{i=1}^{8}$ be the tensor products of the identity matrix of size $2a+1$ with $\brI\otimes\brM\otimes\brM\otimes\brM$ and the matrices of~\eqref{eq.RH matrices of size 16}, respectively.
\end{proof}

In light of Theorems~\ref{thm.RH EITFF existence}, \ref{thm.total symmetry suf}, \ref{thm.total symmetry complex} and~\ref{thm.total symmetry real},
the existence of a totally symmetric $\EITFF_\bbF(2r,r,n)$ is settled except when $\bbF=\bbR$ and $(r,n)=(4(2a+1)16^b,8b+6)$ for some integers $a,b\geq0$.
The same technique we used to prove Theorem~\ref{thm.total symmetry real} implies that if such an EITFF exists for a particular value of $b$ then it also holds for all larger values of $b$.
Our preliminary work suggests that no $\EITFF_\bbR(8,4,6)$ has total symmetry.
We leave the remaining cases for future work.

\renewcommand{\thesection}{\Roman{section}}
\section{Conclusions and Future Work}
\renewcommand{\thesection}{\arabic{section}}

As part of their seminal paper on equi-isoclinic subspaces~\cite{LemmensS73b},
Lemmens and Seidel fully characterized the existence of real instances of such objects with $d=2r$ in terms of classical Radon--Hurwitz theory.
Refining and generalizing those ideas yields a full characterization of the existence of real and complex Radon--Hurwitz EITFFs,
as well as of the isometries thereof.
Each such EITFF has at least alternating symmetry,
and we have a near-complete characterization of when totally symmetric such EITFFs exist.
We highlight several avenues for follow-up research that were mentioned above:
finish characterizing the existence of real Radon--Hurwitz EITFFs with total symmetry;
develop a general theory of totally symmetric EITFFs that unites the constructions here with those of~\cite{FickusIJM22,FickusIJM24};
apply totally symmetric EITFFs to compressed sensing problems.

In contrast to several (``symmetry implies optimality") results from the recent literature~\cite{IversonM22,IversonM24},
which exploit the symmetries of certain sequences of subspaces to show that they are EITFFs,
Theorem~\ref{thm.total symmetry suf} is notably of the converse (``optimality implies symmetry") type, showing that any Radon--Hurwitz EITFF has at least alternating symmetry.
That said, both Theorem II.6 of~\cite{King19},
which implies that every triply-transitive $\EITFF_\bbF(d,1,n)$ is trivial,
and Theorem~\ref{thm.total symmetry complex} show that too much symmetry can be prohibitive.

\section*{Acknowledgments}
\noindent
We thank the editors and the four anonymous reviewers for their comments and suggestions,
all of which were helpful.
This research was supported in part by the Air Force Institute of Technology through the Air Force Office of Scientific Research Summer Faculty Fellowship Program, Contract Numbers FA8750-15-3-6003, FA9550-15-0001 and FA9550-20-F-0005.
JWI was partially supported by NSF DMS 2220301 and a grant from the Simons Foundation.
The views expressed in this article are those of the authors and do not reflect the official policy or position of the United States Air Force, Department of Defense, or the U.S.~Government.

\appendix

\begin{proof}[Proof of Lemma~\ref{lem.equivalence}]
Clearly, (i) and (ii) are equivalent, and (iii) implies both (ii) and (iv).
It thus suffices to prove that each of (ii) and (iv) implies (iii).
If (ii) holds then for any $i\in[n]$, $\widehat{\bfPhi}_i^{}\widehat{\bfPhi}_i^*=(\bfUpsilon\bfPhi_i^{})(\bfUpsilon\bfPhi_i^{})^*$,
and so $\widehat{\bfPhi}_i=\bfUpsilon\bfPhi_i\bfZ_i$ for some unitary $\bfZ_i$ in $\bbF^{r\times r}$, implying (iii).
If (iv) holds, the fusion Gram matrix associated to $(\widehat{\bfPhi}_i)_{i=1}^{n}$ equals that of $(\bfPhi_i\bfZ_i)_{i=1}^{n}$,
and so $\widehat{\bfPhi}^*\widehat{\bfPhi}=\bfZ^*\bfPhi^*\bfPhi\bfZ$ where $\bfZ$ is the block diagonal matrix whose $i$th diagonal block is $\bfZ_i$.
As such, $\widehat{\bfPhi}=\bfUpsilon_0\bfPhi\bfZ$ for some unitary map $\bfUpsilon_0$ from $\im(\bfPhi\bfZ)$ to $\im(\widehat{\bfPhi})$,
namely from the span of $(\calU_i)_{i=1}^{n}$ to that of $(\widehat{\calU}_i)_{i=1}^{n}$.
Since $\calE$ and $\widehat{\calE}$ have equal dimension,
$\bfUpsilon_0$ extends to a unitary map \mbox{$\bfUpsilon:\calE\rightarrow\widehat{\calE}$}.
Thus, $\widehat{\bfPhi}=\bfUpsilon\bfPhi\bfZ$.
Moreover, for any $i\in[n]$, applying $\widehat{\bfPhi}=\bfUpsilon\bfPhi\bfZ$ to $(\bfdelta_{i,k})_{k=1}^r$ gives $\widehat{\bfPhi}_i=\bfUpsilon\bfPhi_i\bfZ_i$ and so (iii).
\end{proof}

\begin{proof}[Proof of Lemma~\ref{lem.Gerzon}]
The sequence of projections $(\bfPi_i)_{i=1}^{n}$ of such subspaces satisfies~\eqref{eq.equi-isoclinic projections} where $0\leq\sigma<1$.
It lies in the real space $\SA(\calE)$ of self-adjoint operators on $\calE$.
Since
$\Tr(\bfPi_i\bfPi_j)=\Tr(\bfPi_i\bfPi_j\bfPi_i)=\Tr(\sigma^2\bfPi_i)=r\sigma^2$
whenever $i\neq j$,
its Gram matrix with respect to the Frobenius inner product is  $r[(1-\sigma^2)\brI+\sigma^2\brJ]$.
Since $\sigma<1$,
this matrix is invertible,
and so $(\bfPi_i)_{i=1}^{n}$ is linearly independent.
It thus suffices to show that the right-hand side of~\eqref{eq.Gerzon bound} is the dimension of some subspace $\calU$ of $\SA(\calE)$ that contains $(\bfPi_i)_{i=1}^{n}$.
Since~\eqref{eq.equi-isoclinic projections} holds and $\bfPi_1^3=\bfPi_1^{}$,
we can take $\calU$ to be the space of all self-adjoint $\bfA:\calE\rightarrow\calE$ such that $\bfPi_1\bfA\bfPi_1$ is a real scalar multiple of $\bfA$.
Without loss of generality,
Lemma~\ref{lem.equivalence} gives $\calE=\bbF^d$ and $\bfPi_1=[\begin{smallmatrix}\brI&\bfzero\\\bfzero&\bfzero\end{smallmatrix}]$.
Thus, $\calU$ consists of all $\bfA\in\bbF^{d\times d}$ of the form
\begin{equation*}
\bfA=\left[\begin{smallmatrix}
x\brI&\bfA_{2,1}^*\\
\bfA_{2,1}&\bfA_{2,2}
\end{smallmatrix}\right]
\end{equation*}
where $x\in\bbR$, $\bfA_{2,1}\in\bbF^{(d-r)\times r}$ and $\bfA_{2,2}^*=\bfA_{2,2}$.
Computing the dimension of this real space $\calU$ gives the result.
\end{proof}

\begin{proof}[Proof of Lemma~\ref{lem.complex RH}]
For any integers $r_1,r_2\geq1$,
let $r=r_1+r_2$ and let $\bfD_{r_1,r_2}$ be the block diagonal matrix in $\bbF^{r\times r}$ whose first and second diagonal blocks are $\brI\in\bbF^{r_1\times r_1}$ and $-\brI\in\bbF^{r_2\times r_2}$, respectively.
Write any $\bfA\in\bbF^{r\times r}$ blockwise as
\begin{equation*}
\bfA=\left[\begin{smallmatrix}
\bfA_{1,1}&\bfA_{1,2}\\
\bfA_{2,1}&\bfA_{2,2}
\end{smallmatrix}\right],
\text{ where }
\bfA_{2,1}\in\bbF^{r_2\times r_1},\,
\bfA_{1,2}\in\bbF^{r_1\times r_2}.
\end{equation*}
If $\bfA$ is unitary, then it commutes with $\bfD_{r_1,r_2}$ if and only if $\bfA_{1,1}$ and $\bfA_{2,2}$ are unitary, and it anticommutes with $\bfD_{r_1,r_2}$ if and only if $\bfA_{2,1}$ and $\bfA_{1,2}$ are unitary.
The latter can only occur if $\bfA_{2,1}$ is square, in which case $r=2r_1=2r_2$ is even and
$\bfD_{r_1,r_2}=\brM\otimes\brI$ where
$\brM=[\begin{smallmatrix}1&\phantom{-}0\\0&-1\end{smallmatrix}]$.

In particular, if $(\bfC_i)_{i=1}^{m}$ in $\bbF^{r\times r}$ is $\rho$-orthonormal,
then replacing each $\bfC_i$ with $\bfC_m^*\bfC_i^{}$ yields an equivalent such sequence with $\bfC_m=\brI$.
If $\bbF=\bbC$ and $m\geq2$,
then diagonalizing the skew-Hermitian matrix $\bfC_{m-1}$ as $\bfC_{m-1}=\bfV(\rmi\bfD_{r_1,r_2})\bfV^*$ for some unitary $\bfV$,
and replacing each $\bfC_i$ with $\bfV^*\bfC_i\bfV$ yields an equivalent $\rho$-orthonormal sequence $(\bfC_i)_{i=1}^{m}$ in $\bbC^{r\times r}$ with \mbox{$\bfC_m=\brI$} and $\bfC_{m-1}=\rmi\bfD_{r_1,r_2}$.
If moreover $m\geq 3$ then each $\bfC_i$ with $i\leq m-2$ anticommutes with $\bfD_{r_1,r_2}$,
implying $r=2r_1=2r_2$ and that each of the two $\frac{r}{2}\times\frac{r}{2}$ diagonal blocks of each $\bfC_i$ with $i\leq m-2$ is zero.
Since $(\bfC_i)_{i=1}^{m-2}$ is skew-Hermitian and $\rho$-orthonormal,
its members have form~\eqref{eq.complex reduction} for some $\rho$-orthonormal sequence $(\widehat{\bfC}_i)_{i=1}^{m-2}$ in $\bbC^{\frac{r}{2}\times\frac{r}{2}}$.
Conversely,
in light of~\eqref{eq.RH set relations},
it is straightforward to verify that
any sequence $(\bfC_i)_{i=1}^{m}$ in $\bbC^{r\times r}$ of this form is $\rho$-orthonormal.
\end{proof}

\begin{proof}[Proof of Lemma~\ref{lem.simplex}]
Let $(\bfphi_i)_{i=1}^{m}$ be a sequence of $m\geq 2$ vectors in a real space,
and let $\calE$ and $\bfPhi:\bbR^m\rightarrow\calE$ be its span and synthesis map, respectively.
If $\bfUpsilon:\bbR^{m-1}\rightarrow\calE$ is the synthesis map of a sequence $(\bfupsilon_i)_{i=1}^{m-1}$ in $\calE$, then
\begin{equation}
\label{eq.pf.lem.simplex.1}
\bfUpsilon\bfPsi_m\bfdelta_j
=\bfUpsilon\sum_{i=1}^{m-1}\bfPsi_m(i,j)\bfdelta_i
=\sum_{i=1}^{m-1}\bfPsi_m(i,j)\bfupsilon_i
\end{equation}
for each $j\in[m]$.
In particular, if $(\bfupsilon_i)_{i=1}^{m-1}$ is an orthonormal basis for $\calE$ that satisfies~\eqref{eq.simplex basis},
then $\bfUpsilon$ is unitary and $\bfUpsilon\bfPsi_m\bfdelta_j$ equals $\bfphi_j=\bfPhi\bfdelta_j$ for each $j\in[m]$.
In this case, $\bfPhi=\bfUpsilon\bfPsi_m$, implying $\bfPhi^*\bfPhi=\bfPsi_m^*\bfPsi_m^{}=\tfrac1{m-1}(m\brI-\brJ)$,
and so $(\bfphi_i)_{i=1}^{m}$ is a simplex.
Conversely, if $(\bfphi_i)_{i=1}^{m}$ is a simplex then its Gram matrix is $\bfPhi^*\bfPhi=\tfrac1{m-1}(m\brI-\brJ)=\bfPsi_m^*\bfPsi_m^{}$,
implying \mbox{$\bfPhi=\bfUpsilon\bfPsi_m$} for some unitary  $\bfUpsilon:\bbR^{m-1}\rightarrow\calE$.
When this occurs, $(\bfupsilon_i)_{i=1}^{m-1}:=(\bfUpsilon\bfdelta_i)_{i=1}^{m-1}$ is an orthonormal basis for $\calE$ that has $\bfUpsilon$ as its synthesis map,
and~\eqref{eq.simplex basis} follows from the fact that~\eqref{eq.pf.lem.simplex.1} equals $\bfPhi\bfdelta_j=\bfphi_j$ for each $j\in[m]$.

The remaining claims follow from~\eqref{eq.simplex basis} and properties of $\bfPsi_m$.
By induction, $\bfPsi_m$ is an upper triangular matrix whose diagonal entries are positive and whose first entry is $1$.
As such, the $j=1$ instance of~\eqref{eq.simplex basis} gives
$\bfphi_1
=\bfPsi_m(1,1)\bfupsilon_1
=\bfupsilon_1$,
namely (a).
Next, since the matrix $\bfPsi_{m,0}$ obtained by removing the final column of $\bfPsi_m$ is invertible,
we have $\bfUpsilon=\bfPhi_0^{}\bfPsi_{m,0}^{-1}$ where $\bfPhi_0$ is the synthesis map of $(\bfphi_i)_{i=1}^{m-1}$ and $\bfPsi_{m,0}^{-1}$ is upper triangular,
implying
$\bfupsilon_j
=\sum_{i=1}^{j}\bfPsi_{m,0}^{-1}(i,j)\bfphi_i
\in\Span(\bfphi_i)_{i=1}^{j}$
for any $j\in[m-1]$, namely (b).
Finally, by induction, $\bfPsi_m(i,m-1)=\bfPsi_m(i,m)$ for all $i\in[m-2]$ while
$0<\bfPsi_m(m-1,m-1)=-\bfPsi_m(m-1,m)$,
and combining these facts with~\eqref{eq.simplex basis} gives (c).
\end{proof}

\vskip 0pt plus -1fil

\begin{IEEEbiographynophoto}{Matthew Fickus} (M'08--SM'18) earned a Ph.D.\ in Mathematics from the University of Maryland in 2001.  In 2004, he joined the Department of Mathematics and Statistics, Air Force Institute of Technology, where he is currently a Professor of Mathematics.  His research interests include applying harmonic analysis and combinatorial design to problems in information theory and signal processing.
\end{IEEEbiographynophoto}

\vskip 0pt plus -1fil

\begin{IEEEbiographynophoto}{Enrique Gomez-Leos} earned a Master of Mathematical Sciences degree at The Ohio State University in 2020.  He is currently a 5th year Ph.D.\ student in the Department of Mathematics at Iowa State University. His research interests include extremal and random graph theory.
\end{IEEEbiographynophoto}

\vskip 0pt plus -1fil

\begin{IEEEbiographynophoto}{Joseph W.~Iverson} graduated from the University of Oregon with a Ph.D.\ in Mathematics in 2016.  Since 2018, he has been a member of the Department of Mathematics at Iowa State University, where he is now an Associate Professor.  His research interests include symmetry and its applications in harmonic analysis.
\end{IEEEbiographynophoto}


\begin{thebibliography}{WW}

\bibitem{Adams62}
J.~F.~Adams,
Vector fields on spheres,
Ann.\ of Math.\ (2) 75 (1962) 603--632.

\bibitem{AdamsLP65}
J.~F.~Adams, P.~D.~Lax, R.~S.~Phillips,
On matrices whose real linear combinations are non-singular,
Proc.\ Amer.\ Math.\ Soc.\ 16 (1965) \mbox{318--322}.

\bibitem{AdamsLP66}
J.~F.~Adams, P.~D.~Lax, R.~S.~Phillips,
Correction to ``On matrices whose real linear combinations are nonsingular",
Proc.\ Amer.\ Math.\ Soc.\ 17 (1966) 945--947.

\bibitem{Au-Yeung71}
Y.-H.~Au-Yeung,
On matrices whose nontrivial real linear combinations are nonsingular,
Proc.\ Amer.\ Math.\ Soc.\ 29 (1971) 17--22.

\bibitem{Au-YeungC06}
Y.-H.~Au-Yeung, C.-M.~Cheng,
On the pure imaginary quaternionic solutions of the Hurwitz matrix equations,
Linear Algebra Appl.\ 419 (2006) 630--642.

\bibitem{BandeiraDMS13}
A.~S.~Bandeira, E.~Dobriban, D.~G.~Mixon, W.~F.~Sawin,
Certifying the restricted isometry property is hard,
IEEE Trans.\ Inform.\ Theory 59 (2013) 3448--3450.

\bibitem{BandeiraFMW13}
A.~S.~Bandeira, M.~Fickus, D.~G.~Mixon, P.~Wong,
The road to deterministic matrices with the Restricted Isometry Property,
J.\ Fourier Anal.\ Appl.\ 19 (2013) 1123--1149.

\bibitem{CalderbankTX15}
R.~Calderbank, A.~Thompson, Y.~Xie,
On block coherence of frames,
Appl.\ Comput.\ Harmon.\ Anal.\ 38 (2015) 50--71.

\bibitem{CohnKM16}
H.~Cohn, A.~Kumar, G.~Minton,
Optimal simplices and codes in projective spaces,
Geom.\ Topol.\ 20 (2016) 1289--1357.

\bibitem{ConwayHS96}
J.~H.~Conway, R.~H.~Hardin, N.~J.~A.~Sloane,
Packing lines, planes, etc.: packings in Grassmannian spaces,
Exp.\ Math.\ 5 (1996) 139--159.

\bibitem{CoxKMP20}
C.~Cox, E.~J.~King, D.~G.~Mixon, H.~Parshall,
Uniquely optimal codes of low complexity are symmetric,
arXiv:2008.12871 (2020).

\bibitem{DhillonHST08}
I.~S.~Dhillon, J.~R.~Heath, T.~Strohmer, J.~A.~Tropp,
Constructing packings in Grassmannian manifolds via alternating projection,
Exp.\ Math.\ 17 (2008) 9--35.

\bibitem{EldarKB10}
Y.~C.~Eldar, P.~Kuppinger, H.~Bölcskei,
Block-sparse signals: uncertainty relations and efficient recovery,
IEEE Trans.\ Signal Process.\ 58 (2010) 3042--3054.

\bibitem{EldarM09}
Y.~C.~Eldar, M.~Mishali,
Robust recovery of signals from a structured union of subspaces,
IEEE Trans.\ Inform.\ Theory\ 55 (2009) \mbox{5302--5316}.

\bibitem{EtTaoui06}
B.~Et-Taoui,
Equi-isoclinic planes of Euclidean spaces,
Indag.\ Math.\ (N.S.) 17 (2006) 205--219.

\bibitem{EtTaoui20}
B.~Et-Taoui,
Quaternionic equiangular lines,
Adv.\ Geom.\ 20 (2020) 273--284.

\bibitem{FickusIJM22}
M.~Fickus, J.~W.~Iverson, J.~Jasper, D.~G.~Mixon,
A note on totally symmetric equi-isoclinic tight fusion frames,
Proc.\ IEEE Int.\ Conf.\ Acoust.\ Speech Signal Process.\ (2022) 8987--8991.

\bibitem{FickusIJM24}
M.~Fickus, J.~W.~Iverson, J.~Jasper, D.~G.~Mixon,
Equi-isoclinic subspaces from symmetry,
arXiv:2406.19542 (2024).

\bibitem{FickusJKM18}
M.~Fickus, J.~Jasper, E.~J.~King, D.~G.~Mixon,
Equiangular tight frames that contain regular simplices,
Linear Algebra Appl.\ 555 (2018) 98--138.

\bibitem{FickusM16}
M.~Fickus, D.~G.~Mixon,
Tables of the existence of equiangular tight frames,
arXiv:1504.00253 (2016).

\bibitem{FuchsHS17}
C.~A.~Fuchs, M.~C.~Hoang, B.~C.~Stacey,
The SIC question: history and state of play,
Axioms 6 (2017) 21.

\bibitem{Gillespie18}
N.~I.~Gillespie,
Equiangular lines, incoherent sets and quasi-symmetric designs,
arXiv:1809.05739 (2018).

\bibitem{GodsilR01}
C.~Godsil, G.~Royle,
Algebraic graph theory,
Graduate Texts Math.\ 207,
Springer--Verlag, 2001.

\bibitem{HermanS09}
M.~A.~Herman, T.~Strohmer,
High-resolution radar via compressed sensing,
IEEE Trans.\ Signal Process.\ 57 (2009) 2275--2284.

\bibitem{Hoggar77}
S.~G.~Hoggar,
New sets of equi-isoclinic $n$-planes from old,
Proc.\ Edinb.\ Math.\ Soc.\ 20 (1977) 287--291.

\bibitem{Hurwitz23}
A.~Hurwitz,
\"{U}ber die Komposition der quadratischen Formen,
Math.\ Ann.\ 88 (1923) 1--25.

\bibitem{IversonM22}
J.~W.~Iverson, D.~G.~Mixon,
Doubly transitive lines I: Higman pairs and roux,
J.\ Combin.\ Theory Ser.\ A 185 (2022) 105540.

\bibitem{IversonM24}
J.~W.~Iverson, D.~G.~Mixon,
Doubly transitive lines II: Almost simple symmetries,
Algebr.\ Comb.\ 7 (2024) 37--76.

\bibitem{King19}
E.~J.~King,
$2$- and $3$-covariant equiangular tight frames,
Proc.\ Int.\ Conf.\ Sampl.\ Theory  Appl. (2019) 1--4.

\bibitem{LemmensS73b}
P.~W.~H.~Lemmens, J.~J.~Seidel,
Equi-isoclinic subspaces of Euclidean spaces,
Indag.\ Math.\ 76 (1973) 98--107.

\bibitem{Radon22}
J.~Radon,
Lineare scharen orthogonaler matrizen,
Abh.\ Math.\ Semin.\ Univ.\ Hambg.\ 1 (1922) 1--14.

\bibitem{Rajwade93}
A.~R.~Rajwade,
Squares,
London Math.\ Soc.\ Lecture Note Ser.\ 171,
Cambridge University Press, 1993.

\bibitem{StrohmerH03}
T.~Strohmer, R.~W.~Heath,
Grassmannian frames with applications to coding and communication,
Appl.\ Comput.\ Harmon.\ Anal.\ 14 (2003) 257--275.

\bibitem{Tropp04}
J.~A.~Tropp,
Greed is good: algorithmic results for sparse approximation,
IEEE Trans.\ Inform.\ Theory 50 (2004) 2231--2242.

\bibitem{Waldron20}
S.~Waldron,
Tight frames over the quaternions and equiangular lines,
arXiv:2006.06126 (2020).

\bibitem{Welch74}
L.~R.~Welch,
Lower bounds on the maximum cross correlation of signals,
IEEE Trans.\ Inform.\ Theory 20 (1974) 397--399.

\bibitem{Wolf63}
J.~A.~Wolf,
Geodesic spheres in Grassmann manifolds,
Illinois J.\ Math.\ 7 (1963) 425--446.

\bibitem{Wong61}
Y.~C.~Wong,
Isoclinic $n$-planes in Euclidean $2n$-space, Clifford parallels in elliptic $(2n-1)$-space, and the Hurwitz matrix equations,
Mem.\ Amer.\ Math.\ Soc.\ 41, 1961.

\bibitem{XiaZG05}
P.~Xia, S.~Zhou, G.~B.~Giannakis,
Achieving the Welch bound with difference sets,
IEEE Trans.\ Inform.\ Theory 51 (2005) 1900--1907.

\bibitem{Zauner99}
G.~Zauner,
Quantum designs: Foundations of a noncommutative design theory,
Ph.D.\ Thesis, University of Vienna, 1999.

\end{thebibliography}
\end{document}